\documentclass[12pt,reqno]{amsart}
\usepackage{amsthm}
\usepackage{amsmath}
\usepackage{latexsym}
\usepackage{amsfonts}
\usepackage{amssymb}
\usepackage{color}
\usepackage{bbm,dsfont}
\usepackage{graphicx}
\usepackage{hyperref}
\usepackage{enumerate}
\usepackage{comment}
\usepackage{float}

%\usepackage{showkeys}

%%%%%%%%%%%%%%%%%%%%%%%%%%%%%%%%%%%%%%%%%%%%%%%%%%%%%%%%%%%%%%%%%%%%%%%%%%%%%

\newtheorem{proposition}{Proposition}

\newtheorem{corollary}{Corollary}

\theoremstyle{definition}

\newtheorem{example}{Example}

%%%%%%%%%%%%%%%%%%%%%%%

%comments

%numbers
\newcommand{\real}{\mathbb{R}} %real
 %complex
\newcommand{\nat}{\mathbb N} %natural
 %integer
\newcommand{\half}{\tfrac{1}{2}} %half
 %modulus

%general Hilbert space
\newcommand{\hi}{\mathcal{H}} %Hilbert space
\newcommand{\lh}{\mathcal{L(H)}} %bounded linear operators
\newcommand{\lsh}{\mathcal{L}_s(\mathcal{H})} %self-adjoint operators
 %bounded linear operators
\newcommand{\sh}{\mathcal{S(H)}} %states
\newcommand{\eh}{\mathcal{E(H)}} %effects
\newcommand{\ip}[2]{\left\langle\,#1\,|\,#2\,\right\rangle} %inner product
 %ket
 %bra
 %ketbra
\newcommand{\no}[1]{\left\|#1\right\|} %norm
\newcommand{\tr}[1]{{\rm tr}\left[#1\right]} %trace
 %det
\newcommand{\id}{\mathbbm{1}} %identity operator
 %null operator

%parentheses

%vectors
 %a
 %b
 %c
 %d
 %g
 %u
 %v
 %w
 %r
 %sigma
%null vector

%observables
%generic observable
%generic observable
\newcommand{\Co}{\mathsf{C}}%generic observable
\newcommand{\Ho}{\mathsf{H}}%generic observable
\newcommand{\Mo}{\mathsf{M}}%generic observable
\newcommand{\No}{\mathsf{N}}%generic observable
\newcommand{\To}{\mathsf{T}}%trivial observable
%generic observable
%generic observable
%generic observable
\newcommand{\Jo}{\mathsf{J}}%joint observable

%GPT
\newcommand{\state}{\mathcal{S}} %states
\newcommand{\effect}{\mathcal{E}} %effects
\newcommand{\obs}{\mathcal{M}} %observables
 %observables on states
\newcommand{\vs}{\mathcal{V}} %vector space
 %states

%misc
\newcommand{\conv}[1]{\mathrm{conv}\left( #1 \right)}
\newcommand{\aff}[1]{\mathrm{aff}\left( #1 \right)}

%RAC
 %RAC success probability
\newcommand{\lmax}{\lambda_{max}} %lambda max

\newcommand{\theory}{\mathcal{T}} %theory

\addtolength{\textwidth}{1.2cm}
\addtolength{\hoffset}{-0.6cm}

\makeatletter
\renewcommand*\env@matrix[1][\arraystretch]{%
  \edef\arraystretch{#1}%
  \hskip -\arraycolsep
  \let\@ifnextchar\new@ifnextchar
  \array{*\c@MaxMatrixCols c}}
\makeatother

\makeatletter
\def\@settitle{\begin{center}%
  \baselineskip14\p@\relax
 
  \uppercasenonmath\@title
  \@title
  \ifx\@subtitle\@empty\else
     \\[1ex]\@subtitle
     
  \fi
  \end{center}%
}
\def\subtitle#1{\gdef\@subtitle{#1}}
\def\@subtitle{}
\makeatother

%%%%%%%%%%%%%%%%%%%%%%%%%%
%%%%%%%%%%%%%%%%%%%%%%%%%%

\begin{document}\setlength{\arraycolsep}{2pt}

\title[]{\textbf{Random access test as an identifier of nonclassicality}}

\author[Heinosaari]{Teiko Heinosaari}
\author[Lepp\"aj\"arvi]{Leevi Lepp\"aj\"arvi}

\email{Teiko Heinosaari: teiko.heinosaari@utu.fi}
\email{Leevi Lepp\"aj\"arvi: leevi.leppajarvi@gmail.com}

\address[Heinosaari]{(1) Department of Physics and Astronomy, University of Turku, Finland, (2)
Quantum algorithms and software, VTT Technical Research Centre of Finland Ltd}

\address[Lepp\"aj\"arvi]{RCQI, Institute of Physics, Slovak Academy of Sciences, Dúbravská cesta 9, 84511 Bratislava, Slovakia }

\begin{abstract}
Random access codes are an intriguing class of communication tasks that reveal an operational and quantitative difference between classical and quantum information processing. We formulate a natural generalization of random access codes and call them random access tests, defined for any finite collection of measurements in an arbitrary finite dimensional general probabilistic theory. These tests can be used to examine collective properties of collections of measurements. We show that the violation of a classical bound in a random access test is a signature of either measurement incompatibility or super information storability. The polygon theories are exhaustively analyzed and a critical difference between even and odd polygon theories is revealed.  
\end{abstract}

\maketitle

\begin{figure}[H]
\centering
\includegraphics[scale=0.09]{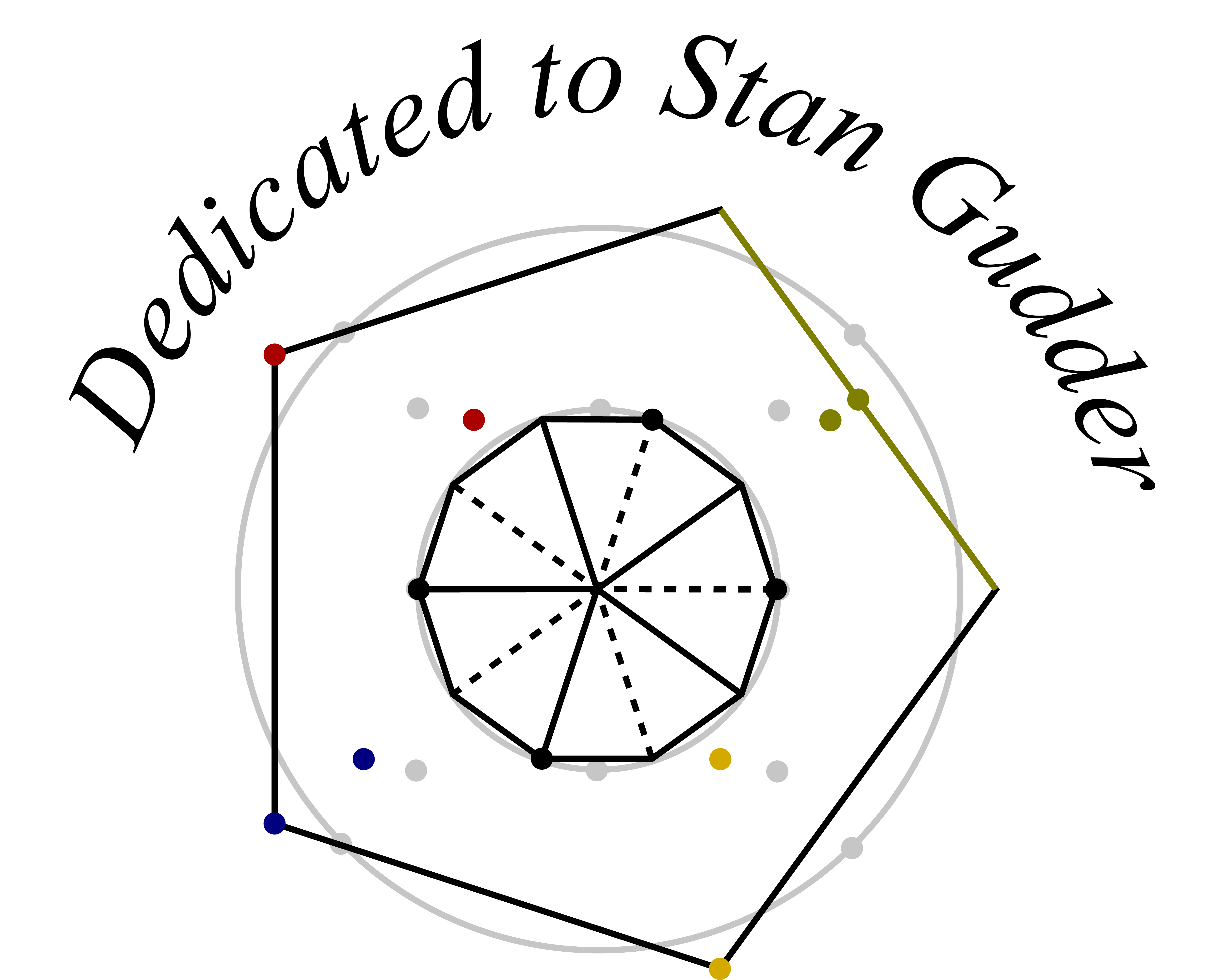} \ 
\end{figure}

%%%%%%%%%%%%%%%%%%%%%%
\section{Introduction}
%%%%%%%%%%%%%%%%%%%%%%

The central theoretical aim of quantum information processing is to understand how  quantum physics can be exploited in computing and communication. For example, in the celebrated superdense coding protocol quantum entanglement is used to communicate a certain number of bits of information by transmitting smaller number of qubits and sharing beforehand entangled qubits \cite{BeWi92}. In general, quantum protocols are superior to classical protocols in some information processing tasks, but not in all (such as in nonlocal computation \cite{LiPoShWi07} and some quantum guessing games \cite{AlBaBrAcGiPi10}). Every task showing a quantum advantage also reveals something about quantum theory itself, although the quantum resources behind the advantage may not always be so straightforward to identify.  

Random access codes (RACs) are simple communication protocols where a number of bits are encoded into a smaller number of bits and it is later randomly decided which bit should be decoded. This kind of tasks become interesting when compared to their quantum versions, known as quantum random access codes (QRACs), where a number of bits are encoded into a smaller number of qubits, or more generally, a quantum system with smaller dimension than the dimension of the classical encoding space \cite{Wiesner83, AmLeMaOz09}.  It is known that in many cases a quantum system is better than a classical system of the same operational dimension (i.e. with the same maximal number of perfectly distinguishable states), and quantum random access codes have been investigated from various different angles and generalized into different directions, see e.g. \cite{PaZu10, AgBoMiPa18,AmBaChKrRa19,DoMo21}. It has been e.g. shown that QRAC provides a robust self-testing of mutually unbiased bases, which are the unique quantum measurements giving the optimal performance in a particular QRAC \cite{FaKa19}. Further, it has been shown that to get any quantum advantage at all, one must use incompatible pair of measurements \cite{CaHeTo20}. In that way, QRAC can be used as a semi-device independent certification of quantum incompatibility. 

In the current investigation we adopt the approach started in \cite{CaHeTo20} and generalize random access codes to \emph{random access tests} (RATs) where the number of measurement outcomes and operational dimension of the communication medium are independent from each other (in RACs it is assumed that these are the same). 
We formulate random access tests in the framework of general probabilistic theories (GPTs), hence enabling us to compare the performances in different theories. 

We connect the performance of a collection of measurements in a random access test to the decoding power of their specific approximate joint measurement, which we call the \emph{harmonic approximate joint measurement}. This observation links the optimal performance of the RAT to the information storability of the full theory, a concept introduced in \cite{MaKi18}. Remarkably, it is found that the measurement incompatibility is not a necessary condition for a performance over the classical bound (as it is in quantum theory), but also a phenomenon called \emph{super information storability} can enable it. By super information storability we mean that the information storability is larger than the operational dimension of the theory; it's existence was recognized and studied in \cite{MaKi18}.

We make a detailed investigation of the performance of RATs in polygon theories. The optimal performances reveal a difference between even and odd polygon theories and, perhaps more surprisingly, also a finer division into different classes in both polygons. Our investigation indicates that the optimal performance in information processing tasks is a route to a deeper understanding of GPTs and their nonclassical features, thereby also to a better understanding of quantum theory.

The present investigation is organized as follows. In Sec. \ref{sec:GPT} we recall the needed definitions and machinery of general probabilistic theories. Sec. \ref{sec:IS} focuses on the concepts of decoding power of a measurement and information storability of a whole theory. In Sec. \ref{sec:APPROX} we define a specific kind of approximate joint measurement for any collection of measurements, which turns out to be a useful tool. In Sec. \ref{sec:RAT} we are finally ready to define random access tests and show how they link to the earlier concepts. Sec.  \ref{sec:MAX} connects the definite success of special random access tests to maximal incompatibility. In Sec. \ref{sec:POLY} we present a detailed study of random access tests in polygon state spaces and demonstrate how the maximal success probabilities separate different theories.

%%%%%%%%%%%%%%%%%%%%
\section{General probabilistic theories}\label{sec:GPT}
%%%%%%%%%%%%%%%%%%%%

General probabilistic theories constitutes a generalized framework for quantum and classical theories based on operational principles. In addition to quantum and classical theories, GPTs include countless toy theories where various operational features and tasks can be tested and considered in. This enables us to compare different theories to each other based on how these different features behave in different theories. In particular, by looking at the known nonclassical features of quantum theory (such as incompatibility \cite{BuHeScSt13,BaGaGhKa13}, steering \cite{StBu14,Banik15} and nonlocality \cite{PoRo94,JaGoBaBr11}) in this more general operational framework helps us understand what makes quantum theory special among all other possible theories. Furthermore, by looking at these features in the full scope of GPTs gives us insight on these features themselves which deepens our understanding about them and helps us make connections between different features.

GPTs are build around operational concepts such as \emph{preparations}, \emph{transformations} and \emph{measurements} which are used to describe a physical experiment. The preparation procedure involves preparing a (physical) system in a \emph{state} that contains the information about the system's properties. The set of possible states is described by a state space $\state$ which is taken to be a compact convex subset of a finite-dimensional vector space $\vs$. Whereas compactness and finite-dimensionality are technical assumptions which are often made to simplify the mathematical treatment of the theory, convexity follows from the possibility to have probabilistic mixtures of different preparation devices: if we prepare the system in a state $s_1\in \state$ with probability $p \in [0,1]$ or state $s_2 \in \state$ with probability $1-p \in [0,1]$ in different rounds of the experiment, then the prepared state is statistically described by the mixture $ps_1+(1-p)s_2$ and must thus be a valid state in $\state$. The extreme points of $\state$ are called pure and the set of pure states is described by $\state^{ext}$. If a state is not pure then it is called mixed.

In the popular ordered vector space formalism (see e.g. \cite{Lami17, Plavala21} for more details) the state space $\state$ is embedded as a compact convex base of a closed, generating proper cone $\vs_+$ in a finite-dimensional ordered vector space $\vs$. This means that $\vs_+$ is convex, it spans $\vs$, it satisfies $\vs_+ \cap - \vs_+ = \{0\}$ and that every element $x \in \vs_+ \setminus \{0\}$ has a unique base decomposition $x = \alpha s$, where $\alpha>0$ and $s \in \state$. In this case the state space can be expressed as
\begin{align*}
\state = \{ x \in \vs \, | \, x \geq 0, \ u(x)=1 \}, 
\end{align*}
where the the partial order $\geq$ in $\vs$ is the partial order induced by the proper cone $\vs_+$ defined in the usual way as $x \geq y$ if and only if $x-y \in \vs_+$, and $u$ is a order unit in the dual space $\vs^*$, or equivalently, a strictly positive functional on $\vs$. 

The measurement events are described by \emph{effects} which are taken to be affine functionals from $\state$ to the interval $[0,1]$; for an effect $e: \state \to [0,1]$ we interpret $e(s)$ as the probability that the measurement event corresponding to $e$ is detected when the system is in a state $s \in \state$. The affinity of the effects follows from the statistical correspondence between states and effects according to which the effects respect the convex structure of the states so that 
$$
e(ps_1+(1-p)s_2) = pe(s_1) + (1-p)e(s_2)
$$
for all $p\in [0,1]$ and $s_1,s_2 \in \state$. 

The set of effects on a state space $\state$ is called the effect space of $\state$ and it is denoted by $\effect(\state)$. Two distinguished effects are the zero effect $o$ and the unit effect $u$ for which $o(s)=0$ and $u(s)=1$ for all $s \in \state$. The effect space is clearly convex, and again the extreme elements of $\effect(\state)$ are called pure and others are mixed. The set of pure effects on a state space $\state$ is denoted by $\effect^{ext}(\state)$.

In the ordered vector space formalism we extent the effects to linear functionals and in this case we can depict the effect space as 
\begin{align*}
\effect(\state) = \vs^*_+ \cap (u-\vs^*_+) = \{ e \in \vs^* \, | \, o \leq e \leq u \},
\end{align*}
where the partial order on the dual space $\vs^*$ is induced by the dual cone $\vs^*_+ = \{f \in \vs^* \, | \, f(x) \geq 0 \ \forall x \in \vs_+ \}$.

An important class of effects is the set of indecomposable effects. Following \cite{KiNuIm10} we say that a nonzero effect $e\in \effect(\state)$ is indecomposable if the decomposition $e=e_1+e_2$ for some two nonzero effects $e_1, e_2 \in \effect(\state)$ implies that $e = c_1 e_1= c_2 e_2$ for some $c_1,c_2 > 0$. It was shown in \cite{KiNuIm10} that every effect can be expressed as a finite sum of indecomposable effect and that every indecomposable effect can be expressed as a positive multiple of a pure indecomposable effect. Geometrically indecomposable effects are exactly those that lie on the extreme rays of the dual cone $\vs^*_+$. We denote the set of indecomposable effects on $\state$ by $\effect_{ind}(\state)$ and the set of pure indecomposable effects by $\effect^{ext}_{ind}(\state)$.

A \emph{measurement} on a state space $\state$ with a finite number of outcomes is a mapping $\Mo: x \mapsto \Mo_x$ from a finite outcome set $\Omega$ to the set of effects $\effect(\state)$ such that $\sum_{x \in \Omega} \Mo_x(s) = 1$ for all $s \in \state$, or equivalently, $\sum_{x \in \Omega} \Mo_x = u$. We interpret $\Mo_x(s)$ as the probability that an outcome $x$ is obtained when the system in a state $s \in \state$ is measured with the measurement $\Mo$. We denote the set of all measurements on $\state$ by $\obs(\state)$.

The set of measurements with a fixed outcome set is convex, and we denote the set of all extreme measurements with any outcome set by $\obs^{ext}(\state)$. We say that an measurement is indecomposable if each of it's nonzero effect is indecomposable. We denote the set of indecomposable measurements on $\state$ by $\obs_{ind}(\state)$ and the set of indecomposable extreme measurements by $\obs^{ext}_{ind}(\state)$. A measurement $\To$ with an outcome set $\Omega$ is said to be trivial if it does not give any information about the measured state, i.e., it is of the form $\To_x = p_x u$ for all $x \in \Omega$ for some probability distribution $p:=(p_x)_{x \in \Omega}$ on $\Omega$.
 
There are two basic ways of forming new measurements: \emph{mixing} and \emph{post-processing}. First, as pointed out above, the set of measurements with a fixed set of outcomes is convex and thus we can make convex mixtures of measurements: if $\Mo^{(1)}, \ldots, \Mo^{(n)}$ are measurements with an outcome set $\Omega$ and $(p_i)_{i=1}^n$ is a probability distribution, then $\sum_{i=1}^n p_i \Mo^{(i)}$ is a measurement with effects $\sum_{i=1}^n p_i \Mo^{(i)}_x$ for all $x\in \Omega$. Clearly, those measurements that cannot be written as a nontrivial mixture are the extreme measurements. 

Second, we say that a measurement $\No$ with an outcome set $\Lambda$ is a post-processing of a measurement $\Mo$ with an outcome set $\Omega$ if there exists a stochastic matrix $\nu:= (\nu_{xy})_{x \in \Omega, y \in \Lambda}$, i.e., $\nu_{xy} \geq 0$ for all $x \in \Omega, y \in \Lambda$ and $\sum_{y \in \Lambda} \nu_{xy} =1$ for all $x \in \Omega$, such that 
\begin{align*}
\No_y = \sum_{x \in \Omega} \nu_{xy} \Mo_x
\end{align*}
for all $y \in \Lambda$. In this case we denote $\No = \nu \circ \Mo$. The post-processing relation defines a preorder on the set of measurements as follows: $\No \preceq \Mo$ if and only if $\No = \nu \circ \Mo$ for some stochastic matrix $\nu$. The set of maximal elements in $\obs(\state)$ with respect to the post-processing preorder is known to be exactly the set of  indecomposable measurements $\obs_{ind}(\state)$ \cite{MaMu90a, FiHeLe18}. 

\begin{example}[Quantum theory]
Let $\hi$ be a $d$-dimensional Hilbert space. We denote by $\lh$ the algebra of linear operators on $\hi$  and by $\lsh$ the real vector space of self-adjoint operators on $\hi$. The state space of a $d$-dimensional quantum theory is defined as
\begin{align*}
\sh = \{ \varrho \in \lsh \, | \, \varrho \geq O, \ \tr{\varrho}=1\},
\end{align*}
where $O$ is the zero-operator and the partial order is induced by the cone of positive semi-definite matrices according to which a self-adjoint matrix $A$ is positive semi-definite, $A \geq O$, if and only if $\ip{\varphi}{A \varphi} \geq 0$ for all $\varphi \in \hi$. The pure states are exactly the rank-1 projections on $\hi$.

The set of effects $\effect(\sh)$ can be shown (see e.g. \cite{MLQT12}) to be isomorphic to the set $\eh$ of self-adjoint operators bounded between $O$ and $\id$, where $\id$ is the identity operator on $\hi$, i.e.,
\begin{align*}
\effect(\sh) \simeq \eh := \{ E \in \lsh \, | \, O \leq E \leq \id \}.
\end{align*}
The pure effects then correspond to the projections on $\hi$ and the indecomposable effects to the rank-1 effect operators.

Measurements with finite number of outcomes on $\hi$ are described by positive operator-valued measures (POVMs), i.e., maps of the form $M: x \mapsto M(x)$ from a finite outcome set $\Omega$ to the set of effect operatos $\eh$ such that $\sum_{x \in \Omega} M(x) = \id$. The indecomposable POVMs are those whose all nonzero effects are rank-1 operators.
\end{example}

%%%%%%%%%%%%%%%%%%%%
\section{Decoding power and information storability}\label{sec:IS}
%%%%%%%%%%%%%%%%%%%%

%%%%%%%%%%%%%%%%%%%%
\subsection{Base norms and order unit norms}
%%%%%%%%%%%%%%%%%%%%

Let us start by introducing some more structure on GPTs that is needed in order to define decoding power and information storability (for more details on GPTs see e.g. \cite{Lami17, Plavala21}). Let $\state$ be a state space on an ordered vector space $\vs$. On the vector spaces $\vs$ and $\vs^*$ we can define two natural norms that are induced by the cones $\vs_+$ and $\vs_+^*$ respectively. We will introduce them next.

In the ordered vector space $\vs$ we have that $\state$ is a compact base of the closed, generating proper cone $\vs_+$ so that in particular every element $x \in \vs$ can be expressed as $x= \alpha y- \beta z$ for some $\alpha, \beta \geq 0$ and $y,z \in \state$. The \emph{base norm} $\no{\cdot}_\vs$ on $\vs$ is then defined as
\begin{align*}
\no{x}_\vs = \inf \{ \alpha + \beta \, | \, \ x=\alpha y- \beta z, \ \alpha, \beta \geq 0, \ y,z \in \state \}
\end{align*}
for all $x \in \vs$. It follows that if $x \in \vs_+$, then  $\no{x}_\vs = u(x)$. In particular we have that $\state = \{ x \in \vs_+ \, | \, \no{x}_\vs = 1\}$.

In the dual space $\vs^*$ we have the order unit $u \in \vs^*_+$, i.e., for every $f \in \vs^*$ there exists $\lambda > 0$ such that $f \leq \lambda u$, so that we can define the \emph{order unit norm} $\no{\cdot}_{\vs^*}$ on $\vs^*$ as
\begin{align*}
\no{f}_{\vs^*} = \inf \{ \lambda \geq 0 \, | \, - \lambda u \leq f \leq \lambda u\}
\end{align*}
for all $f \in \vs^*$. It follows that $\effect(\state) = \{ f \in \vs^*_+ \, | \, \no{f}_{\vs^* } \leq 1\}$. Furthermore, it can be shown that the base and the order unit norm are dual to each other, i.e.,
\begin{align*}
\no{x}_{\vs} = \sup_{ \no{f}_{\vs^*} \leq 1} |f(x)|, \qquad \no{f}_{\vs^*} = \sup_{ \no{x}_{\vs} \leq 1} |f(x)|.
\end{align*}
In particular, we can express the order unit norm on $\vs^*$ as the supremum norm over $\state$ so that
\begin{align} \label{eq:sup-norm}
\no{f}_{\vs^*} = \sup_{ s \in \state} |f(s)|
\end{align}
for all $f \in \vs^*$. Since $\state$ is compact, the supremum is always attained.

As we will mostly consider the properties of effects and measurements, we will be only using the norm $\no{ \cdot}_{\vs^*}$. For this reason, in order to simplify our notation, from now on we will write $\no{ \cdot}$ instead of $\no{ \cdot}_{\vs^*}$. We note that in particular for effects and other elements in $\vs^*_+$ the absolute values in Eq. \eqref{eq:sup-norm} can be removed, and thus we have that 
\begin{align*}
\no{e} = \max_{ s \in \state} e(s)
\end{align*}
for all $e \in \vs^*_+$.

\begin{example}[Quantum theory]
In the case of quantum theory $\sh$ we note that for an effect $e \in \effect(\sh)$ the norm $\no{e}$ corresponds to the operator norm $\no{ E }$ of the corresponding effect operator $E \in \eh$, which in the finite dimensional quantum theory equals with the maximal eigenvalue of $E$.
\end{example}

%%%%%%%%%%%%%%%%%%%%
\subsection{Decoding power of a measurement}
%%%%%%%%%%%%%%%%%%%%

For a measurement $\Mo \in \obs(\state)$ with an outcome set $\Omega$, we denote
\begin{align*}
\lmax(\Mo) := \sum_{x \in \Omega} \no{\Mo_x} \, .
\end{align*}
An operational interpretation of $\lmax(\Mo)$ is that two parties, a sender and a receiver, communicate by transferring physical systems where messages $1,\ldots,n$ are encoded in states $s_1,\ldots,s_n$. The receiver is bound to use $\Mo$ to decode the messages, but the sender can freely choose the states. For each outcome $x$ of $\Mo$, the sender hence chooses a state $s_x$ such that the correct inference is as likely as possible, i.e., $\Mo_x(s_x)$ is maximal. Assuming that the initial messages occur with uniform probability, the number $\lmax(\Mo) / n$ is the maximal probability for the receiver to infer the correct messages by using $\Mo$. Based on this operational motivation, we call $\lmax(\Mo)$ the \emph{decoding power} of $\Mo$.

In the following we show that $\lmax$ has certain monotonicity properties that makes it a reasonable quantification of the quality of measurements. In particular, we see that the decoding power function behaves in mixing and post-processing in the following way.

\begin{proposition}\label{prop:lmax-pp}
For any measurement $\Mo$ with an outcome set $\Omega$ and post-processing $\nu =(\nu_{xy})_{x \in \Omega, y \in  \Lambda}$, we have that
\begin{align*}
\lmax(\nu \circ \Mo) \leq \lmax(\Mo).
\end{align*}
\end{proposition}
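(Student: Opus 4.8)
The plan is to unfold the definition of the post-processed measurement $\nu \circ \Mo$ and to estimate the norm of each of its effects using the triangle inequality for $\no{\cdot}$, together with the stochasticity of $\nu$. Write $\No = \nu \circ \Mo$ with outcome set $\Lambda$, so that $\No_y = \sum_{x \in \Omega} \nu_{xy} \Mo_x$ for every $y \in \Lambda$. Then
\begin{align*}
\lmax(\No) = \sum_{y \in \Lambda} \no{\No_y} = \sum_{y \in \Lambda} \no{\sum_{x \in \Omega} \nu_{xy} \Mo_x}.
\end{align*}

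Next I would apply the triangle inequality and absolute homogeneity of the norm, using $\nu_{xy} \geq 0$, to each summand: $\no{\sum_{x} \nu_{xy} \Mo_x} \leq \sum_{x} \nu_{xy} \no{\Mo_x}$. Substituting this into the previous display gives
\begin{align*}
\lmax(\No) \leq \sum_{y \in \Lambda} \sum_{x \in \Omega} \nu_{xy} \no{\Mo_x} = \sum_{x \in \Omega} \no{\Mo_x} \sum_{y \in \Lambda} \nu_{xy}.
\end{align*}
Now I invoke the defining property of a stochastic matrix, namely $\sum_{y \in \Lambda} \nu_{xy} = 1$ for every $x \in \Omega$, so the inner sum collapses to $1$ and the right-hand side becomes $\sum_{x \in \Omega} \no{\Mo_x} = \lmax(\Mo)$, which is the claimed inequality.

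There is no real obstacle here: the argument is a two-line estimate once the definitions are in place. The only point requiring a modicum of care is the interchange of the two finite sums (legitimate since both index sets are finite) and the observation that the norm is genuinely a norm on $\vs^*$ — in particular that it satisfies the triangle inequality and scales by nonnegative scalars — which was established when $\no{\cdot}_{\vs^*}$ was introduced as an order unit norm earlier in the excerpt. One could optionally remark that for effects, where all the $\Mo_x$ lie in $\vs^*_+$, the norm equals $\max_{s \in \state} \Mo_x(s)$, but this is not needed for the inequality.
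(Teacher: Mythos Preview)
Your proof is correct and follows essentially the same approach as the paper: unfold the definition of $\nu\circ\Mo$, apply the triangle inequality and absolute homogeneity of the order unit norm, interchange the finite sums, and use $\sum_{y}\nu_{xy}=1$ to collapse the inner sum. There is nothing to add.
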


\begin{proof}
With a direct calculation by using the triangular inequality and the absolute homogenity of $\no{\cdot }$ as well as the stochasticity of $\nu$ we see that
\begin{align*}
\lmax(\nu \circ \Mo) & = \sum_{y \in \Lambda} \no{(\nu \circ \Mo)_y}  
= \sum_{y \in \Lambda}  \no{ \sum_{x \in \Omega} \nu_{xy} \Mo_x } \\ 
&\leq \sum_{y \in \Lambda}   \sum_{x \in \Omega} \nu_{xy}  \no{ \Mo_x } =   \sum_{x \in \Omega} \left( \sum_{y \in \Lambda} \nu_{xy} \right)  \no{ \Mo_x } \\
& =  \sum_{x \in \Omega}  \no{ \Mo_x}
 = \lmax(\Mo) \, .
\end{align*}
\end{proof}

\begin{proposition}\label{prop:lmax-mix}
For any collection of measurements $\Mo^{(1)}, \ldots, \Mo^{(n)}$ with an outcome set $\Omega$ and probability distribution $p:=(p_i)_{i=1}^n$ we have that
\begin{align*}
\lmax\left(\sum_{i=1}^n p_i \Mo^{(i)}\right)  \leq \sum_{i=1}^n p_i \lmax(\Mo^{(i)}) \leq \max_{i\in \{1, \ldots, n\}} \lmax(\Mo^{(i)}).
\end{align*}
\end{proposition}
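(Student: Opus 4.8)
The plan is to establish the two inequalities separately, with the first being the substantive one and the second essentially trivial. For the first inequality, I would simply expand the definition of $\lmax$ for the mixed measurement and push the norm through the convex combination using the triangle inequality and absolute homogeneity of $\no{\cdot}$, exactly in the spirit of the proof of Proposition~\ref{prop:lmax-pp}. Concretely, writing $\Mo := \sum_{i=1}^n p_i \Mo^{(i)}$ so that $\Mo_x = \sum_{i=1}^n p_i \Mo^{(i)}_x$ for each $x \in \Omega$, one has
\begin{align*}
\lmax(\Mo) = \sum_{x \in \Omega} \no{\Mo_x} = \sum_{x \in \Omega} \no{\sum_{i=1}^n p_i \Mo^{(i)}_x} \leq \sum_{x \in \Omega} \sum_{i=1}^n p_i \no{\Mo^{(i)}_x} = \sum_{i=1}^n p_i \lmax(\Mo^{(i)}),
\end{align*}
where in the last step I interchange the two finite sums and recognize $\sum_{x \in \Omega} \no{\Mo^{(i)}_x} = \lmax(\Mo^{(i)})$.

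For the second inequality, I would note that $\sum_{i=1}^n p_i \lmax(\Mo^{(i)})$ is a convex combination of the nonnegative real numbers $\lmax(\Mo^{(1)}), \ldots, \lmax(\Mo^{(n)})$, and any convex combination of finitely many reals is bounded above by the largest of them; replacing each $\lmax(\Mo^{(i)})$ by $\max_{i} \lmax(\Mo^{(i)})$ and using $\sum_i p_i = 1$ gives the claim.

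There is no real obstacle here: both steps are elementary and rely only on properties of the norm already recorded in the excerpt (triangle inequality, absolute homogeneity, nonnegativity on $\vs^*_+$) together with the fact that $(p_i)_{i=1}^n$ is a probability distribution. The only point that warrants a word of care is the interchange of the two finite summations, which is valid since all terms are finite and nonnegative (or simply by finiteness of both index sets). One could also remark afterwards that the first inequality is the mixing counterpart of Proposition~\ref{prop:lmax-pp}, so that together these two propositions show $\lmax$ is non-increasing under both basic operations on measurements, as promised in the paragraph preceding the statement.
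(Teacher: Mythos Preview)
Your proposal is correct and matches the paper's own proof essentially line for line: both expand $\lmax$ of the mixture, apply the triangle inequality and absolute homogeneity of $\no{\cdot}$ to pull the convex combination outside the norm, interchange the two finite sums, and then bound the resulting convex combination by the maximum using $\sum_i p_i = 1$.
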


\begin{proof}
With a direct calculation by using the triangular inequality and the absolute homogenity of $\no{\cdot }$ as well as the normalization of $p$ we see that
\begin{align*}
\lmax\left(\sum_{i=1}^n p_i \Mo^{(i)}\right) & =  \sum_{x \in \Omega} \no{\sum_{i=1}^n p_i \Mo^{(i)}_x} \leq  \sum_{x \in \Omega}   \sum_{i=1}^n p_i \no{\Mo^{(i)}_x} \\
&=    \sum_{i=1}^n p_i \left( \sum_{x \in \Omega} \no{ \Mo^{(i)}_x} \right) = \sum_{i=1}^n p_i \lmax(\Mo^{(i)}) \\
& \leq \max_{i\in \{1, \ldots, n\}} \lmax(\Mo^{(i)}).
\end{align*}
\end{proof}

In the context of resource theories of measurements, the decoding power of a measurement is related to the robustness of measurements, defined as the minimal amount of noise needed to make a measurement trivial, studied in quantum theory e.g. in \cite{SkLi19, OsBi19} and in GPTs in \cite{Kuramochi20b}. 

%%%%%%%%%%%%%%%%%%%%
\subsection{Information storability}
%%%%%%%%%%%%%%%%%%%%

The previously defined decoding power is a quality of a single measurement. As we saw earlier, it is the maximal probability for a receiver to infer correct messages when the states used in encoding are optimized. We can also think of a scenario where the measurement is optimized over all possible measurements in the given theory. This quantity is known as the \emph{information storability} \cite{MaKi18} and for a theory with a state space $\state$ we denote it as
\begin{align}\label{eq:lmax-state}
\lmax(\obs(\state)) := \sup_{\Mo \in \obs(\state)} \lmax(\Mo) \, .
\end{align}
In the same way we can define the information storability for any subset  $\theory$ of measurements. This is particularly relevant when we study restrictions on measurements \cite{FiGuHeLe20}. For a subset $\theory \subseteq  \obs(\state)$ we denote the information storability of $\theory$ as 
\begin{align*}
\lmax(\theory) := \sup_{\Mo \in \theory} \lmax(\Mo) \, .
\end{align*}

In the cases where there is no risk of confusion we will use the simpler notation $\lmax(\state)$ for $\lmax(\obs(\state))$. 

\begin{example}[Quantum theory]\label{ex:quantum-is}
In a finite $d$-dimensional quantum theory for a POVM $M$ we have
\begin{align*}
\lmax(M) = \sum_x \no{M(x)} \leq \sum_x \tr{M(x)} = \tr{ \sum_x  M(x)} = \tr{\id} = d
\end{align*}
and we also see that this upper bound is reached whenever every operator $M(x)$ is rank-1.
\end{example}

Based on Example \ref{ex:quantum-is} one could presume that the information storability is always the same as the operational dimension of the theory, as is the case for quantum theory. The latter dimension is defined as the maximal number of perfectly distinguishable states. The operational dimension of a theory clearly is a lower bound for the information storability. However, the information storability can be larger than the operational dimension and we call this phenomenon \emph{super information storability}. This is the case e.g. in odd polygon theories (see Sec. \ref{sec:POLY}).

From Prop. \ref{prop:lmax-pp} and \ref{prop:lmax-mix} it follows that the supremum in Eq. \eqref{eq:lmax-state} is attained for extreme measurements that are maximal in the post-processing preorder. Thus, in a state space $\state$ we have that 
\begin{align}\label{eq:lmax-ind}
\lmax(\state) = \sup_{\Mo \in \obs(\state)} \lmax(\Mo) = \sup_{\Mo \in \obs^{ext}_{ind}(\state)} \lmax(\Mo).
\end{align}
We note that the set of extreme indecomposable measurements $\obs^{ext}_{ind}(\state)$ is exactly the set of extreme simulation irreducible measurements in \cite{FiHeLe18}. 

For state spaces with a particular structure we can show a simple way of calculating the information storability of the theory. In particular, we will use this result for polygon state spaces in Sec. \ref{sec:POLY}.

\begin{proposition}\label{prop:lambda-max-constant}
Suppose that there exists a state $s_0 \in \state$ such that $e(s_0) = f(s_0)=: \lambda_0$ for all $e,f \in \effect^{ext}_{ind}(\state)$. Then $\lmax(\Mo) = 1/ \lambda_0$ for all $\Mo \in \obs_{ind}(\state)$ and therefore $\lmax(\state)=1/ \lambda_0$.
\end{proposition}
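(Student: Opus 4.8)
The plan is to exploit the structure of indecomposable measurements together with the given constraint on pure indecomposable effects. Fix $\Mo \in \obs_{ind}(\state)$ with outcome set $\Omega$. By definition each nonzero effect $\Mo_x$ is indecomposable, and by the result of \cite{KiNuIm10} quoted above it is a positive multiple of a pure indecomposable effect; I would write $\Mo_x = c_x g_x$ with $c_x > 0$ and $g_x \in \effect^{ext}_{ind}(\state)$ for every $x$ with $\Mo_x \neq o$. The auxiliary fact I would establish first is that $\no{g} = 1$ for every $g \in \effect^{ext}_{ind}(\state)$: since $g$ is a nonzero positive functional, $\beta := \no{g} = \max_{s\in\state} g(s) \in (0,1]$; then $g/\beta \in \vs^*_+$ with $\no{g/\beta} = 1$, hence $g/\beta \in \effect(\state)$, and $g = \beta\,(g/\beta) + (1-\beta)\,o$ exhibits $g$ as a convex combination of the two distinct effects $g/\beta$ and $o$, so extremality of $g$ forces $\beta = 1$.

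With this in hand, $\no{\Mo_x} = c_x\,\no{g_x} = c_x$ for each outcome with $\Mo_x \neq o$, so that $\lmax(\Mo) = \sum_{x \in \Omega} \no{\Mo_x} = \sum_{x\,:\,\Mo_x \neq o} c_x$. Next I would plug the special state $s_0$ into the normalization $\sum_{x\in\Omega}\Mo_x = u$: using the hypothesis $g_x(s_0) = \lambda_0$ this yields $1 = u(s_0) = \sum_{x\,:\,\Mo_x \neq o} c_x\, g_x(s_0) = \lambda_0 \sum_{x\,:\,\Mo_x\neq o} c_x$, and since a measurement cannot have all of its effects equal to $o$ we get $\lambda_0 > 0$ and $\sum_{x\,:\,\Mo_x\neq o} c_x = 1/\lambda_0$. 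Comparing with the previous expression gives $\lmax(\Mo) = 1/\lambda_0$, a value independent of $\Mo$.

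For the final assertion I would simply invoke Eq.~\eqref{eq:lmax-ind}, which gives $\lmax(\state) = \sup_{\Mo \in \obs^{ext}_{ind}(\state)} \lmax(\Mo)$; since $\obs^{ext}_{ind}(\state) \subseteq \obs_{ind}(\state)$ and every term in this supremum equals $1/\lambda_0$, we conclude $\lmax(\state) = 1/\lambda_0$.

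The only step I expect to require genuine care is the auxiliary claim $\no{g}=1$ for pure indecomposable effects; the rest is bookkeeping. Two minor points worth stating explicitly: outcomes $x$ with $\Mo_x = o$ contribute nothing to either $\lmax(\Mo)$ or to the value at $s_0$, so they are harmless, and it is precisely \emph{indecomposability} of $\Mo$ that licenses writing $\Mo_x$ as a positive multiple of a \emph{pure} indecomposable effect, which is what makes the hypothesis on $\effect^{ext}_{ind}(\state)$ applicable.
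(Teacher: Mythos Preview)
Your proof is correct and follows essentially the same structure as the paper's: write each nonzero $\Mo_x$ as $c_x g_x$ with $g_x \in \effect^{ext}_{ind}(\state)$, establish $\no{g_x}=1$ so that $\lmax(\Mo)=\sum_x c_x$, evaluate the normalization at $s_0$ to get $\sum_x c_x = 1/\lambda_0$, and conclude via Eq.~\eqref{eq:lmax-ind}. The only difference is in justifying $\no{g_x}=1$: the paper cites \cite{KiNuIm10} for the existence of a pure state on which each pure indecomposable effect attains the value $1$, whereas you give a self-contained convexity argument (any nonzero extreme effect with $\no{g}=\beta<1$ would decompose as $\beta(g/\beta)+(1-\beta)o$), which is a perfectly valid alternative.
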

\begin{proof}
Let $\Mo \in \obs_{ind}(\state)$ with an outcome set $\Omega$. Since each effect $\Mo_x$ is indecomposable, by \cite{KiNuIm10} we have that $\Mo_x = \alpha_x m_x$ for some $\alpha_x > 0$ and $m_x \in \effect^{ext}_{ind}(\state)$ for all $x \in \Omega$. Furthermore, by \cite{KiNuIm10}, for all $m_x$ there exists a pure state $s_x \in \state^{ext}$ such that $m_x(s_x)=1$ so that $\no{\Mo_x} = \alpha_x$ and $\lmax(\Mo) = \sum_{x \in \Omega} \alpha_x$.

From the normalization $\sum_{x \in \Omega} \Mo_x= u$ and the assumption that $e(s_0) = \lambda_0$ for all $e \in \effect^{ext}_{ind}(\state)$ it follows that 
\begin{align*}
\dfrac{1}{\lambda_0} = \dfrac{1}{\lambda_0} u(s_0) = \dfrac{1}{\lambda_0} \sum_{x \in \Omega} \alpha_x m_x(s_0) = \sum_{x \in \Omega} \alpha_x = \lmax(\Mo).
\end{align*}
The last claim follows from this and Eq. \eqref{eq:lmax-ind}.
\end{proof}

%%%%%%%%%%%%%%%%%%%%%%
\section{Harmonic approximate joint measurements}\label{sec:APPROX}
%%%%%%%%%%%%%%%%%%%%%%

A \emph{joint measurement} of measurements $\Mo^{(1)}, \ldots, \Mo^{(k)}$ with outcome sets $\Omega_1, \ldots, \Omega_k$ is a measurement $\Jo$ with the product outcome set $\Omega_1 \times \cdots \times \Omega_k$ and satisfying the marginal properties
\begin{align*} 
\sum_{x_2  \in \Omega_2} \cdots \sum_{x_k \in \Omega_k} \Jo_{x_1,\ldots, x_k} &= \Mo^{(1)}_{x_1} \quad  \forall x_1 \in \Omega_1, \\
\vdots \qquad \\
\sum_{x_1 \in \Omega_1} \cdots \sum_{x_{k-1} \in \Omega_{k-1}} \Jo_{x_1,\ldots, x_k} &= \Mo^{(k)}_{x_k}  \quad  \forall x_k \in \Omega_k\, .
\end{align*}
Measurements $\Mo^{(1)}, \ldots, \Mo^{(k)}$ are called \emph{compatible} if they have a joint measurement and otherwise they are \emph{incompatible}. 

In classical theories, i.e., in theories whose state spaces are simplices, all measurements are compatible whereas in every nonclassical GPTs there are pairs of measurements that are incompatible \cite{Plavala16,Kuramochi20}. However, even a set of incompatible measurements can be made incompatible if we allow for some amount of error. Most commonly this error is quantified by the amount of noise needed to be added to a set of incompatible set of measurements in order to make them compatible \cite{BuHeScSt13}. Formally, for each $(\lambda_1, \ldots, \lambda_k) \in [0,1]^k$ and every choice of trivial measurements $\To^{(1)}, \ldots, \To^{(k)}$ the measurements $\lambda_1 \Mo^{(1)} + (1-\lambda_1) \To^{(1)}, \ldots, \lambda_k \Mo^{(k)} + (1-\lambda_k) \To^{(k)}$ are considered to be \emph{noisy versions} of the measurements $\Mo^{(1)}, \ldots, \Mo^{(k)}$, where the amount of noise added to each observable $\Mo^{(i)}$ is characterized by the parameter $1-\lambda_i$. In this case we call a joint measurement $\tilde{\Jo}$ of any noisy versions of the measurements $\Mo^{(1)}, \ldots, \Mo^{(k)}$ an \emph{approximate joint measurement} of $\Mo^{(1)}, \ldots, \Mo^{(k)}$.

Following \cite{FiHeLe17} we can form a class of approximate joint measurements for $\Mo^{(1)}, \ldots, \Mo^{(k)}$ by fixing convex weights $\lambda_1,\ldots,\lambda_k$ (so that $\lambda_1+ \cdots+ \lambda_k=1$) and probability distributions $p^{(1)},\ldots,p^{(k)}$ on the sets $\Omega_2\times \cdots \times \Omega_k, \ldots, \Omega_1\times \cdots \times \Omega_{k-1}$ respectively and by setting
\begin{align}\label{eq:approx}
\tilde{\Jo}_{x_1,\ldots,x_k} = \lambda_1 p^{(1)}_{x_2,\ldots,x_k} \Mo^{(1)}_{x_1} + \cdots +\lambda_k p^{(k)}_{x_1,\ldots,x_{k-1}} \Mo^{(k)}_{x_k}
\end{align}
for all $(x_1, \ldots, x_k) \in \Omega_1 \times \cdots \times \Omega_k$.
The marginals of $\tilde{\Jo}$ are
\begin{align*} 
\sum_{x_2 \in \Omega_2} \cdots \sum_{x_k \in \Omega_k} \tilde{\Jo}_{x_1,\ldots, x_k} &= \lambda_1 \Mo^{(1)}_{x_1} + (1-\lambda_1) \To^{(1)}_{x_1}  \quad  \forall x_1 \in \Omega_1, \\
\vdots \qquad \\
\sum_{x_1 \in \Omega_1} \cdots \sum_{x_{k-1} \in \Omega_{k-1}} \tilde{\Jo}_{x_1,\ldots, x_k} &= \lambda_k \Mo^{(k)}_{x_k} + (1-\lambda_k) \To^{(k)}_{x_k} \quad \forall x_k \in \Omega_k ,
\end{align*}
where $\To^{(1)},\ldots,\To^{(k)}$ are trivial measurements.

In the previous construction we are free to choose the convex weights and the probability distributions. It turns out that a particular choice is useful for our following developments. We choose all probability distributions $p^{(1)},\ldots,p^{(k)}$ to be uniform distributions and the convex weights are chosen to be
\begin{align*}
\lambda_i = \frac{h(m_1,...,m_k)}{k\, m_i} \, ,
\end{align*}
where $h(m_1,\ldots,m_k)$ is the harmonic mean of the numbers $m_1,\ldots,m_k$ and $m_i$ is the number of outcomes of $\Mo^{(i)}$. We denote this specific measurement as $\Ho^{(1,\ldots,k)}$ and call it \emph{harmonic approximate joint measurement} of $\Mo^{(1)}, \ldots, \Mo^{(k)}$. Inserting the specific choices into Eq. \eqref{eq:approx} we observe that the harmonic approximate joint measurement can be written in the form
\begin{align}\label{eq:M-sum-gen}
\Ho^{(1,\ldots,k)}_{x_1,\ldots, x_k} = \frac{1}{\kappa(m_1,\ldots,m_k)}( \Mo^{(1)}_{x_1} + \cdots + \Mo^{(k)}_{x_k} ) \, ,
\end{align}
for all $(x_1, \ldots, x_k) \in \Omega_1 \times \cdots \times \Omega_k$ with
\begin{align*}
\kappa(m_1,\ldots,m_k) := \prod_i m_i \sum_i \frac{1}{m_i} \, .
\end{align*}

We remark that any approximate joint measurement of the form of Eq. \eqref{eq:approx} can be obtained from $\Mo^{(1)}, \ldots, \Mo^{(k)}$ by a suitable mixing and post-processing, hence can be simulated by using those measurements only \cite{FiHeLe18}. In this sense, they are all trivial approximate joint measurements and among these trivial approximate joint measurements our specific choice $\Ho^{(1,\ldots,k)}$ stands out by having a particularly symmetric form. Although trivial approximate joint measurements can be formed for any collection of measurements, we will show that the harmonic approximate joint measurement $\Ho^{(1,\ldots,k)}$ is related to the incompatibility of  $\Mo^{(1)}, \ldots, \Mo^{(k)}$ in an intriguing way. The link is explained in the following sections.

%%%%%%%%%%%%%%%%%%%%%%
\section{Random access tests}\label{sec:RAT}
%%%%%%%%%%%%%%%%%%%%%%

%%%%%%%%%%%%%%%%%%%%%%
\subsection{Classical and quantum random access codes}
%%%%%%%%%%%%%%%%%%%%%%

As a motivation for the later developments, we recall that in the $(n,d)$--\emph{random access code} (RAC), Alice is given $n$ input dits $\vec{x}=(x_1, \ldots, x_n) \in \{0, \ldots, d-1\}^n$ based on which she prepares one dit and sends it to Bob. Bob is then given a number $j \in \{1, \ldots, n\}$ and the task of Bob is to guess the corresponding dit $x_j$ of Alice. The temporal order is here important: Alice does not know $j$ and therefore cannot simply send $x_j$ to Bob. It is clear that, assuming all inputs are sampled uniformly, Bob will make errors. The performance depends on the strategy that Alice and Bob agree to follow. The choices of the inputs given to Alice and Bob are sampled with uniform probability and the average success probability quantifies the quality of their strategy. Generally, we denote by $\bar{P}^{(n,d)}_{c}$ the best average success probability that Alice and Bob can achieve with a coordinated strategy.  

For instance, Alice and Bob can agree that Alice always sends the value of the first dit. If Bob has to announce the value of the first dit, he makes no errors. This happens with the probability $1/n$. On the other hand, with the probaility $(n-1)/n$ Bob has to announce the value of some other dit in which case the information received from Alice does not help Bob and he has to make a random choice, thereby guessing the right value with the probability $1/d$. Therefore, with this strategy the average success probability is $(d+n-1)/(nd)$. It can be shown that in the case when $n=2$ this is also the optimal strategy so that $\bar{P}^{(2,d)}_c = 1/2(1+1/d)$ \cite{AmKrRa15,TaHaMaBo15}. 

In $(n,d)$--\emph{quantum random access code} (QRAC), Alice is given $n$ input dits $\vec{x}=(x_1, \ldots, x_n) \in \{0, \ldots, d-1\}^n$ based on which she prepares a $d$-dimensional quantum system into a state $\varrho_{\vec{x}} \in \mathcal{S}(\hi_d)$ and sends it to Bob. Bob is then given a number $j \in \{1, \ldots, n\}$ and the task of Bob is to guess the corresponding dit $x_j$ of Alice by performing a measurement on the state sent by Alice. If Bob performs a measurement described by a $d$-outcome POVM $M^{(j)}$, the average success probability of QRAC when the inputs are assumed to be uniformly distributed is given by
\begin{align*}
\dfrac{1}{n d^n} \sum_{\vec{x}} \tr{\varrho_{\vec{x}} (M^{(1)}(x_1) + \cdots + M^{(n)}(x_n) )}.
\end{align*}
The best average success probability $\bar{P}^{(n,d)}_{q}$ of a $(n,d)$-QRAC is then obtained by optimizing over the states and the measurements so that 
\begin{align*}
\bar{P}^{(n,d)}_{q} = \sup_{ M^{(1)}, \ldots, M^{(n)} \in \obs(\hi_d)} \dfrac{1}{n d^n} \sum_{\vec{x}} \no{M^{(1)}(x_1) + \cdots + M^{(n)}(x_n)},
\end{align*}
where the norm is the operator norm on $\mathcal{L}(\hi_d)$. In the case that $n=2$ it is known that $\bar{P}^{(2,d)}_q = 1/2(1+1/\sqrt{d}) > \bar{P}^{(2,d)}_c$ and that this bound can only be attained with rank-1 projective and mutually unbiased POVMs \cite{FaKa19}.

%%%%%%%%%%%%%%
\subsection{Random access tests in GPTs} \label{subsec:gpt-rac}
%%%%%%%%%%%%%%

In the following we formulate a class of tests that are similar to QRACs but are free from certain assumptions. Instead of restricting the number of measurement outcomes to match the (operational) dimension of the theory, we look at tests where the number of outcomes can be arbitrary. Furthermore, we look at these tests from the point of view of testing properties of collections of measurements in a given GPT. 

The tests are defined as follows: Let $\state$ be a state space and let $\Mo^{(1)},\ldots,\Mo^{(k)}$ be measurements on $\state$ with outcome sets $\Omega_1, \ldots, \Omega_k$ with $m_1,\ldots,m_k$ outcomes respectively. As in the QRAC setting,  Alice prepares a state $s_{\vec{x}} \in \state$, where $\vec{x}=(x_1,\ldots,x_k) \in \Omega_1 \times \cdots \times \Omega_k$. The task is the same as before: Bob is told an index $j\in \{1,\ldots,k\}$ and he should guess the label $x_j$ by performing a measurement $\Mo^{(j)}$ on the state $s_{\vec{x}}$. When the queries are uniformly distributed the average success probability of this \emph{random access test (RAT)} is then
\begin{align*}
\frac{1}{km_1\cdots m_k} \sum_{\vec{x}} \left(\Mo_{x_1}^{(1)} + \cdots + \Mo^{(k)}_{x_k} \right)(s_{\vec{x}})\, .
\end{align*}
When Bob is required to use the fixed measurements $\Mo^{(1)}, \ldots, \Mo^{(k)}$ and one optimizes over the states, the maximum average success probability becomes
\begin{align}\label{eq:P-def}
\bar{P}(\Mo^{(1)},\ldots,\Mo^{(k)}):=\frac{1}{km_1\cdots m_k} \sum_{\vec{x}} \no{ \Mo_{x_1}^{(1)} + \cdots + \Mo^{(k)}_{x_k}},
\end{align}
where the norm is now the order unit norm in the dual space $\vs^*$ of the ordered vector space $\vs$ in which the state space $\state$ is embedded. The number $\bar{P}(\Mo^{(1)},\ldots,\Mo^{(k)})$ is therefore the maximal success probability of the test for Alice and Bob if they are bound to use the measurements $\Mo^{(1)},\ldots,\Mo^{(k)}$ but free to choose the states.

A direct calculation reveals the following formula.

\begin{proposition}
\begin{align}\label{eq:connection}
 \bar{P}(\Mo^{(1)},\ldots,\Mo^{(k)})=\tfrac{1}{h(m_1,\ldots,m_k)} \, \lmax(\Ho^{(1,\ldots,k)})\, .
\end{align}
\end{proposition}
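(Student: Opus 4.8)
The plan is to unwind both sides of \eqref{eq:connection} down to the common quantity $\sum_{\vec{x}} \no{\Mo^{(1)}_{x_1} + \cdots + \Mo^{(k)}_{x_k}}$ and then match the scalar prefactors. First I would apply the absolute homogeneity of the order unit norm $\no{\cdot}$ to the explicit form \eqref{eq:M-sum-gen} of the harmonic approximate joint measurement: since $\Ho^{(1,\ldots,k)}_{x_1,\ldots,x_k} = \kappa(m_1,\ldots,m_k)^{-1}(\Mo^{(1)}_{x_1} + \cdots + \Mo^{(k)}_{x_k})$, we get $\no{\Ho^{(1,\ldots,k)}_{x_1,\ldots,x_k}} = \kappa(m_1,\ldots,m_k)^{-1}\no{\Mo^{(1)}_{x_1} + \cdots + \Mo^{(k)}_{x_k}}$, and summing over all $\vec{x}\in\Omega_1\times\cdots\times\Omega_k$ yields $\lmax(\Ho^{(1,\ldots,k)}) = \kappa(m_1,\ldots,m_k)^{-1}\sum_{\vec{x}}\no{\Mo^{(1)}_{x_1} + \cdots + \Mo^{(k)}_{x_k}}$.

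Next I would rewrite $\kappa$ in terms of the harmonic mean. By definition $h(m_1,\ldots,m_k) = k/\sum_i m_i^{-1}$, hence $\sum_i m_i^{-1} = k/h(m_1,\ldots,m_k)$, and therefore $\kappa(m_1,\ldots,m_k) = \big(\prod_i m_i\big)\big(\sum_i m_i^{-1}\big) = k\,m_1\cdots m_k / h(m_1,\ldots,m_k)$. Substituting this into the expression for $\lmax(\Ho^{(1,\ldots,k)})$ gives $\lmax(\Ho^{(1,\ldots,k)}) = \frac{h(m_1,\ldots,m_k)}{k\,m_1\cdots m_k}\sum_{\vec{x}}\no{\Mo^{(1)}_{x_1} + \cdots + \Mo^{(k)}_{x_k}}$, and comparing with the defining formula \eqref{eq:P-def} of $\bar{P}$ this is exactly $h(m_1,\ldots,m_k)\,\bar{P}(\Mo^{(1)},\ldots,\Mo^{(k)})$. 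Dividing through by $h(m_1,\ldots,m_k)$ yields \eqref{eq:connection}.

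The only point requiring a word of care is that $\no{\cdot}$ is applied to elements of $\vs^*_+$, so that one is entitled to use the norm formulas recorded earlier; but this is immediate, since each $\Ho^{(1,\ldots,k)}_{x_1,\ldots,x_k}$ is a nonnegative linear combination of effects and hence lies in $\vs^*_+$, and in any case only the absolute homogeneity $\no{\alpha f} = |\alpha|\,\no{f}$ of $\no{\cdot}$ is actually needed. There is thus no genuine obstacle here: the statement is a direct computation, and the "hard part'' amounts only to correctly identifying $\kappa(m_1,\ldots,m_k)$ with $k\,m_1\cdots m_k/h(m_1,\ldots,m_k)$.
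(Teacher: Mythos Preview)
Your proposal is correct and matches the paper's approach: the paper states only that ``a direct calculation reveals the following formula'' without spelling out the steps, and your computation is exactly that direct calculation---pulling the scalar $\kappa(m_1,\ldots,m_k)^{-1}$ out of each norm via absolute homogeneity and then identifying $\kappa(m_1,\ldots,m_k)=k\,m_1\cdots m_k/h(m_1,\ldots,m_k)$ to match the prefactor in \eqref{eq:P-def}.
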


This observation shows that the two operationally accessible quantities $\bar{P}(\Mo^{(1)},\ldots,\Mo^{(k)})$ and $\lmax(\Ho^{(1,\ldots,k)})$ are connected in a simple way. However, let us emphasize that in real experiments one would find lower bounds for these quantities as they are defined via optimal states.  

From Eq. \eqref{eq:P-def} (or equivalently from Eq. \eqref{eq:connection}) one gets the following upper bound
\begin{align*}
\bar{P}(\Mo^{(1)}, \ldots, \Mo^{(k)}) \leq \frac{1}{k} \sum_{i=1}^k \frac{\lmax(\Mo^{(i)})}{m_i}\, .
\end{align*}
Thus, the maximum average success probability of a random access test is upper bounded by the weighted average of the decoding powers of the measurements used in the test. Clearly $\lmax(\Mo^{(i)}) \leq m_i$ for all $i \in \{1, \ldots, k\}$, where the equality holds only when $\Mo^{(i)}$ perfectly distinguishes $m_i$ (not necessarily different) states which is the only case when the bound is trivial. However, we note that in order to calculate this bound one has to be familiar with the description of the measurements $\Mo^{(1)},\ldots,\Mo^{(k)}$.

On the other hand, if we do not know the exact specification of the measurements but only know that $\Mo^{(1)},\ldots,\Mo^{(k)}$ belong to a set $\theory \subseteq \obs(\state)$ that is closed under post-processing and mixing, implying that $\Ho^{(1,\ldots,k)}$ belongs to $\theory$, then Eq. \eqref{eq:connection} gives an upper bound
\begin{align}\label{eq:universal-bound}
 \bar{P}(\Mo^{(1)},\ldots,\Mo^{(k)}) \leq \tfrac{1}{h(m_1,\ldots,m_k)} \lmax(\theory)\, .
\end{align}
 The minimal subset $\theory \subseteq \obs(\state)$ that includes the measurements $\Mo^{(1)}$, $\ldots$, $\Mo^{(k)}$ and that is closed with respect to mixing and post-processing is the simulation closure of the set $\{\Mo^{(i)}\}_{i=1}^k$ (see \cite{FiHeLe18}), i.e., a set formed of all possible mixtures and/or post-processings of the measurements $\Mo^{(1)}, \ldots, \Mo^{(k)}$. From Prop. \ref{prop:lmax-pp} and \ref{prop:lmax-mix} we see that in this case $\lmax(\theory) = \max_i \lmax(\Mo^{(i)})$.

We note that in the case when $\lmax(\theory)=m_1=\cdots= m_k=d$, where $d$ is the operational dimension of the theory (such as in the case of $(k,d)$--QRAC  where $\theory= \obs(\hi_d)$) the right hand side of Eq. \eqref{eq:universal-bound} is 1 and therefore it does not give a nontrivial bound. However, in different settings the bound can be nontrivial.

\begin{example}[$n$-tomic measurements]
Let $\theory$ consist of all measurements on $\state$ which can be simulated with measurements with $n$ or less outcomes, i.e., every measurement in $\theory$ is obtained as a mixture and/or post-processing of some set of $n$-outcome measurements. Measurements of this type are called $n$-tomic and have been considered e.g. in \cite{FiGuHeLe20, KlCa16, GuBaCuAc17, Huetal18}. It was shown in \cite{FiGuHeLe20} that $\lmax(\theory) \leq n$  so that Eq. \eqref{eq:universal-bound} gives $\bar{P}(\Mo^{(1)},\ldots,\Mo^{(k)}) \leq n/h(m_1,\ldots,m_k) $. For example, if $n=k=2$, then we obtain $\bar{P}(\Mo^{(1)},\Mo^{(2)}) \leq (m_1+m_2)/(m_1m_2) $ which is nontrivial for measurements $\Mo^{(1)},\Mo^{(2)} \in \theory$ with $m_1,m_2 >2$.
\end{example}

One can show that $\bar{P}(\Mo^{(1)}, \ldots, \Mo^{(k)})$ is a convex function of all of its arguments. To see this, let $\Mo^{(1)}, \ldots, \Mo^{(k)}$ be measurements and w.l.o.g. let $\Mo^{(1)} = \sum_{i=1}^l p_i \No^{(i)}$ for some measurements $\No^{(1)}, \ldots, \No^{(l)}$ and some probability distribution $(p_i)_{i=1}^l$. We see that
\begin{align*}
\bar{P}(\Mo^{(1)},\ldots, \Mo^{(k)})  &=  \frac{1}{k m_1 \cdots m_k} \sum_{\vec{x}} \no{ \Mo^{(1)}_{x_1} + \cdots + \Mo^{(k)}_{x_k}}\\
&=  \frac{1}{k m_1 \cdots m_k} \sum_{\vec{x}} \no{\sum_{i=1}^l p_i \No^{(i)}_{x_1} + \Mo^{(2)}_{x_2} + \cdots + \Mo^{(k)}_{x_k}}\\
&=  \frac{1}{k m_1 \cdots m_k} \sum_{\vec{x}} \no{\sum_{i=1}^l p_i \left( \No^{(i)}_{x_1} + \Mo^{(2)}_{x_2} + \cdots + \Mo^{(k)}_{x_k}\right)}\\
&\leq  \frac{1}{k m_1 \cdots m_k} \sum_{\vec{x}} \sum_{i=1}^l p_i \no{ \No^{(i)}_{x_1} + \Mo^{(2)}_{x_2} + \cdots + \Mo^{(k)}_{x_k} }\\
&= \sum_{i=1}^l p_i \bar{P}(\No^{(i)}, \Mo^{(2)}, \ldots, \Mo^{(k)})\, .
\end{align*}
Furthermore, one sees that
\begin{align*}
\bar{P}\left( \sum_i p_i \No^{(i)}, \Mo^{(2)}, \ldots, \Mo^{(k)} \right) &\leq \sum_{i=1}^l p_i \bar{P}(\No^{(i)}, \Mo^{(2)}, \ldots, \Mo^{(k)}) \\
& \leq \sum_{i=1}^l p_i \max_j \bar{P}(\No^{(j)}, \Mo^{(2)}, \ldots, \Mo^{(k)}) \\
&= \max_j \bar{P}(\No^{(j)}, \Mo^{(2)}, \ldots, \Mo^{(k)})\, .
\end{align*}
Thus, the success probability within the whole theory is maximized for extreme measurements.

%%%%%%%%%%%%%%%%%%%%
\subsection{Upper bound for compatible pairs of measurements}
%%%%%%%%%%%%%%%%%%%%

As was stated before, for the classical $(2,d)$-RAC it is known that the optimal success probability is $\bar{P}^{(2,d)}_c = \half \left(1+\frac{1}{d} \right)$. Generalizing the result from \cite{CaHeTo20}, let us consider random access tests with two measurements with $d$ outcomes on a state space $\state$, where the operational dimension of the theory is $d$. In such a GPT, there exists $d$ pure states $\{s_1, \ldots,s_d\} \in \state^{ext}$ and a $d$-outcome measurement $\Mo$ such that $\Mo_i(s_j)=\delta_{ij}$ for all $i,j \in \{1, \ldots, d\}$. We note that then $\no{\Mo_i}=1$ for all $i \in \{1, \ldots, d\}$ so that in particular we must have that $\no{\Mo_i + \Mo_i} =2$ and $\no{\Mo_i + \Mo_j} =1$ for all $i\neq j$ (since $\sum_i \Mo_i =u$). Hence, we see that
\begin{align*}
\bar{P}(\Mo,\Mo) &= \frac{1}{2d^2} \sum_{i,j=1}^d \no{\Mo_i + \Mo_j} = \frac{1}{d^2}  \sum_{i=1}^d \no{\Mo_i} +  \frac{1}{2d^2} \sum_{\substack{i,j=1 \\ i \neq j} }^d \no{\Mo_i + \Mo_j}  \\
&= \frac{1}{2d^2}(2 d + d(d-1))= \frac{1}{2} \left(1+\frac{1}{d} \right) \, , 
\end{align*}
so that the optimal success probability for the classical $(2,d)$--RAC can be always achieved in a theory with operational dimension $d$.

We see that the classical bound can be always achieved with the foolish strategy of choosing the same measurement for the random access test. For quantum theory it was shown in \cite{CaHeTo20} that this bound cannot be surpassed even if we allow for two different measurements that are compatible. However, in general we can actually show a bound for compatible measurements that in some cases differs from the classical one.

The following result is a generalization of Prop. 3 in \cite{CaHeTo20}. 

\begin{proposition}\label{prop:compatible}
Let $\Mo^{(1)}$ and $\Mo^{(2)}$ be two compatible measurements with $m_1$ and $m_2$ outcomes, respectively.
If $\Mo^{(1)}$ and $\Mo^{(2)}$ have a joint measurement belonging to $\theory \subseteq \obs(\state)$, then
\begin{align}\label{eq:compatible}
\bar{P}(\Mo^{(1)}, \Mo^{(2)}) \leq \frac{1}{2} \left( 1 + \frac{\lmax(\theory)}{m_1m_2} \right) \, .
\end{align}
If $\lmax(\theory)=m_1=m_2=d$, where $d$ is the operational dimension of the theory, then the right hand side is the classical bound for (2,d)--RAC.
\end{proposition}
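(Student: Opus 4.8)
The plan is to use the joint measurement $\Jo \in \theory$ to bound each term $\no{\Mo^{(1)}_{x_1} + \Mo^{(2)}_{x_2}}$ in the defining sum \eqref{eq:P-def}. The heart of the argument is the effect inequality
\[
\Mo^{(1)}_{x_1} + \Mo^{(2)}_{x_2} \leq u + \Jo_{x_1, x_2} \qquad \text{for all } x_1 \in \Omega_1,\ x_2 \in \Omega_2 .
\]
To prove it, I would use the marginal relations to write $\Mo^{(1)}_{x_1} = \sum_{y_2 \in \Omega_2} \Jo_{x_1, y_2}$ and $\Mo^{(2)}_{x_2} = \sum_{y_1 \in \Omega_1} \Jo_{y_1, x_2}$, so that $\Mo^{(1)}_{x_1} + \Mo^{(2)}_{x_2} - \Jo_{x_1, x_2}$ equals the sum of $\Jo_{y_1, y_2}$ over all pairs $(y_1, y_2)$ with $y_1 = x_1$ or $y_2 = x_2$ (the subtracted $\Jo_{x_1,x_2}$ removes the double count). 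Since every $\Jo_{y_1, y_2}$ is a positive functional and $\sum_{y_1, y_2} \Jo_{y_1, y_2} = u$, this partial sum is dominated by $u$, which is exactly the claimed inequality.

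Next I would pass to the order unit norm, which on the positive cone is monotone ($0 \leq e \leq f$ gives $\no{e} = \max_{s} e(s) \leq \max_s f(s) = \no{f}$) and satisfies $\no{u + g} = \max_s(1 + g(s)) = 1 + \no{g}$ for $g \in \vs^*_+$. Applying both facts to the inequality above yields $\no{\Mo^{(1)}_{x_1} + \Mo^{(2)}_{x_2}} \leq 1 + \no{\Jo_{x_1, x_2}}$ for every pair. Summing over $(x_1, x_2) \in \Omega_1 \times \Omega_2$ and dividing by $2 m_1 m_2$ as in \eqref{eq:P-def} gives
\[
\bar{P}(\Mo^{(1)}, \Mo^{(2)}) \leq \frac{1}{2 m_1 m_2}\left( m_1 m_2 + \sum_{x_1, x_2} \no{\Jo_{x_1, x_2}} \right) = \frac{1}{2}\left( 1 + \frac{\lmax(\Jo)}{m_1 m_2} \right),
\]
and since $\Jo \in \theory$ we have $\lmax(\Jo) \leq \lmax(\theory)$, which establishes \eqref{eq:compatible}. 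The final claim then follows by substituting $\lmax(\theory) = m_1 = m_2 = d$, which turns the right-hand side into $\tfrac12(1 + 1/d) = \bar{P}^{(2,d)}_c$.

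The only genuinely non-routine step is the effect inequality $\Mo^{(1)}_{x_1} + \Mo^{(2)}_{x_2} \leq u + \Jo_{x_1, x_2}$; once it is in place, everything reduces to the elementary monotonicity and additivity properties of the supremum norm on $\vs^*_+$. I do not expect a real obstacle here, but the inclusion--exclusion bookkeeping (and the fact that it is positivity of the discarded effects $\Jo_{y_1,y_2}$ that licenses throwing them away) should be checked carefully, and one should note that this is precisely the point where compatibility of $\Mo^{(1)}$ and $\Mo^{(2)}$ enters.
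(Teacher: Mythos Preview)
Your proposal is correct and follows essentially the same route as the paper. The paper also writes $\Mo^{(1)}_x+\Mo^{(2)}_y=\Jo_{x,y}+\bigl(\sum_{a\neq y}\Jo_{x,a}+\sum_b\Jo_{b,y}\bigr)$, bounds the bracketed term by $u$, and concludes $\no{\Mo^{(1)}_x+\Mo^{(2)}_y}\le 1+\no{\Jo_{x,y}}$; the only cosmetic differences are that the paper phrases the last step via the triangle inequality rather than order monotonicity of $\no{\cdot}$, and routes the computation through $\lmax(\Ho^{(1,2)})$ and Eq.~\eqref{eq:connection} instead of working directly from Eq.~\eqref{eq:P-def}.
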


\begin{proof}
Let  $\Jo \in \theory$ be a joint measurement of $\Mo^{(1)}$ and $\Mo^{(2)}$.
We have
\begin{align*}
&h(m_1,m_2) \cdot \bar{P}(\Mo^{(1)},\Mo^{(2)}) = \lmax(\Ho^{(1,2)}) = \sum_{x,y} \no{ \tfrac{1}{\kappa(m_1,m_2)}(\Mo^{(1)}_x+ \Mo^{(2)}_y) } \\
&= \frac{1}{\kappa(m_1,m_2)} \sum_{x,y} \no{ \sum_a  \Jo_{x,a} + \sum_b \Jo_{b,y} } \\
&= \frac{1}{\kappa(m_1,m_2)} \sum_{x,y} \no{\Jo_{x,y}+ \sum_{\substack{a\\ a \neq y}}  \Jo_{x,a} + \sum_{b} \Jo_{b,y} } \\
& \leq \frac{1}{\kappa(m_1,m_2)} \sum_{x,y} \no{\Jo_{x,y}}+ \frac{1}{\kappa(m_1,m_2)}\sum_{x,y}\no{ \sum_{\substack{a\\ a \neq y}}  \Jo_{x,a} +\sum_{b} \Jo_{b,y} } \\
& = \frac{1}{\kappa(m_1,m_2)} \lmax(\Jo)+ \frac{1}{\kappa(m_1,m_2)}\sum_{x,y}\no{ \sum_{\substack{a\\ a \neq y}}  \Jo_{x,a} +\sum_{b} \Jo_{b,y} }
\end{align*}
For the second summand, we observe that
\begin{align*}
0\leq \sum_{\substack{a\\ a \neq y}}  \Jo_{x,a} + \sum_{b} \Jo_{b,y} \leq \sum_{a,b} \Jo_{a,b} = u \, .
\end{align*}
We hence get
\begin{align*}
&h(m_1,m_2) \cdot \bar{P}(\Mo^{(1)},\Mo^{(2)}) \leq \frac{1}{\kappa(m_1,m_2)} \left( \lmax(\theory) + m_1m_2 \right) \, .
\end{align*}
Inserting the expressions of $h(m_1,m_2)$ and $\kappa(m_1,m_2)$ we arrive to Eq. \eqref{eq:compatible}.
\end{proof}

From the latter part of Prop. \ref{prop:compatible} it is clear that in the case of $(2,d)$--QRAC the bound given by Eq. \eqref{eq:compatible} for $\theory= \obs(\state)$ matches the optimal success probability of the classical $(2,d)$--RAC both for quantum and classical state spaces. In addition, in \cite{MaKi18} it was shown that $\lmax(\state)=d=2$ for all state spaces $\state$ that are point-symmetric (or centrally symmetric) so that also in this case the bound given by Eq. \eqref{eq:compatible} matches the optimal success probability of the classical $(2,2)$--RAC for two dichotomic measurements.

The bound given in Eq. \eqref{eq:compatible} can be used to detect incompatibility for pairs of measurements. To detect the incompatibility of any pair of measurements on a state space $\state$ we choose $\theory= \obs(\state)$ so that in terms of the harmonic approximate joint measurement we can express the previous result as follows:

\begin{corollary}
Let $\Mo^{(1)}$ and $\Mo^{(2)}$ be measurements on a state space $\state$ with $m_1$ and $m_2$ outcomes, respectively. If 
\begin{align}\label{eq:incompatible}
\lmax(\Ho^{(1,2)}) >  \frac{\lmax(\state)+m_1m_2}{m_1 +m_2},
\end{align}
then $\Mo^{(1)}$ and $\Mo^{(2)}$ are incompatible.
\end{corollary}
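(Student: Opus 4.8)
The plan is to obtain this corollary as a direct contrapositive of Proposition \ref{prop:compatible}, specialized to the case $\theory = \obs(\state)$, after translating between $\bar{P}(\Mo^{(1)},\Mo^{(2)})$ and $\lmax(\Ho^{(1,2)})$ via the connection formula in Eq. \eqref{eq:connection}. So I would argue by contradiction: assume $\Mo^{(1)}$ and $\Mo^{(2)}$ are compatible. Then they possess a joint measurement $\Jo$, and trivially $\Jo \in \obs(\state)$, so Proposition \ref{prop:compatible} applies with $\theory = \obs(\state)$, giving
\begin{align*}
\bar{P}(\Mo^{(1)}, \Mo^{(2)}) \leq \frac{1}{2}\left(1 + \frac{\lmax(\state)}{m_1 m_2}\right).
\end{align*}

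Next I would invoke Eq. \eqref{eq:connection}, namely $\bar{P}(\Mo^{(1)},\Mo^{(2)}) = \tfrac{1}{h(m_1,m_2)}\lmax(\Ho^{(1,2)})$, together with the explicit value of the harmonic mean of two numbers, $h(m_1,m_2) = \tfrac{2m_1m_2}{m_1+m_2}$. Substituting and multiplying through by $h(m_1,m_2)$ yields
\begin{align*}
\lmax(\Ho^{(1,2)}) \leq \frac{m_1m_2}{m_1+m_2}\left(1 + \frac{\lmax(\state)}{m_1m_2}\right) = \frac{m_1m_2 + \lmax(\state)}{m_1+m_2}.
\end{align*}
This is precisely the negation of the hypothesis \eqref{eq:incompatible}, so the assumption of compatibility is untenable, and $\Mo^{(1)}$, $\Mo^{(2)}$ must be incompatible.

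There is essentially no hard step here: the corollary is a repackaging of Proposition \ref{prop:compatible} in terms of the harmonic approximate joint measurement rather than in terms of $\bar{P}$, and the only thing requiring any care is the elementary algebra reconciling $h(m_1,m_2)$, $\kappa(m_1,m_2)$, and the fraction $\tfrac{m_1m_2+\lmax(\state)}{m_1+m_2}$. If one prefers a self-contained presentation, one could also just quote the final displayed inequality in the proof of Proposition \ref{prop:compatible}, which already reads $h(m_1,m_2)\cdot\bar{P}(\Mo^{(1)},\Mo^{(2)}) \leq \tfrac{1}{\kappa(m_1,m_2)}(\lmax(\state)+m_1m_2)$ when $\Jo\in\obs(\state)$, and then use $\lmax(\Ho^{(1,2)}) = h(m_1,m_2)\cdot\bar{P}(\Mo^{(1)},\Mo^{(2)})$ and $h(m_1,m_2)/\kappa(m_1,m_2) = 1/(m_1+m_2)$ to reach the bound directly.
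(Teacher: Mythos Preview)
Your proposal is correct and matches the paper's own reasoning: the corollary is stated there as an immediate reformulation of Proposition~\ref{prop:compatible} with $\theory=\obs(\state)$, translated into $\lmax(\Ho^{(1,2)})$ via Eq.~\eqref{eq:connection}, which is precisely the contrapositive argument you wrote out. The algebra with $h(m_1,m_2)=\tfrac{2m_1m_2}{m_1+m_2}$ is handled correctly.
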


We remark that the previous incompatibility test becomes useless for large $m_1$ and $m_2$. Namely, $\lmax(\Ho^{(1,2)}) \leq  \lmax(\state)$ so that a necessary requirement that the inequality in Eq. \eqref{eq:incompatible} can hold for some $\Mo^{(1)}$ and $\Mo^{(2)}$ is that 
\begin{align}\label{eq:useful}
\lmax(\state) > \frac{m_1m_2}{m_1 + m_2 -1} \, .
\end{align}
In the usual QRAC scenario, i.e., when $m_1=m_2=d$, where $d$ is the operational dimension of the theory, from the fact that $\lmax(\state)\geq d$ it follows that Eq. \eqref{eq:useful} holds for all theories with $d>1$. Furthermore, if $m_1=m_2 = \lmax(\state)>1$, then Eq. \eqref{eq:useful} also holds. Some different types of examples as well as examples where the random access test does not detect incompatibility of some pairs of measurements in the quantum case have been presented in \cite{CaHeTo20}.

%%%%%%%%%%%%%%%%%
\section{Maximal incompatibility of dichotomic measurements} \label{sec:MAX}
%%%%%%%%%%%%%%%%%

There are various ways to quantify the level of incompatibility of two measurements \cite{DeFaKa19}. As introduced in \cite{BuHeScSt13}, the \emph{degree of incompatibility} $d(\Mo^{(1)},\Mo^{(2)})$ of two measurements $\Mo^{(1)}$ and $\Mo^{(2)}$ is defined as the maximal value $\lambda \in [0,1]$ such that the noisy versions $\lambda \Mo^{(1)} + (1-\lambda) \To^{(1)}$ and $\lambda \Mo^{(2)} + (1-\lambda) \To^{(2)}$ of $\Mo^{(1)}$ and $\Mo^{(2)}$ are compatible for some choices of trivial measurements $\To^{(1)}$ and $\To^{(2)}$. This quantity has a universal lower bound $d(\Mo^{(1)}, \Mo^{(2)}) \geq \half$, and two measurements $\Mo^{(1)}$ and $\Mo^{(2)}$ are called \emph{maximally incompatible} if $d(\Mo^{(1)}, \Mo^{(2)}) = \half$. For instance, the canonical position and momentum measurements in quantum theory are maximally incompatible \cite{HeScToZi14}. In the following we concentrate on maximal incompatibilty of two dichotomic measurements. In quantum theory it is known that no such pairs exist \cite{BuHeScSt13}.

We say that a dichotomic measurement $\Mo$ with outcomes $+$ and $-$ discriminates two sets of states $S_+, S_- \subset \state$ if $\Mo_+(s_+) = 1$ for all $s_+ \in S_+$ and $\Mo_+(s_-)=0$ for all $s_- \in S_-$. This concept allows now to formulate the following result.

\begin{proposition}\label{prop:2-2-maximal-prob}
Let $\Mo^{(1)}$ and $\Mo^{(2)}$ be two dichotomic measurements. Then $\bar{P}(\Mo^{(1)}, \Mo^{(2)}) =1$ if and only if there are four states $s_1,s_2,s_3,s_4 \in \state$ such that $\Mo^{(1)}$ discriminates $\{s_1,s_2\}$ and $\{s_3,s_4\}$ and $\Mo^{(2)}$ discriminates $\{s_1, s_4\}$ and $\{s_2,s_3\}$.
\end{proposition}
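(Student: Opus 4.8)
The plan is to compute $\bar{P}(\Mo^{(1)},\Mo^{(2)})$ directly from its definition \eqref{eq:P-def}. Since both measurements are dichotomic, with outcomes $+$ and $-$, that formula reads
\begin{align*}
\bar{P}(\Mo^{(1)},\Mo^{(2)}) = \frac{1}{8} \sum_{x_1,x_2 \in \{+,-\}} \no{\Mo^{(1)}_{x_1} + \Mo^{(2)}_{x_2}} \, ,
\end{align*}
a sum of four terms. Because effects take values in $[0,1]$, for every state $s$ one has $(\Mo^{(1)}_{x_1} + \Mo^{(2)}_{x_2})(s) \le 2$, hence each norm is at most $2$ and therefore $\bar{P}(\Mo^{(1)},\Mo^{(2)}) \le 1$, with equality if and only if all four norms equal $2$.

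The main step is to translate the condition ``$\no{\Mo^{(1)}_{x_1} + \Mo^{(2)}_{x_2}} = 2$'' into the language of states. Using $\no{e} = \max_{s\in\state} e(s)$ for $e \in \vs^*_+$ together with compactness of $\state$, this norm equals $2$ exactly when there is a state $s$ with $\Mo^{(1)}_{x_1}(s) + \Mo^{(2)}_{x_2}(s) = 2$, and since each effect is bounded by $1$ this forces $\Mo^{(1)}_{x_1}(s) = \Mo^{(2)}_{x_2}(s) = 1$. Thus $\bar{P}(\Mo^{(1)},\Mo^{(2)}) = 1$ if and only if for every pair $(x_1,x_2) \in \{+,-\}^2$ there is a state $s_{x_1 x_2}$ with $\Mo^{(1)}_{x_1}(s_{x_1 x_2}) = \Mo^{(2)}_{x_2}(s_{x_1 x_2}) = 1$.

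Finally I would match this with the discrimination conditions. Since $\Mo^{(1)}_+ + \Mo^{(1)}_- = u$, we have $\Mo^{(1)}_+(s) = 1 \iff \Mo^{(1)}_-(s) = 0$, and likewise for $\Mo^{(2)}$. Hence, setting $s_1 := s_{++}$, $s_2 := s_{+-}$, $s_3 := s_{--}$, $s_4 := s_{-+}$, one reads off that $\Mo^{(1)}_+$ equals $1$ on $s_1,s_2$ and $0$ on $s_3,s_4$, while $\Mo^{(2)}_+$ equals $1$ on $s_1,s_4$ and $0$ on $s_2,s_3$; that is, $\Mo^{(1)}$ discriminates $\{s_1,s_2\}$ and $\{s_3,s_4\}$ and $\Mo^{(2)}$ discriminates $\{s_1,s_4\}$ and $\{s_2,s_3\}$, as claimed. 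Conversely, given four states with exactly these discrimination properties, the states $s_1,s_2,s_3,s_4$ furnish, in this order, witnesses saturating $\no{\Mo^{(1)}_+ + \Mo^{(2)}_+}$, $\no{\Mo^{(1)}_+ + \Mo^{(2)}_-}$, $\no{\Mo^{(1)}_- + \Mo^{(2)}_-}$, $\no{\Mo^{(1)}_- + \Mo^{(2)}_+}$ at the value $2$, so $\bar{P}(\Mo^{(1)},\Mo^{(2)}) = 1$.

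There is no real analytic difficulty here; the one point needing care is the combinatorial bookkeeping in the last step, namely keeping track of which of the two sets in each discrimination pair is assigned the outcome $+$ and which the outcome $-$, so that the four witness states are correctly paired with the four terms of the sum. (One should also keep in mind that ``discriminates'' is defined with a fixed ordering of its two set arguments, so the two measurements play slightly asymmetric roles in the statement.)
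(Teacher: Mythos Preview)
Your proposal is correct and follows essentially the same approach as the paper's proof: both expand $\bar{P}(\Mo^{(1)},\Mo^{(2)})$ into four norm terms, bound each by $2$, and identify equality with the existence of four witness states satisfying the stated discrimination conditions. The paper additionally rewrites everything in terms of $\Mo^{(1)}_+$ and $\Mo^{(2)}_+$ alone to obtain a ``$4+$ four suprema'' expression before reading off the equality conditions, whereas you argue directly that each norm equals $2$ iff some state makes both effects equal $1$; this is a minor streamlining, not a different idea.
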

\begin{proof}
Let $\Mo^{(1)}$ and $\Mo^{(2)}$ be dichotomic measurements with effects $\Mo^{(1)}_+, \Mo^{(1)}_-$ and $\Mo^{(2)}_+, \Mo^{(2)}_-$ respectively. Since $\Mo^{(1)}_+ + \Mo^{(1)}_- = u$, $\Mo^{(2)}_+ + \Mo^{(2)}_- = u$ and $u(s) =1$ for all $s \in \state$, we can express the success probability $\bar{P}(\Mo^{(1)}, \Mo^{(2)})$ as follows:
\begin{align*}
\bar{P}(\Mo^{(1)}, \Mo^{(2)}) &= \frac{1}{8} \left[ \no{\Mo^{(1)}_+ + \Mo^{(2)}_+} + \no{\Mo^{(1)}_+ + \Mo^{(2)}_-} \right.   \\
& \quad \quad \quad \left. + \no{\Mo^{(1)}_- + \Mo^{(2)}_+}+ \no{\Mo^{(1)}_- + \Mo^{(2)}_-}\right]   \\
&= \frac{1}{8} \left[ \sup_{s_1 \in \state} (\Mo^{(1)}_+ + \Mo^{(2)}_+)(s_1) + \sup_{s_2 \in \state} (u+\Mo^{(1)}_+- \Mo^{(2)}_+)(s_2) \right.   \\
& \quad \quad \left. + \sup_{s_3 \in \state} (u-\Mo^{(1)}_+ + \Mo^{(2)}_+)(s_3)+ \sup_{s_4 \in \state} (2u-\Mo^{(1)}_+ - \Mo^{(2)}_+)(s_4)\right]   \\
&= \frac{1}{8} \left[4  + \sup_{s_1 \in \state}(\Mo^{(1)}_+ + \Mo^{(2)}_+)(s_1) + \sup_{s_2 \in \state}(\Mo^{(1)}_+ - \Mo^{(2)}_+)(s_2) \right.   \\
& \quad \quad \left. + \sup_{s_3 \in \state}(-\Mo^{(1)}_+ + \Mo^{(2)}_+)(s_3)+ \sup_{s_4 \in \state}(-\Mo^{(1)}_+ - \Mo^{(2)}_+)(s_4)\right]\, .
\end{align*}
Since $\Mo^{(1)}_+$ and $\Mo^{(2)}_+$ are effects, we clearly have that
\begin{align*}
& \sup_{s_1 \in \state}(\Mo^{(1)}_+ + \Mo^{(2)}_+)(s_1) \in [0,2], \quad \ \, \quad  \sup_{s_2 \in \state}(\Mo^{(1)}_+ - \Mo^{(2)}_+)(s_2) \in [-1,1], \\
 &\sup_{s_3 \in \state}(-\Mo^{(1)}_+ + \Mo^{(2)}_+)(s_3)\in [-1,1], \quad  \sup_{s_4 \in \state}(-\Mo^{(1)}_+ - \Mo^{(2)}_+)(s_4) \in [-2,0].
\end{align*}
Now it is clear that $\bar{P}(\Mo^{(1)}, \Mo^{(2)})=1$ if and only if there are four states $s_1,s_2,s_3,s_4$ such that 
\begin{align*}
(\Mo^{(1)}_+ + \Mo^{(2)}_+)(s_1) &=2, \quad (\Mo^{(1)}_+ - \Mo^{(2)}_+)(s_2) =1, \\
(-\Mo^{(1)}_+ + \Mo^{(2)}_+)(s_4) &=1, \quad  (-\Mo^{(1)}_+ - \Mo^{(2)}_+)(s_3) =0,
\end{align*}
which holds if and only if 
\begin{align} 
\Mo^{(1)}_+(s_1) = \Mo^{(1)}_+(s_2) = 1, \quad \Mo^{(1)}_+(s_3) = \Mo^{(1)}_+(s_4)=0, \label{eq:discrimination1} \\
\Mo^{(2)}_+(s_1) = \Mo^{(2)}_+(s_4) = 1, \quad \Mo^{(2)}_+(s_2) = \Mo^{(2)}_+(s_3)=0. \label{eq:discrimination2}
\end{align}
\end{proof}

In \cite{JePl17} it was shown that two dichotomic measurements $\Mo^{(1)}$ and $\Mo^{(2)}$ are maximally incompatible if and only if Eq. \eqref{eq:discrimination1} and \eqref{eq:discrimination2} hold for states $s_1,s_2,s_3,s_4 \in \state$ such that $\half (s_1 +s_3) = \half (s_2+s_4)$. Equivalently, \emph{two dichotomic measurements $\Mo^{(1)}$ and $\Mo^{(2)}$ are maximally incompatible if and only if there is an affine subspace $\mathcal{K} \subset \aff{\state}$ such that $\mathcal{F}=\mathcal{K} \cap \state$ is a parallelogram and $\Mo^{(1)}$ and $\Mo^{(2)}$ discriminate the opposite edges of $\mathcal{F}$}. By a parallelogram we mean a 2-dimensional convex body that is a convex hull of its four vertices $s_1, s_2, s_3, s_4$ such that $\half (s_1 +s_3) = \half (s_2+s_4)$. In this case measurements $\Mo^{(1)}$ and $\Mo^{(2)}$ discriminate the opposite edges of the parallelogram if $\Mo^{(1)}$ discriminates $\{s_1,s_2\}$ and $\{s_3,s_4\}$ and $\Mo^{(2)}$ discriminates $\{s_1, s_4\}$ and $\{s_2,s_3\}$. Thus, from Prop. \ref{prop:2-2-maximal-prob} and the aforementioned result from \cite{JePl17} we see that \emph{two maximally incompatible dichotomic measurements $\Mo^{(1)}$ and $\Mo^{(2)}$ satisfy $\bar{P}(\Mo^{(1)},\Mo^{(2)}) =1$}. In particular, we can conclude the following: 

\begin{corollary}
If $\bar{P}(\Mo^{(1)},\Mo^{(2)})<1$ for two dichotomic measurements $\Mo^{(1)}$ and $\Mo^{(2)}$, then they are not maximally incompatible.
\end{corollary}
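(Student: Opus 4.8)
The plan is to prove the contrapositive: if $\Mo^{(1)}$ and $\Mo^{(2)}$ are maximally incompatible dichotomic measurements, then $\bar{P}(\Mo^{(1)},\Mo^{(2)}) = 1$. In fact the corollary is precisely the contrapositive of the italicized statement established just above (``two maximally incompatible dichotomic measurements satisfy $\bar{P}(\Mo^{(1)},\Mo^{(2)})=1$''), so the work is to assemble Prop. \ref{prop:2-2-maximal-prob} with the characterization of maximal incompatibility from \cite{JePl17}.

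First I would recall the result of \cite{JePl17}: two dichotomic measurements $\Mo^{(1)}$ and $\Mo^{(2)}$ are maximally incompatible if and only if there exist states $s_1,s_2,s_3,s_4 \in \state$ satisfying the discrimination conditions \eqref{eq:discrimination1} and \eqref{eq:discrimination2} together with the parallelogram relation $\tfrac12(s_1+s_3) = \tfrac12(s_2+s_4)$. Assuming maximal incompatibility, we thus obtain four states with all of these properties. Next I would observe that these four states already satisfy every hypothesis appearing in Prop. \ref{prop:2-2-maximal-prob}: $\Mo^{(1)}$ discriminates $\{s_1,s_2\}$ and $\{s_3,s_4\}$, and $\Mo^{(2)}$ discriminates $\{s_1,s_4\}$ and $\{s_2,s_3\}$, the parallelogram relation being an extra constraint that is not needed here. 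By Prop. \ref{prop:2-2-maximal-prob} it follows that $\bar{P}(\Mo^{(1)},\Mo^{(2)}) = 1$, and taking the contrapositive gives the corollary.

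There is no real obstacle here, since the statement is a direct consequence of material already in the section; the only point worth a remark is that the implication cannot be reversed. Indeed, $\bar{P}(\Mo^{(1)},\Mo^{(2)}) = 1$ only guarantees the discrimination conditions \eqref{eq:discrimination1}--\eqref{eq:discrimination2}, and without the additional parallelogram relation $\tfrac12(s_1+s_3) = \tfrac12(s_2+s_4)$ on the discriminated states one cannot conclude maximal incompatibility. Hence the result is stated as a one-directional implication rather than an equivalence.
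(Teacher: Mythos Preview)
Your proposal is correct and mirrors the paper's own reasoning exactly: the corollary is simply the contrapositive of the italicized statement derived just before it, which itself combines Prop.~\ref{prop:2-2-maximal-prob} with the Jen\v{c}ov\'a--Pl\'avala characterization of maximal incompatibility. Your additional remark on why the converse fails also anticipates the paper's subsequent discussion.
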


On the other hand, in the case that the states in Prop. \ref{prop:2-2-maximal-prob} can be chosen to be affinely dependent, we can show also the inverse of the previous statement. To see this, let us consider the case of Prop. \ref{prop:2-2-maximal-prob} where the states $\{s_1,s_2,s_3,s_4\}$ are affinely dependent. First, let us note that from the discrimination of the sets $\{s_1,s_2\}$ and $\{s_3,s_4\}$ together with the discrimination of the sets $\{s_1,s_4\}$ and $\{s_2,s_3\}$ it follows that all the states are indeed different and must lie on the boundary of $\state$ so that in particular $\dim(\aff{\{s_1,s_2,s_3,s_4\}}) \geq 2$. Now, since $\{s_1,s_2,s_3,s_4\}$ are affinely dependent, we must have that $\dim(\aff{\{s_1,s_2,s_3,s_4\}}) =2$. In this case, in order for the discrimination of states to be possible, we must also have that 
\begin{align*}
\aff{\{s_1,s_2,s_3,s_4\}} & =\aff{\{s_1,s_2,s_3\}} = \aff{\{s_1,s_2,s_4\}}\\
& = \aff{\{s_1,s_3,s_4\}}= \aff{\{s_2,s_3,s_4\}},
\end{align*}
which in particular means that the states $s_2,s_3,s_4$ must form an affine basis of $\aff{\{s_1,s_2,s_3,s_4\}}$. Hence, we must have that $s_1 = \alpha_2 s_2 + \alpha_3 s_3 + \alpha_4 s_4$ for some $\alpha_2, \alpha_3, \alpha_4 \in \real$ such that $\alpha_2 + \alpha_3+ \alpha_4=1$. From Eq. \eqref{eq:discrimination1} and \eqref{eq:discrimination2} it follows that $\alpha_2=1$, $\alpha_3 =-1$ and $\alpha_4=1$ so that $\half(s_1+s_3)=\half(s_2+s_4)$. By the result of \cite{JePl17}, we then must have that $\Mo^{(1)}$ and $\Mo^{(2)}$ are maximally incompatible. We then arrive at the following result.

\begin{proposition}
Two dichotomic measurements $\Mo^{(1)}$ and $\Mo^{(2)}$ are maximally incompatible if and only if $\bar{P}(\Mo^{(1)}, \Mo^{(2)}) =1$, where the optimal states can be chosen to be affinely dependent.
\end{proposition}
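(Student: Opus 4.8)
The plan is to obtain the statement by combining Proposition~\ref{prop:2-2-maximal-prob}, the characterization of maximal incompatibility of two dichotomic measurements from \cite{JePl17}, and the affine-geometric analysis already carried out in the paragraphs preceding the statement. Recall that Proposition~\ref{prop:2-2-maximal-prob} identifies the equality $\bar{P}(\Mo^{(1)},\Mo^{(2)})=1$ with the existence of a quadruple $s_1,s_2,s_3,s_4\in\state$ satisfying Eqs.~\eqref{eq:discrimination1}--\eqref{eq:discrimination2}, and that such a quadruple is exactly a choice of optimal states witnessing $\bar{P}(\Mo^{(1)},\Mo^{(2)})=1$. Hence it suffices to prove that $\Mo^{(1)}$ and $\Mo^{(2)}$ are maximally incompatible if and only if some quadruple satisfying Eqs.~\eqref{eq:discrimination1}--\eqref{eq:discrimination2} is, in addition, affinely dependent.

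For the direction starting from maximal incompatibility I would simply invoke \cite{JePl17}: it furnishes a quadruple satisfying Eqs.~\eqref{eq:discrimination1}--\eqref{eq:discrimination2} with $\half(s_1+s_3)=\half(s_2+s_4)$, equivalently $s_1-s_2+s_3-s_4=0$. Since the coefficients $1,-1,1,-1$ sum to zero and are not all zero, this is a nontrivial affine relation, so the quadruple is affinely dependent; and by Proposition~\ref{prop:2-2-maximal-prob} it witnesses $\bar{P}(\Mo^{(1)},\Mo^{(2)})=1$ with this affinely dependent choice of optimal states.

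For the converse I would reproduce the argument sketched before the statement. Assume $\bar{P}(\Mo^{(1)},\Mo^{(2)})=1$ and pick an affinely dependent choice of optimal states $s_1,s_2,s_3,s_4$; by Proposition~\ref{prop:2-2-maximal-prob} it satisfies Eqs.~\eqref{eq:discrimination1}--\eqref{eq:discrimination2}, and, as already noted, the discrimination conditions force the four states to be pairwise distinct. The key point is that no three of them are affinely dependent: in each of the four triples, two members lie in one level set of $\Mo^{(1)}_+$ (either the set where it equals $1$ or the set where it equals $0$) while the third lies in the complementary one, so an affine dependence within the triple --- which, the states being distinct, would express the member lying alone in its level set as a proper affine combination of the other two --- would force that member into their common level set, contradicting affinity of $\Mo^{(1)}_+$. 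Consequently $\aff{\{s_1,s_2,s_3,s_4\}}$ is $2$-dimensional with $s_2,s_3,s_4$ an affine basis, so $s_1=\alpha_2 s_2+\alpha_3 s_3+\alpha_4 s_4$ for some $\alpha_2,\alpha_3,\alpha_4$ with $\alpha_2+\alpha_3+\alpha_4=1$; evaluating $\Mo^{(1)}_+$ and then $\Mo^{(2)}_+$ on both sides and using Eqs.~\eqref{eq:discrimination1}--\eqref{eq:discrimination2} gives $\alpha_2=1$ and $\alpha_4=1$, whence $\alpha_3=-1$. Thus $\half(s_1+s_3)=\half(s_2+s_4)$, and \cite{JePl17} yields that $\Mo^{(1)}$ and $\Mo^{(2)}$ are maximally incompatible.

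The only step carrying genuine content --- everything else being a direct appeal to Proposition~\ref{prop:2-2-maximal-prob} and \cite{JePl17} --- is checking that a discriminating quadruple admits no affine dependence other than $s_1+s_3=s_2+s_4$; concretely this is the exclusion of collinear triples above, which is precisely what licenses taking $s_2,s_3,s_4$ as an affine basis and reading off the coefficients of $s_1$. I expect that to be where some care is needed, although it is not hard.
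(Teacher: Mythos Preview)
Your proposal is correct and follows essentially the same route as the paper: both directions hinge on Proposition~\ref{prop:2-2-maximal-prob} together with the parallelogram characterization from \cite{JePl17}, and the converse reduces to showing that an affinely dependent discriminating quadruple must satisfy $s_1+s_3=s_2+s_4$. Your exclusion of collinear triples via the level sets of $\Mo^{(1)}_+$ is a slightly more explicit version of the paper's remark that ``in order for the discrimination of states to be possible'' every three-element subset must span the full two-dimensional affine hull; the subsequent extraction of the coefficients $\alpha_2=\alpha_4=1$, $\alpha_3=-1$ is identical.
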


In the first nonclassical polygon state space, the square state space $\state_4$, it is known that the measurements $\Mo^{(1)}$ and $\Mo^{(2)}$ defined by setting $\Mo^{(1)}_+ =e_1$ and $\Mo^{(2)}_+ =e_2$ (see Sec. \ref{sec:POLY} for more details) are maximally incompatible \cite{BuHeScSt13}. Indeed, one can check that then $\Mo^{(1)}$ discriminates $\{s_1,s_2\}$ and $\{s_3,s_4\}$ and $\Mo^{(2)}$ discriminates $\{s_1, s_4\}$ and $\{s_2,s_3\}$, where the states $s_1, \ldots, s_4$ are the affinely dependent pure states of the square state space. In similar fashion, one can construct maximally incompatible dichotomic measurements on theories whose state spaces are hypercubes of any dimension.

On the other hand, in the following example we demonstrate that the affine dependence of the optimal states is indeed necessary for measurements $\Mo^{(1)}$ and $\Mo^{(2)}$ with $\bar{P}(\Mo^{(1)}, \Mo^{(2)}) =1$ to be maximally incompatible.

\begin{example}[Tetrahedron]
Let $\state = \conv{\{s_1,s_2,s_3,s_4\}}$, where $s_1$, $s_2$, $s_2$, $s_3$ are the vertices of a regular tetrahedron. Since tetrahedron is a simplex, $\state$ is a classical state space with $d=4$ distinguishable pure states. Let $\Mo$ be the (unique) measurement that distinguishes the states $s_1,s_2,s_3,s_4$, i.e., $\Mo_i(s_j) = \delta_{ij}$ for all $i,j \in \{1,2,3,4\}$. We can define two dichotomic measurements $\Mo^{(1)}$ and $\Mo^{(2)}$ by setting $\Mo^{(1)}_+ = \Mo_1 + \Mo_2$ and $\Mo^{(2)}_+ = \Mo_1+\Mo_4$, and see that they satisfy Prop. \ref{prop:2-2-maximal-prob} only with the affinely independent pure states $s_1,s_2,s_3,s_4$. However, as $\state$ is a classical state space, all measurements on $\state$ are compatible so in particular $\Mo^{(1)}$ and $\Mo^{(2)}$ are not maximally incompatible.
\end{example}

%%%%%%%%%%%%%%%%%%%%
\section{Random access tests in polygon state spaces}\label{sec:POLY}
%%%%%%%%%%%%%%%%%%%%

\subsection{Polygon theories}

Folowing \cite{JaGoBaBr11} we define a regular $n$-sided polygon (or $n$-gon) state space $\state_n$ embedded in $\real^3$ as the convex hull of its $n$ extreme points
\begin{align*}
s_j=
\begin{pmatrix}
r_n \cos\left(\dfrac{2  j \pi}{n}\right) \\
r_n \sin\left(\dfrac{2 j \pi }{n}\right) \\
1
\end{pmatrix}, \quad j = 1,\ldots,n,
\end{align*}
where we have defined $r_n = \sqrt{\sec\left( \frac{\pi}{n}\right)}$. For $n=2$ the state space is a line segment which is the state space of the bit, i.e., the 2-dimensional classical system, and for $n=3$ the state space is a triangle which is the state space of the trit, i.e., the 3-dimensional classical system, while for all $n\geq 4$ we get nonclassical state spaces with the first one being the square state space.

Clearly, we now have the zero effect $o = (0, 0, 0)^T$ and the unit effect $u = (0, 0, 1)^T$. Depending on the parity of $n$, the effect space can have different structures (see \cite{FiHeLe18,HeLePl19} for details). Let us denote $s_0 = (0,0,1)^T$. For even $n$ the effect space $\effect(\state_n)$ has $n$ nontrivial extreme points,
\begin{align*}
e_k= \dfrac{1}{2}
\begin{pmatrix}
r_n \cos\left(\dfrac{(2k-1) \pi }{n}\right) \\
r_n \sin\left(\dfrac{(2k-1) \pi }{n}\right) \\
1
\end{pmatrix}, \quad k = 1,\ldots,n,
\end{align*}
so that $\effect(\state_n) = \conv{\{o,u,e_1, \ldots,e_n\}}$. 
All the nontrivial extreme effects lie on a single plane determined by those points $e$ such that $e(s_0)=1/2$.

In the case of odd $n$, the effect space has $2n$ nontrivial extreme effects,
\begin{align*}
g_k
= \dfrac{1}{1+r^2_n}
\begin{pmatrix}
r_n \cos\left(\dfrac{2k \pi }{n}\right) \\
r_n \sin\left(\dfrac{2k \pi }{n}\right) \\
1
\end{pmatrix}, \quad \quad f_k = u-g_k ,
\end{align*}
for $k = 1,\ldots,n$. Now $\effect(\state_n) = \conv{\{o,u, g_1, \ldots, g_n, f_1 ,\ldots, f_n\}}$ and the nontrivial effects are scattered on two different planes determined by all those points $g$ and $f$ such that $g(s_0) =  \frac{1}{1+r^2_n}$ and $f(s_0)  = \frac{r^2_n}{1+r^2_n}$. The first few polygons and their effect spaces are depicted in Fig. \ref{fig:polygons}.

\begin{figure}[t]
\centering
\includegraphics[width=\textwidth]{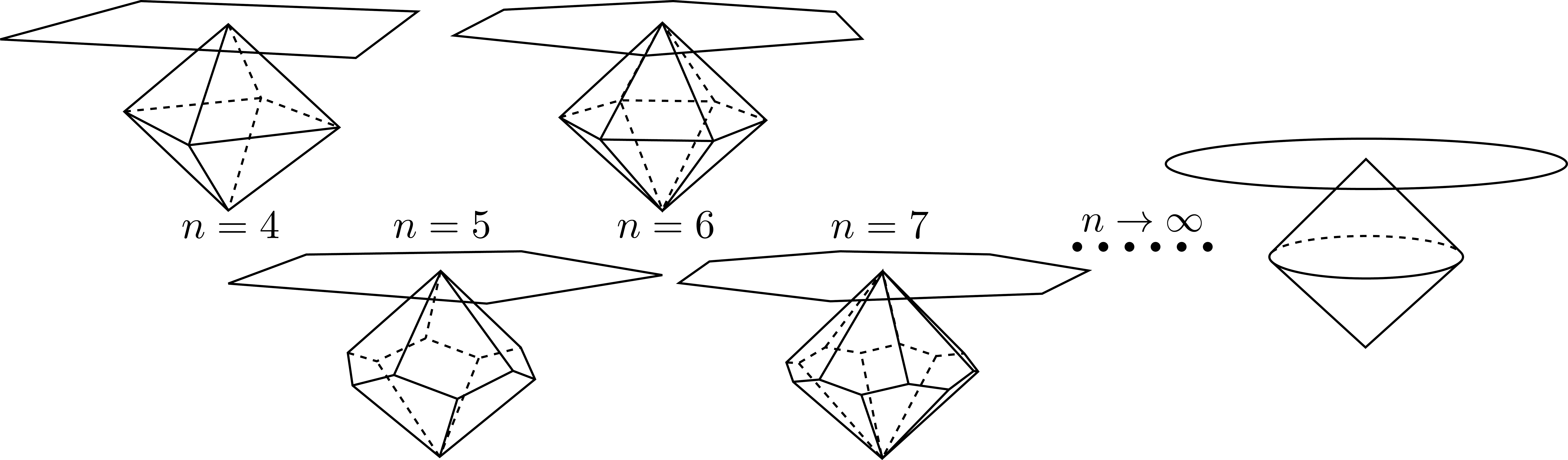}
\caption{\label{fig:polygons} Some of the first few even and odd polygon state spaces and their effect spaces. As the number of vertices increases, in both even and odd cases, the state and effect spaces start to resemble a disc and a circular bicone, respectively, which are the state and effect space of the real qubit system (which we discuss in detail later).  }
\end{figure}

Since $e \in \effect^{ext}_{ind}(\state_n)$ if and only if $e = e_k$ for some $k \in \{1, \ldots, n\}$ when $n$ is even, and $e = g_k$ for some $k \in \{1, \ldots, n\}$ when $n$ is odd, and because $e_k(s_0) = \half$ and $g_k(s_0) = 1/(1+r^2_n)$ for all $k \in \{1, \ldots,n\}$, we  can use Propositions \ref{prop:lmax-pp} and \ref{prop:lambda-max-constant} to make the following Corollary:

\begin{corollary}\label{cor:lmax-polygon}
$\lmax(\state_n) = 2$ when $n$ is even and $\lmax(\state_n)=1+\sec \left(\frac{\pi}{n}\right) >2$ when $n$ is odd. 
\end{corollary}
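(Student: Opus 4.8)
The plan is to apply Proposition \ref{prop:lambda-max-constant} directly, with the distinguished state taken to be the centre $s_0 = (0,0,1)^T$ of the polygon. First I would check that $s_0$ is genuinely a state: since $\sum_{j=1}^n \cos\!\big(\tfrac{2j\pi}{n}\big) = \sum_{j=1}^n \sin\!\big(\tfrac{2j\pi}{n}\big) = 0$, we have $\tfrac{1}{n}\sum_{j=1}^n s_j = (0,0,1)^T = s_0$, so $s_0 \in \state_n$.

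Next I would verify the hypothesis of Proposition \ref{prop:lambda-max-constant}, namely that every extreme indecomposable effect evaluates to one and the same number on $s_0$. Here I use the structural description of the polygon effect spaces recalled just before the Corollary (and cited from \cite{FiHeLe18,HeLePl19}): $\effect^{ext}_{ind}(\state_n) = \{e_1,\ldots,e_n\}$ when $n$ is even, while $\effect^{ext}_{ind}(\state_n) = \{g_1,\ldots,g_n\}$ when $n$ is odd — note that for odd $n$ the remaining pure effects $f_k = u - g_k$ are decomposable and hence do not enter the hypothesis. From the explicit coordinate formulas, only the third coordinate of an effect contributes to its pairing with $s_0 = (0,0,1)^T$, so $e_k(s_0) = \tfrac{1}{2}$ for all $k$ in the even case and $g_k(s_0) = \tfrac{1}{1+r_n^2}$ for all $k$ in the odd case. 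Thus the constant $\lambda_0$ of Proposition \ref{prop:lambda-max-constant} is $\tfrac{1}{2}$ for even $n$ and $\tfrac{1}{1+r_n^2}$ for odd $n$.

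Finally I would invoke Proposition \ref{prop:lambda-max-constant} — whose conclusion rests on Eq. \eqref{eq:lmax-ind}, itself obtained from Propositions \ref{prop:lmax-pp} and \ref{prop:lmax-mix} — to get $\lmax(\state_n) = 1/\lambda_0$, i.e. $\lmax(\state_n) = 2$ for even $n$ and $\lmax(\state_n) = 1 + r_n^2 = 1 + \sec\!\big(\tfrac{\pi}{n}\big)$ for odd $n$, using $r_n^2 = \sec(\pi/n)$. The strict inequality $1 + \sec(\pi/n) > 2$ then follows because $0 < \pi/n < \pi/2$ for odd $n \geq 3$, so $\cos(\pi/n) \in (0,1)$ and hence $\sec(\pi/n) > 1$.

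I do not expect a real obstacle: essentially all the work is already packaged inside Proposition \ref{prop:lambda-max-constant}. The only genuinely needed inputs are the external classification of $\effect^{ext}_{ind}(\state_n)$ (in particular that the $f_k$ are \emph{not} indecomposable for odd polygons) and the elementary fact that the barycentre $s_0$ is a state; everything else is a one-line evaluation of the pairing.
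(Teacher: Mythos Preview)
Your proposal is correct and follows essentially the same route as the paper: identify $\effect^{ext}_{ind}(\state_n)$ as $\{e_k\}$ (even) or $\{g_k\}$ (odd), observe that these all take the constant value $\tfrac12$ resp.\ $\tfrac{1}{1+r_n^2}$ on the centre $s_0$, and then invoke Proposition~\ref{prop:lambda-max-constant} (together with Proposition~\ref{prop:lmax-pp}) to get $\lmax(\state_n)=1/\lambda_0$. Your added verifications that $s_0\in\state_n$ and that $\sec(\pi/n)>1$ are welcome details the paper leaves implicit.
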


We note that for the nonclassical polygon state spaces $\state_n$ with $n\geq 4$, in both even and odd cases, the maximal number of distinguishable pure states is two, i.e., $d=2$ in both cases. On the other hand, clearly for the classical cases $n=2$ and $n=3$ we have $d=2$ and $d=3$, respectively. Thus, from the above Corollary we conclude that 
\begin{quote}
\emph{all nonclassical odd polygon state spaces have super information storability.} 
\end{quote}

Also, we note that the suitable classical reference for polygons is the bit ($n=d=2$) whereas the trit ($n=d=3$) does not share many of the features of the other polygons so that we often exclude the case $n=3$ when we consider the properties of the polygons as a whole. Therefore, in order to follow the (Q)RAC-like scenario, we will next focus on the simplest RATs on the nonclassical polygon state spaces $\state_n$ with $n\geq 4$, i.e., RATs with two measurements which both have $d=2$ outcomes. 

\subsection{Maximum success probability for compatible measurements}

From Cor. \ref{cor:lmax-polygon} we see that for even $n$, we have that $\lambda_{max}(\state_n)=2=d$ so that the bound given in Prop. \ref{prop:compatible} for two compatible measurements is exactly the classical bound for $(2,2)$--RAC. 
Thus, 
\begin{quote}
\emph{in the case of even polygon theories the classical bound can be achieved with compatible measurements but violation of the classical bound can only be achieved with incompatible measurements, just like in quantum theory}. 
\end{quote}

However, for odd $n$, we have that $\lambda_{max}(\state_n) = 1+r^2_n >2 =d$ and we can explicitly construct two compatible dichotomic measurements $\Mo^{(1)}$ and $\Mo^{(2)}$ such that $\bar{P}(\Mo^{(1)},\Mo^{(2)}) > \frac{3}{4}= \bar{P}^{(2,2)}_{c}$. Let us consider one of the indecomposable extreme measurements $\Co \in \obs^{ext}_{ind}(\state_n)$ with effects
\begin{align*}
\Co_1 = g_1 \, , \quad \Co_2= \half r^2_n \,  g_{\frac{n+1}{2}} \, , \quad \Co_3= \half r^2_n \, g_{\frac{n+3}{2}} \, .
\end{align*}
By denoting the two outcomes of $\Mo^{(1)}$ and $\Mo^{(2)}$ by $+$ and $-$, we take 
\begin{align*}
&\Mo^{(1)}_+=\Co_1 = g_1,  \quad  && \Mo^{(1)}_- = \Co_2+\Co_3 = \half r^2_n(g_{\frac{n+1}{2}}+g_{\frac{n+3}{2}}),  \\
&\Mo^{(2)}_+ = \Co_1 +\Co_2 = g_1 +\half r^2_n g_{\frac{n+1}{2}}, \quad  && \Mo^{(2)}_-= \Co_3 = \half r^2_n g_{\frac{n+3}{2}}. 
\end{align*}

\begin{figure}[t]
\centering
\includegraphics[width=0.7\textwidth]{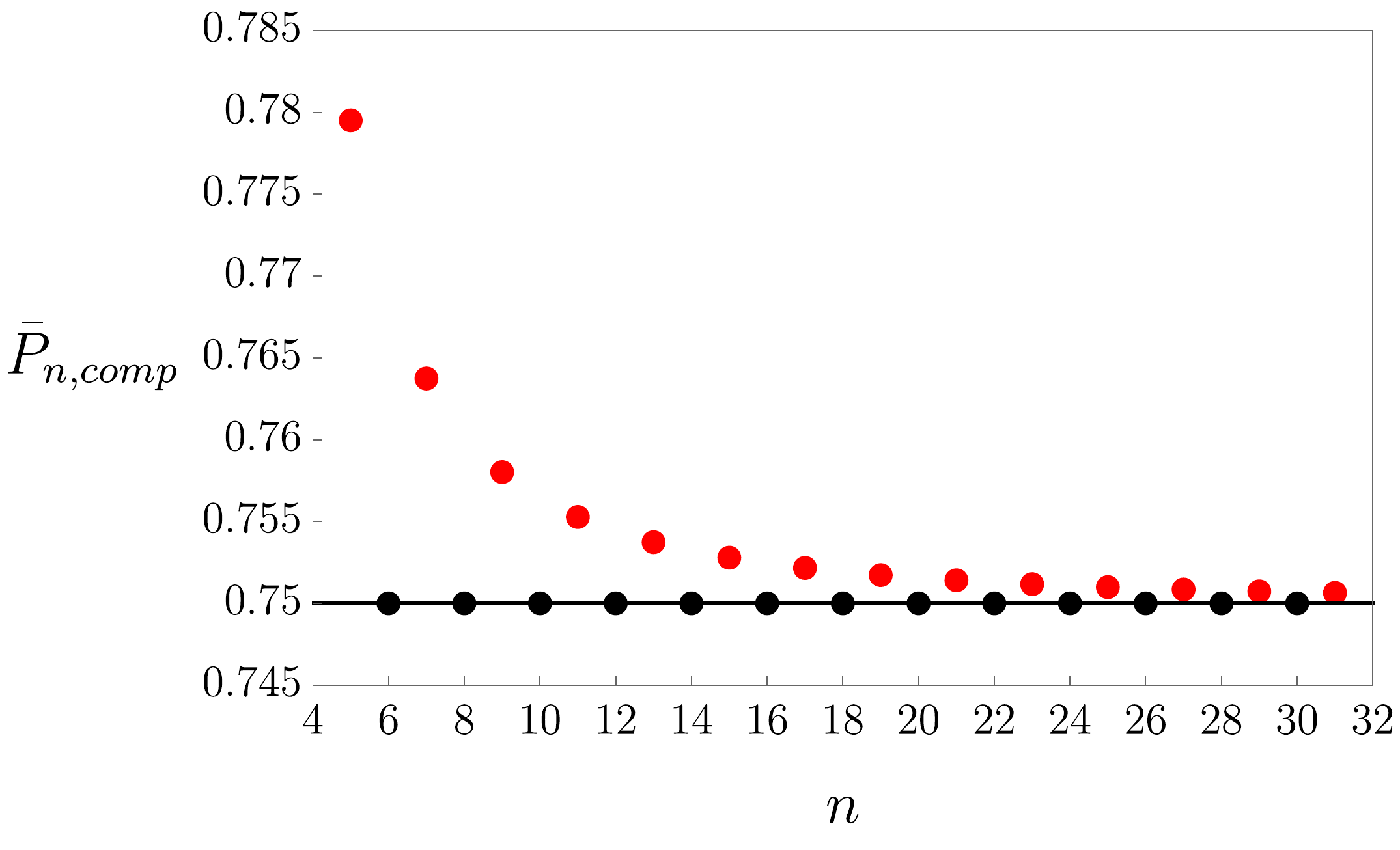}
\caption{\label{fig:polygon-comp-max} The maximum success probability $\bar{P}_{n, comp}$ of the random access test with two compatible dichotomic measurements on even (black dots) and odd (red dots) polygons as a function of the number of vertices $n$ of the polygon. The constant black line is the optimal success probability of the classical $(2,2)$--RAC.}
\end{figure}

Clearly $\Mo^{(1)}$ and $\Mo^{(2)}$ are compatible since they are both post-processings of $\Co$ and thus by Prop. \ref{prop:compatible} we have that
\begin{align*}
\bar{P}(\Mo^{(1)}, \Mo^{(2)}) \leq \frac{1}{2} \left( 1 + \frac{1+r^2_n}{4} \right).
\end{align*}
By using the states $s_1, s_{\frac{n+1}{2}}, s_{\frac{n+3}{2}}$ one can confirm that
\begin{align*}
\bar{P}(\Mo^{(1)}, \Mo^{(2)}) &= \frac{1}{8} \left( \no{\Mo^{(1)}_+ + \Mo^{(2)}_+} +\no{\Mo^{(1)}_+ + \Mo^{(2)}_-} \right. \\
& \quad \quad \quad \left. + \no{\Mo^{(1)}_- + \Mo^{(2)}_+}+ \no{\Mo^{(1)}_- + \Mo^{(2)}_-}\right) \\
&= \frac{1}{8} \left( \no{2 \Co_1 + \Co_2} + \no{\Co_1 + \Co_3} \right. \\
& \quad \quad \quad \left. + \no{\Co_1 + 2 \Co_2 + \Co_3} + \no{\Co_2 + 2 \Co_3} \right) \\
&\geq \frac{1}{8} \left[ (2 \Co_1 + \Co_2)(s_1) + (\Co_1 + \Co_3)(s_1) \right. \\
& \quad \quad  \left. + (\Co_1 + 2 \Co_2 + \Co_3)(s_{\frac{n+1}{2}}) + (\Co_2 + 2 \Co_3)(s_{\frac{n+3}{2}}) \right] \\
&= \frac{1}{2} \left( 1 + \frac{1+r^2_n}{4} \right).
\end{align*}
Therefore, by combining the above inequalities we see that the given states actually maximize the respective expressions and that 
\begin{align}\label{eq:odd-polygon-comp-max}
\bar{P}(\Mo^{(1)}, \Mo^{(2)})= \frac{1}{2} \left( 1 + \frac{1+r^2_n}{4} \right).
\end{align}
Since $\lambda_{max}(\state_n) =1+r^2_n > 2$, we conclude that 
\begin{align*}
\bar{P}(\Mo^{(1)},\Mo^{(2)}) > \frac{3}{4}= \bar{P}^{(2,2)}_{c} \, .
\end{align*}
Hence, 
\begin{quote}
\emph{in the case of odd polygon theories the classical bound can be surpassed with suitably chosen compatible measurements}. 
\end{quote}

We note that the right hand side of Eq. \eqref{eq:odd-polygon-comp-max} approaches the classical bound $\bar{P}^{(2,2)}_{c}=3/4$ as $n \to \infty$. The maximum success probability of a RAT for compatible pair of dichotomic measurements on polygon state spaces is illustrated in Fig. \ref{fig:polygon-comp-max}.

%%%%%%%%%%%%%%%%%%%%%%%%%%%%%%%%%%%%%%%
\subsection{Maximum success probability for incompatible measurements}
%%%%%%%%%%%%%%%%%%%%%%%%%%%%%%%%%%%%%%

\subsubsection{Rebit state space} 
The state space of \emph{rebit}, or real qubit, is defined otherwise similarly to the qubit but the field of complex numbers is replaced with the field of real numbers. Thus, the `Bloch ball' of the qubit is replaced the `Bloch disc' so that rebit can be seen as a restriction of the qubit. Formally, the pure states and the nontrivial extreme effects of rebit are of the form $s_\theta = (\cos \theta, \sin \theta,1)^T$ and $e_\theta=\half (\cos \theta, \sin \theta,1)^T$ for any $\theta \in [0,2\pi)$ respectively. The zero and the unit effect are the same as in polygons, i.e., $o=(0,0,0)^T$ and $u=(0,0,1)^T$. In many ways polygons state spaces can be thought as discretized versions of the rebit state space. The following analysis of the RAT with two dichotomic measurements on the rebit will be useful in our later analysis of the polygon theories. 

We will explicitly show that the maximum success probabililty $\bar{P}$ of the RAT with two dichotomic measurements on the rebit is the same as in the corresponding $(2,2)$--QRAC in qubit, i.e., $1/2 \left( 1+ 1/\sqrt{2} \right)$. Hence, let us consider a RAT with two dichotomic measurements $\Mo^{(1)}$ and $\Mo^{(2)}$. We denote $e:= \Mo^{(1)}_+$ and $f := \Mo^{(2)}_+$ and the average success probability can then be written in the form
\begin{align}
\bar{P}(\Mo^{(1)}, \Mo^{(2)}) &= \frac{1}{8} \left[ \sup_{t_1 \in \state} (e+f)(t_1) + \sup_{t_2 \in \state}(u-e+f)(t_2) +\right. \nonumber \\
&\ \ \ \left. \sup_{t_3 \in \state}(u-e+u-f)(t_3) + \sup_{t_4 \in \state}(e+u-f)(t_4)\right]. \label{eq:P-2-2}
\end{align}
As was shown in Sec. \ref{subsec:gpt-rac}, the average success probability $\bar{P}$ is maximized for extreme measurements. On a state space $\state$ this means that Eq. \eqref{eq:P-2-2} is maximized for some effects $e,f \in \effect^{ext}(\state)$. Furthermore, the sums of effects in Eq. \eqref{eq:P-2-2} are maximized for pure states so that we can also choose the optimal values for $t_1, t_2, t_3, t_4$ to be pure if needed.

Due to the symmetry of the rebit system we can freely choose $\Mo^{(1)}_+=e=e_0=\half(1,0,1)^T$. Let us denote $f=e_\theta$  for some $\theta \in [0,2 \pi)$ and $t_i = s_{\varphi_i}$ for some $\varphi_i \in [0,2\pi)$ for all $i \in \{1,2,3,4\}$. The optimal success probability for a RAT with two dichotomic measurements on a rebit system then reads
\begin{align*}
\bar{P}&= \sup_{\theta \in [0,2\pi)} \frac{1}{8} \left[ \sup_{\varphi_1 \in [0,2\pi)} (e_0+e_\theta)(s_{\varphi_1}) + \sup_{\varphi_2 \in [0,2\pi)} (u-e_0+e_\theta)(s_{\varphi_2}) +\right. \\
&\ \ \ \left.   \sup_{\varphi_3 \in [0,2\pi)} (u-e_0+u-e_\theta)(s_{\varphi_3}) + \sup_{\varphi_4 \in [0,2\pi)} (e_0+u-e_\theta)(s_{\varphi_4})\right].
\end{align*}
By expanding the above expression and by using some trigonometric identities we can rewrite the above equation as
\begin{align*}
\bar{P}&= \sup_{\theta \in [0,2\pi)} \frac{1}{8} \left[ \sup_{\varphi_1 \in [0,2\pi)} \left(1+ \cos\left(\frac{\theta}{2} \right) \cos\left(\frac{\theta}{2}-\varphi_1 \right) \right)  \right. \\
& \quad \quad \quad \quad \quad \ \ + \sup_{\varphi_2 \in [0,2\pi)} \left(1- \sin\left(\frac{\theta}{2} \right) \sin\left(\frac{\theta}{2}-\varphi_2 \right) \right) \\
& \quad \quad \quad \quad \quad \ \ + \sup_{\varphi_3 \in [0,2\pi)} \left(1- \cos\left(\frac{\theta}{2} \right) \cos\left(\frac{\theta}{2}-\varphi_3 \right) \right) \\
 &  \left.  \quad \quad \quad \quad \quad \ \ + \sup_{\varphi_4 \in [0,2\pi)} \left(1+ \sin\left(\frac{\theta}{2} \right) \sin\left(\frac{\theta}{2}-\varphi_4 \right) \right) \right]. 
\end{align*}
We can get an upper bound for $\bar{P}$ by choosing the angles $\varphi_1, \varphi_2, \varphi_3, \varphi_4$ such that each of the inner supremums is bound above by either $1+|\cos(\theta/2)|$ or $1+|\sin(\theta/2)|$. After this the outer supremum can be calculated and we get the following upper bound:
\begin{align*}
\bar{P} &\leq \sup_{\theta \in [0,2\pi)} \frac{1}{8} \left[ 4 + 2 \left| \cos\left( \frac{\theta}{2} \right) \right| + 2 \left| \sin\left( \frac{\theta}{2} \right) \right|\right]  = \frac{1}{2} \left( 1+ \frac{1}{\sqrt{2}} \right)
\end{align*}

\begin{figure}[t]
\centering
\includegraphics[scale=0.25]{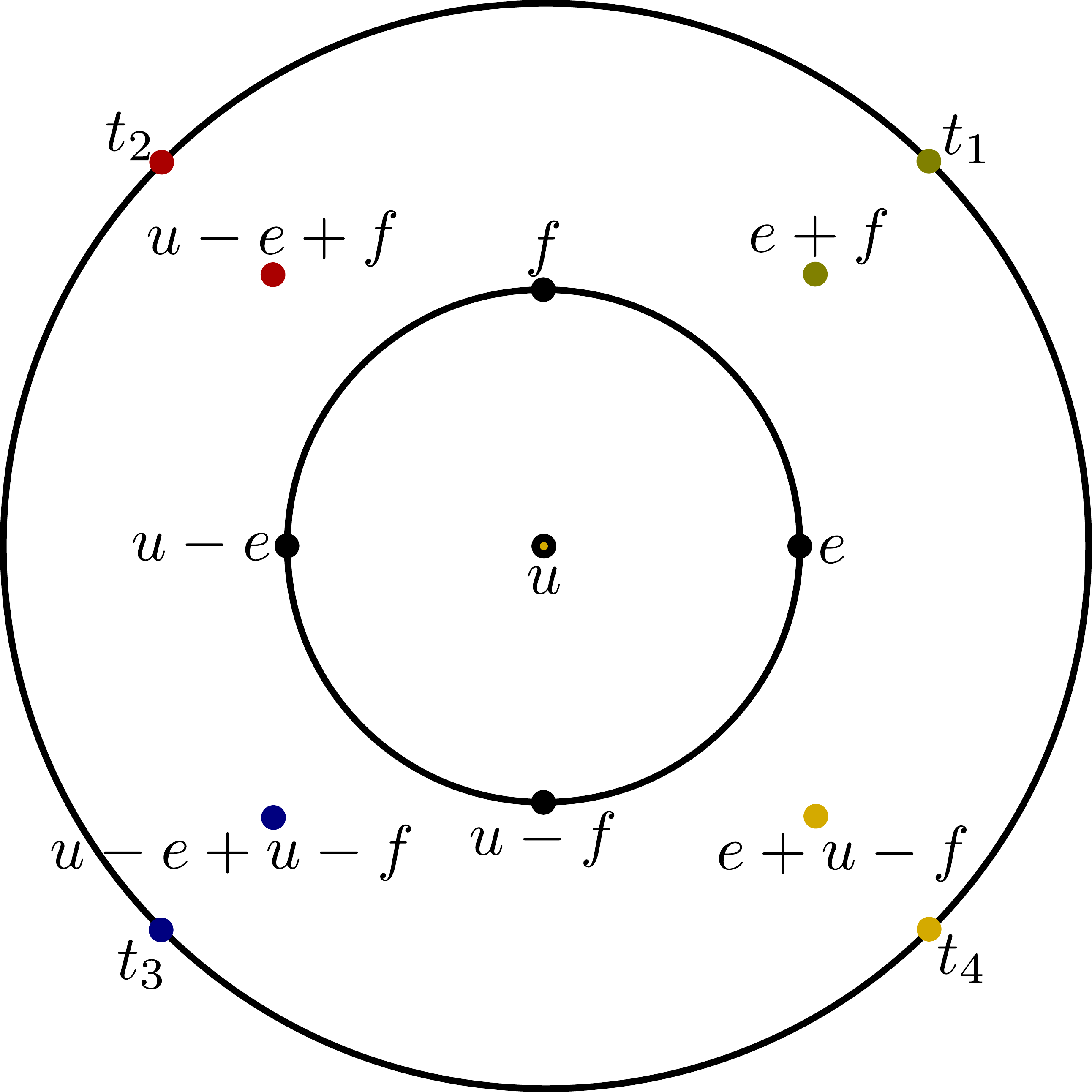}
\caption{\label{fig:rebit} The optimal effects $e= e_0$, $f= e_\theta$ and states $t_1 = s_{\varphi_1}$, $t_2 = s_{\varphi_2}$, $t_3 = s_{\varphi_3}$, $t_4 = s_{\varphi_4}$ with  $\theta=\pi/2$, $\varphi_1 = \pi/4$, $\varphi_2 = 3\pi/4$, $\varphi_3=5\pi/4$ and $\varphi_4= 7\pi/4$ for the maximal success probability (Eq. \eqref{eq:P-2-2}) for the random access test in a rebit system as viewed from the positive $z$-axis. }
\end{figure}

Furthermore, one can check that this bound is obtained with the following parameters: $\theta=\pi/2$, $\varphi_1 = \pi/4$, $\varphi_2 = 3\pi/4$, $\varphi_3=5\pi/4$ and $\varphi_4= 7\pi/4$. Thus, the optimal success probability of the random access test in rebit coincides with the corresponding optimal success probability in qubit.  The optimal effects and states (up to rotational symmetry) are depicted in Fig. \ref{fig:rebit}. We note that the optimal extreme effect $f$ aligns itself furthest away from both $e$ and $u-e$ along the semicircle between them (this is because we are optimizing both $e+f$ and $u-e+f$ at the same time) and the optimal states are uniquely determined by the sums $e+f$, $u-e+f$, $e+u-f$ and $u-e+u-f$ along the same directions in the $(x,y)$-projection as is seen in Fig. \ref{fig:rebit}.

\subsubsection{Maximum success probability for polygons} 
Based on the rebit system we can compare the behaviour of the maximal success probability of RATs on polygons. In many ways polygons can be thought as discretized versions of the rebit system and depending on the coarseness of the discretization the studied properties may look a bit different or similar to the properties of the rebit. This is also the case with the optimal success probability of the random access test. 

\begin{figure}[t!]
\centering
\includegraphics[scale=0.45]{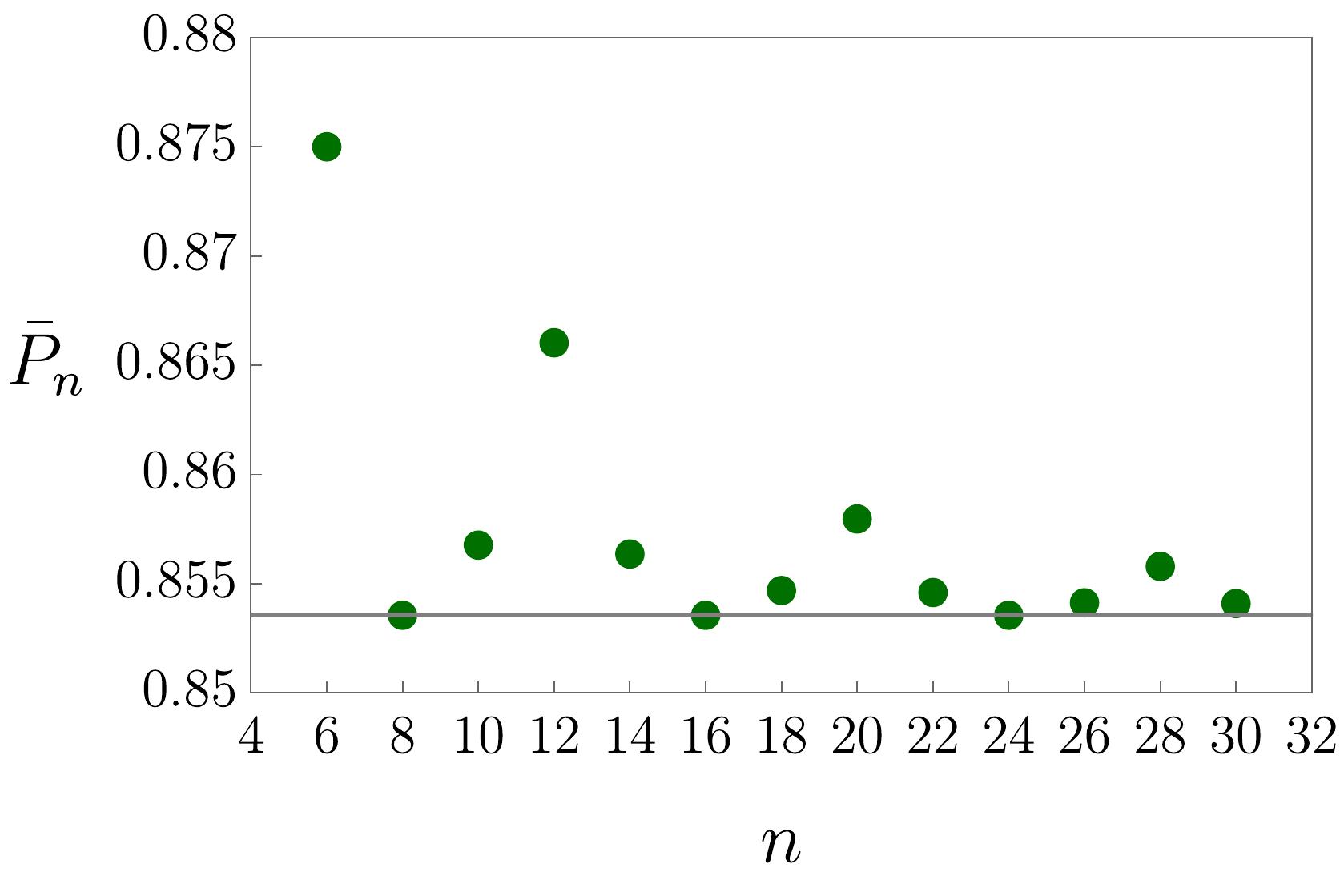}
\includegraphics[scale=0.45]{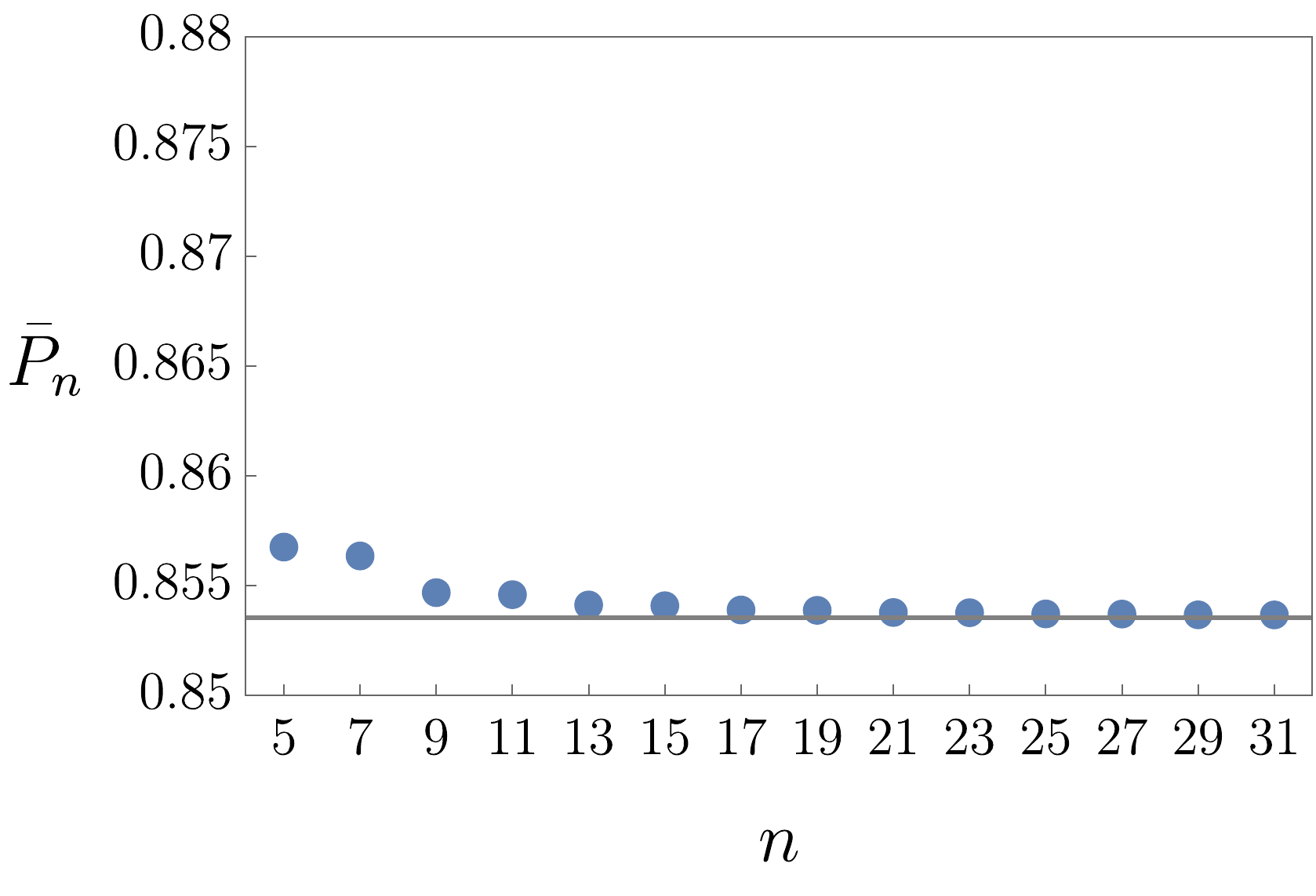}
\caption{\label{fig:polygon-max} The maximum success probability $\bar{P}_n$ of the random access test with two dichotomic measurements on even (above, green dots) and odd (below, blue dots) polygons as a function of the number of vertices $n$ of the polygon. The constant grey line is the maximum success probability of qubit (and rebit). From the even polygons we have excluded the case $n=4$ when $\bar{P}_n=1$.}
\end{figure}

As was established before, the maximum success probability for two dichotomic measurements on a regular polygon state space $\state_n$, denoted by $\bar{P}_n$, is maximized for some nontrivial extreme effects $e,f \in \effect^{ext}(\state_n)$ and pure states $t_1,t_2,t_3,t_4 \in \state^{ext}_n$. Because $\effect^{ext}(\state_n) = \{o,u,e_1, \ldots, e_n\}$ when $n$ is even and $\effect^{ext}(\state_n) = \{o,u,g_1, \ldots, g_n, f_1, \ldots, f_n\}$ when $n$ is odd, so that $\effect^{ext}(\state_n)$ (and $\state^{ext}_n$) is finite in both cases, it is easy to calculate the maximum value for $\bar{P}_n$ in the case of two dichotomic measurements for small values of $n$. The results are shown in Fig. \ref{fig:polygon-max} for polygons with up to $30$ vertices.

From Fig. \ref{fig:polygon-max} we observe at least three interesting points. First, the maximum success probability is at least as high as that of the qubit (and the rebit) in every polygon state space, and in many of them it is higher. Second, as one would expect, in both cases $\bar{P}_n$ seems to approach the maximum success probability of the qubit (and the rebit) as the number of vertices increase and the polygon starts to approximate the rebit system more closely geometrically. Third, the behaviour of $\bar{P}_n$ is drastically different for even and odd $n$: for even $n$ the maximum success probability $\bar{P}_n$ seems to oscillate with decreasing amplitude whereas for odd $n$ it seems that $\bar{P}_n$ approaches the qubit limit more monotonically. In the following we will analyse the results of Fig. \ref{fig:polygon-max} further by taking a closer look on the optimizing effects and states. We provide formulas for $\bar{P}_n$ in every $n$.

\subsubsection{Optimality results for the maximum success probability in the even polygons}
We will now look more closely on explaining Fig. \ref{fig:polygon-max}  for even polygons. We will prove that the maximum success probability $\bar{P}_n$ for two dichotomic measurements on even polygon state space $\state_n$ depends on $n$ as follows:
\begin{align}
&\bar{P}_n = \frac{1}{2} \left( 1+ \frac{\sec\left( \frac{\pi}{n} \right)}{\sqrt{2}} \right) \quad  \textrm{$n=4m$ for odd $m \in \nat$ }\label{eq:n=4m-odd} \\
& \bar{P}_n = \frac{1}{2} \left( 1+ \frac{1}{\sqrt{2}} \right) \quad  \textrm{$n=4m$ for even $m \in \nat$}\label{eq:n=4m-even} \\
& \bar{P}_n = \frac{1}{4} \left[ 2 +  r_n^2 \cos\left(\frac{m\pi}{n} \right) +    \sin\left(\frac{m\pi}{n} \right) \right] \quad \textrm{$n=4m+2$ for odd $m \in \nat$}\label{eq:n=4m+2-odd} \\
& \bar{P}_n = \frac{1}{4} \left[ 2 +   \cos\left(\frac{m\pi}{n} \right) +   r_n^2 \sin\left(\frac{m\pi}{n} \right) \right] \quad \textrm{$n=4m+2$ for even $m \in \nat$}\label{eq:n=4m+2-even}
\end{align}
(In the last two expressions $r_n = \sqrt{\sec\left( \frac{\pi}{n}\right)}$ as before.) Thus, we will see that in the cases when $n=4m+2$, where $m$ is odd or even, or $n=4m$, where $m$ is even, the maximum success probability is strictly larger than that of the qubit and the rebit, whereas for $n=4m$, where $m$ is even, it is exactly the same as in qubit. In all cases the limit $n \to \infty$ matches the qubit value.

We start by deriving an upper for the maximum success probability $\bar{P}_n$ similarly to how we did in the case of rebit. Let us consider the RAT for two dichotomic measurements as stated in Eq. \eqref{eq:P-2-2}. The polygons are symmetrical in the sense that we can fix the first extreme effect $e$ to be any of the nontrivial extreme effects and for even polygons we choose $e=e_1$. Furthermore, we have that $f = e_k$ for some $k \in \{1, \ldots, n\}$ and $t_i = s_{j_i}$ for some $j_i \in \{1, \ldots, n\}$ for all $i \in \{1,2,3,4\}$. With this notation we can write $\bar{P}_n$ for even polygons as
\begin{align}\label{eq:Pn-poly}
\begin{split}
\bar{P}_n&= \sup_{k \in \{1, \ldots, n\}} \frac{1}{8} \left[ \sup_{j_1 \in \{1, \ldots, n\}} (e_1+e_k)(s_{j_1}) + \sup_{j_2 \in \{1, \ldots, n\}} (u-e_1+e_k)(s_{j_2}) \right. \\
&\ \ \ \left.   + \sup_{j_3 \in \{1, \ldots, n\}} (u-e_1+u-e_k)(s_{j_3}) + \sup_{j_4 \in \{1, \ldots, n\}} (e_1+u-e_k)(s_{j_4})\right].
\end{split}
\end{align}
First we note that we can restrict $k \in \{1, \ldots, n/2\}$ since otherwise we can just take $f=u-e_k = e_{k+n/2}$ instead of $f=e_k$. By expanding the previous expression and by using some trigonometric identities we can rewrite the previous equation as
\begin{align*}
\bar{P}_n&= \sup_{k \in \{1, \ldots, n/2\}} \frac{1}{8} \left[\sup_{j_1 \in \{1, \ldots, n\}} \left(1+ r_n^2 \cos\left(\frac{(k-1)\pi}{n} \right) \cos\left(\frac{(k-2j_1)\pi}{n} \right) \right)  \right. \\
& \quad \quad \quad \quad \quad \ \ + \sup_{j_2 \in \{1, \ldots, n\}} \left(1- r_n^2 \sin\left(\frac{(k-1)\pi}{n} \right) \sin\left(\frac{(k-2j_2)\pi}{n} \right) \right) \\
& \quad \quad \quad \quad \quad \ \ + \sup_{j_3 \in \{1, \ldots, n\}} \left(1- r_n^2 \cos\left(\frac{(k-1)\pi}{n} \right) \cos\left(\frac{(k-2j_3)\pi}{n} \right) \right) \\
 &  \left.  \quad \quad \quad \quad \quad \ \ + \sup_{j_4 \in \{1, \ldots, n\}} \left(1+ r_n^2 \sin\left(\frac{(k-1)\pi}{n} \right) \sin\left(\frac{(k-2j_4)\pi}{n} \right) \right) \right]. 
\end{align*}

Analogously to the rebit, we can upper bound the inner supremums by the terms $1+r_n^2 |\cos((k-1)\pi/n)|$ and $1+r_n^2 |\sin((k-1)\pi/n)|$ from which we can omit the absolute values since $\cos((k-1)\pi/n) \geq 0$ and $\sin((k-1)\pi/n) \geq 0$ for all $k \in \{1, \ldots, n/2\}$. For the remaining (outer) supremum we can use the upper bound $\cos((k-1)\pi/n)+\sin((k-1)\pi/n) \leq \sqrt{2}$ so that in the end we get the following upper bound for $\bar{P}_n$ for all even polygons:
\begin{align}
\bar{P}_n &\leq \sup_{k \in \{1, \ldots,n/2\}} \frac{1}{8} \left[ 4 + 2 r_n^2 \cos\left(\frac{(k-1)\pi}{n} \right) + 2 r_n^2 \sin\left(\frac{(k-1)\pi}{n} \right) \right]  \nonumber \\
&\leq \frac{1}{2} \left( 1+ \frac{r_n^2}{\sqrt{2}} \right) = \frac{1}{2} \left( 1+ \frac{\sec\left(\frac{\pi}{n}\right)}{\sqrt{2}} \right). \label{eq:P_n-even-bound}
\end{align}

One can confirm that the above bound is attained when we take $k=1+n/4$, $j_1=k/2$, $j_2=k/2+n/4$, $j_3=k/2+n/2$ and $j_4=k/2+3n/4$. However, since $k,j_1,j_2,j_3,j_4$ must be integers, this upper bound can be attained only in the case when $n=4m$ for some $m\in \nat$ (so that $k$ is an integer) and $m$ is odd (so that $k=1+m$ is even and $j_1,j_2,j_3,j_4$ are integers). Thus, in this case we have that $f=e_{m+1}$, $t_1 = s_{\frac{m+1}{2}}$, $t_2 = s_{\frac{3m+1}{2}}$, $t_3 = s_{\frac{5m+1}{2}}$ and $t_4 = s_{\frac{7m+1}{2}}$.

From the previous result it becomes immediate that we must consider different cases also within the even polygons. Let us next consider the case when $n=4m$ but $m$ is even. As we saw above, we cannot saturate the previous bound for $\bar{P}_n$ in this case because in this case the optimizing values for $j_i$'s are not integers. However, the expressions for the inner supremums which we are upper bounding, namely $\cos((k-2j_1)\pi/n)$, $-\sin((k-2j_2)\pi/n)$, $\sin((k-2j_3)\pi/n)$ and $-\cos((k-2j_4)\pi/n)$, are discrete and of simple form so that we know that even if we cannot attain the optimal value $1$ for these expressions with the optimal parameters, the actual supremums are attained with parameters close to the the ones presented above. 

Let us consider separately two different cases when $n=4m$ and $m$ is even. First, let us consider the case that $k$ is even so that the optimal values for $j_i$'s are integers and the first inequality in Eq. \eqref{eq:P_n-even-bound} is saturated, i.e., the inner supremums have the same optimal values as above. Again, in this case they are either of the form $1+r_n^2 \cos((k-1)\pi/n)$ or $1+r_n^2 \sin((k-1)\pi/n)$ and they are attained with $j_1=k/2$, $j_2=k/2+m$, $j_3=k/2+2m$ and $j_4=k/2+3m$. However, the second inequality is saturated, i.e., the outer supremum attains the previous optimal value, only when $k=m+1$ which would make $k$ odd since $m$ is even. Therefore we cannot attain the previous bound in this case. Instead, for the outer supremum we must maximize $\cos((k-1)\pi/n)+\sin((k-1)\pi/n)$ for all even values of $k \in \{1, \ldots, n/2\}$. From the form of this expression we see that the supremum must be attained with the closest even integer value to $m+1$, i.e., either with $k=m$ or $k=m+2$. We can verify that then for both of these values of $k$ we have that $\cos((k-1)\pi/n)+\sin((k-1)\pi/n) = \sqrt{2}/r_n^2$ and hence $\bar{P}_n(e,f) = 1/2(1+1/\sqrt{2})$ with the optimizing states $t_1 = s_{\frac{m}{2}}$, $t_2 = s_{\frac{3m}{2}}$, $t_3 = s_{\frac{5m}{2}}$ and $t_4 = s_{\frac{7m}{2}}$ for the optimizing effect $f= e_m$, and $t_1 = s_{\frac{m}{2}+1}$, $t_2 = s_{\frac{3m}{2}+1}$, $t_3 = s_{\frac{5m}{2}+1}$ and $t_4 = s_{\frac{7m}{2}+1}$ for the optimizing effect $f= e_{m+2}$.

Second, let us assume that $k$ is odd so that the first inequality in Eq. \eqref{eq:P_n-even-bound} is not saturated, i.e., that the previously presented optimal values for $j_i$'s are not integers. In this case the actual optimizing values for the inner supremums must then be the closest integers to the values considered before. Thus, the inner supremums must be attained with the following values: $j_{\pm 1} = (k\pm 1)/2$, $j_{\pm 2} = (k\pm 1)/2+m$, $j_{\pm 3} = (k\pm 1)/2+2m$ and $j_{\pm 4} = (k\pm 1)/2+3m$. It is straightforward to verify that in both cases, i.e., when one uses either $j_{+i}$'s or $j_{-i}$'s, the inner supremums take the values $1+\cos((k-1)\pi/n)$ or $1+\sin((k-1)\pi/n)$ depending on which expression we are evaluating. Now the outer supermum can attain the same optimal value as before with the parameter value $k=m+1$ (which is odd because $m$ is even). Finally we obtain that also in this case $\bar{P}_n(e,f) = 1/2(1+1/\sqrt{2})$ with the effect $f=e_{m+1}$ and optimizing states $t_1 = s_{\frac{m}{2}}$ or $t_1 = s_{\frac{m}{2}+1}$, $t_2 = s_{\frac{3m}{2}}$ or $t_2 = s_{\frac{3m}{2}+1}$, $t_3 = s_{\frac{5m}{2}}$ or $t_3 = s_{\frac{5m}{2}+1}$ and $t_4 = s_{\frac{7m}{2}}$ or $t_4 = s_{\frac{7m}{2}+1}$. 

To conclude, we have shown that the maximum success probability for an even polygon with $n=4m$ for some $m \in \nat$ are given by Eqs. \eqref{eq:n=4m-odd} and \eqref{eq:n=4m-even}. All the optimal effects and states are explicitly expressed in Table \ref{table1}.

\begin{table}
\centering
\begin{tabular}{|c|c|c|c|c|} \hline
$n$   & \multicolumn{4}{|c|}{$4m$} \\ \hline
 $m$   & odd & \multicolumn{3}{|c|}{even} \\ \hline
 $e$   & $e_1$  & \multicolumn{3}{|c|}{$e_1$} \\ \hline
 $f$ & $e_{m+1}$ & $e_{m}$ & $e_{m+1}$ & $e_{m+2}$ \\ \hline
 $t_1$ & $s_{\frac{m+1}{2}}$  & $s_{\frac{m}{2}}$ & $s_{\frac{m+1 \pm 1}{2}}$ & $s_{\frac{m}{2}+1}$ \\ \hline
 $t_2$ & $s_{\frac{3m+1}{2}}$ & $s_{\frac{3m}{2}}$ & $s_{\frac{3m+1 \pm 1}{2}}$ & $s_{\frac{3m}{2}+1}$ \\ \hline
 $t_3$ & $s_{\frac{5m+1}{2}}$ & $s_{\frac{5m}{2}}$ & $s_{\frac{5m+1 \pm 1}{2}}$ & $s_{\frac{5m}{2}+1}$ \\ \hline
 $t_4$ & $s_{\frac{7m+1}{2}}$ & $s_{\frac{7m}{2}}$ & $s_{\frac{7m+1 \pm 1}{2}}$ & $s_{\frac{7m}{2}+1}$ \\ \hline
 $\bar{P}_n$ & $\frac{1}{2}\left(1+ \frac{r_n^2}{\sqrt{2}} \right)$ & \multicolumn{3}{|c|}{$\frac{1}{2}\left(1+ \frac{1}{\sqrt{2}} \right)$ }  \\ \hline
\end{tabular}\vspace*{0.4cm}
\caption{\label{table1} The optimal effects and states for the random access test with two dichotomic measurements on even polygon state spaces $\state_n$, where $n=4m$ for some $m \in \nat$. When $m$ is even, the optimizing effect $f$ is not unique and thus also the optimizing states are different for different choises of $f$ and furthermore they may not be unique even for a fixed choice of $f$.}
\end{table}

Let us then consider the case when $n\neq 4m$ for any $m \in \nat$ so that we must actually then have that $n = 4m+2$ for some $m \in \nat$. In this case we can again distinguish two different cases: when $m$ is odd and when $m$ is even. To see the reason for this, let us consider the expression for $\bar{P}_n$ when $n=4m+2$. In this case we see that the maximum possible value for the inner supremums can be attained if $j_1=k/2$, $j_2=k/2+m+1/2$, $j_3=k/2+2m$ and $j_4=k/2+3m+3/2$. However, we see that these are not integers neither for odd or even $k$. 

If we assume that $k$ is even, then the closest integers with which we can attain the supremum are $j'_1=k/2$, $j'_{\pm 2}=(k\pm 1)/2+m+1/2$, $j'_3=k/2+2m$ and $j'_{\pm 4}=(k\pm 1)/2+3m+3/2$. For these parameters we get that
\begin{align*}
\bar{P}_n &= \sup_{k \in \{1, \ldots, n/2\}} \frac{1}{8} \left[ 4 + 2 r_n^2 \cos\left(\frac{(k-1)\pi}{n} \right) + 2  \sin\left(\frac{(k-1)\pi}{n} \right) \right]
\end{align*}

It can be shown that in this case the optimizing value for $k$ can be restricted to be either $m+1$ or $m+2$. Since we are assuming that $k$ is even, this leads to two different considerations: when $m$ is even and when $m$ is odd. For odd $m$ we have that the optimal value is $k = m+1$ and in this case Eq. \eqref{eq:n=4m+2-odd} holds. Similarly for even $m$ we have that the optimal value is $k = m+2$ and in this case Eq. \eqref{eq:n=4m+2-even} holds.

On the other hand, if we assume that $k$ is odd, then the optimizing values for $j_i$'s are $j''_{\pm 1}=(k\pm 1)/2$, $j''_{2}=k/2+m+1/2$, $j''_{\pm 3}=(k\pm 1)/2+2m$ and $j''_{4}=k/2+3m+3/2$. For these parameters we get that
\begin{align*}
\bar{P}_n &= \sup_{k \in \{1, \ldots, n/2\}} \frac{1}{8} \left[ 4 + 2  \cos\left(\frac{(k-1)\pi}{n} \right) + 2  r_n^2 \sin\left(\frac{(k-1)\pi}{n} \right) \right]
\end{align*}
Also in this case the optimizing value for $k$ can be shown to be either $m+1$ or $m+2$. Again we have different cases for even and odd $m$. For odd $m$ we have that the optimal value is $k = m+2$ and then one can confirm that we get Eq. \eqref{eq:n=4m+2-odd}. Similarly for even $m$ we have that the optimal value is $k = m+1$ and then one can confirm that we get Eq. \eqref{eq:n=4m+2-even}.  

To conclude, the maximum success probabilities in the case $n=4m+2$ are given by Eq. \eqref{eq:n=4m+2-odd} when $m$ is odd and by Eq. \eqref{eq:n=4m+2-even} when $m$ is even. One can show that both of these expressions are strictly between the qubit (and the rebit) value $1/2(1+1/\sqrt{2})$ and the upper bound given in Eq. \eqref{eq:P_n-even-bound} for all finite $m \in \nat$. In the limit $m \to \infty$ the maximum success probability approaches the qubit value. All the optimal effects and states are explicitly expressed in Table \ref{table2}.

\begin{table}
\centering
\begin{tabular}{|c|c|c|c|c|} \hline
$n$   & \multicolumn{4}{|c|}{$4m+2$} \\ \hline
 $m$   & \multicolumn{2}{|c|}{odd} & \multicolumn{2}{|c|}{even} \\ \hline
 $e$   & \multicolumn{2}{|c|}{$e_1$}  & \multicolumn{2}{|c|}{$e_1$} \\ \hline
 $f$ & $e_{m+1}$ & $e_{m+2}$ & $e_{m+1}$ & $e_{m+2}$  \\ \hline
 $t_1$ & $s_{\frac{m+1}{2}}$  & $s_{\frac{m\pm 1}{2}+1}$ & $s_{\frac{m+1\pm 1}{2}}$ & $s_{\frac{m}{2}+1}$\\ \hline
 $t_2$ & $s_{\frac{3m\pm 1}{2}+1}$ & $s_{\frac{3m+1}{2}+1}$ & $s_{\frac{3m}{2}+1}$ & $s_{\frac{3m+1\pm 1}{2}+1}$\\ \hline
 $t_3$ & $s_{\frac{5m+1}{2}+1}$ & $s_{\frac{5m\pm 1}{2}+2}$ & $s_{\frac{5m+1\pm 1}{2}+1}$ & $s_{\frac{5m}{2}+2}$\\ \hline
 $t_4$ & $s_{\frac{7m \pm 1}{2}+2}$ & $s_{\frac{7m+1}{2}+2}$ & $s_{\frac{7m}{2}+2}$ & $s_{\frac{7m+1\pm 1}{2}+2}$\\ \hline 
& \multicolumn{2}{|c|}{} & \multicolumn{2}{|c|}{} \\[-0.4cm] 
$\bar{P}_n$ & \multicolumn{2}{|c|}{%Eq. \eqref{eq:n=4m+2-m-odd} 
 $\frac{1}{4} \left[ 2 +  r_n^2 \cos\left(\frac{m\pi}{n} \right) +    \sin\left(\frac{m\pi}{n} \right) \right]$} & \multicolumn{2}{|c|}{%Eq. \eqref{eq:n=4m+2-m-even}
 $\frac{1}{4} \left[ 2 +   \cos\left(\frac{m\pi}{n} \right) +   r_n^2 \sin\left(\frac{m\pi}{n} \right) \right]$}  \\[0.1cm] \hline
\end{tabular}\vspace*{0.4cm}
\caption{\label{table2}  The optimal effects and states for the random access test with two dichotomic measurements on even polygon state spaces $\state_n$, where $n=4m+2$ for some $m \in \nat$. For both even and odd $m$ the optimizing effect $f$ is not unique and thus also the optimizing states are different for different choices of $f$ and furthermore they may not be unique even for a fixed choice of $f$.}
\end{table}

\subsubsection{Comparing rebit to the even polygon state spaces} 
As was established in the previous section, instead of having a single oscillatory upper bound that would explain the behaviour of the maximum success probability depicted in Fig. \ref{fig:polygon-max} we must in fact divide the even polygons in four distinct cases: i) $n=4m$ and $m$ is odd, ii) $n=4m$ and $m$ is even, iii) $n=4m+2$ and $m$ is odd, and iv) $n=4m+2$ and $m$ is even for some $m \in \nat$. 

\begin{figure}[t]
\centering
\includegraphics[scale=0.13]{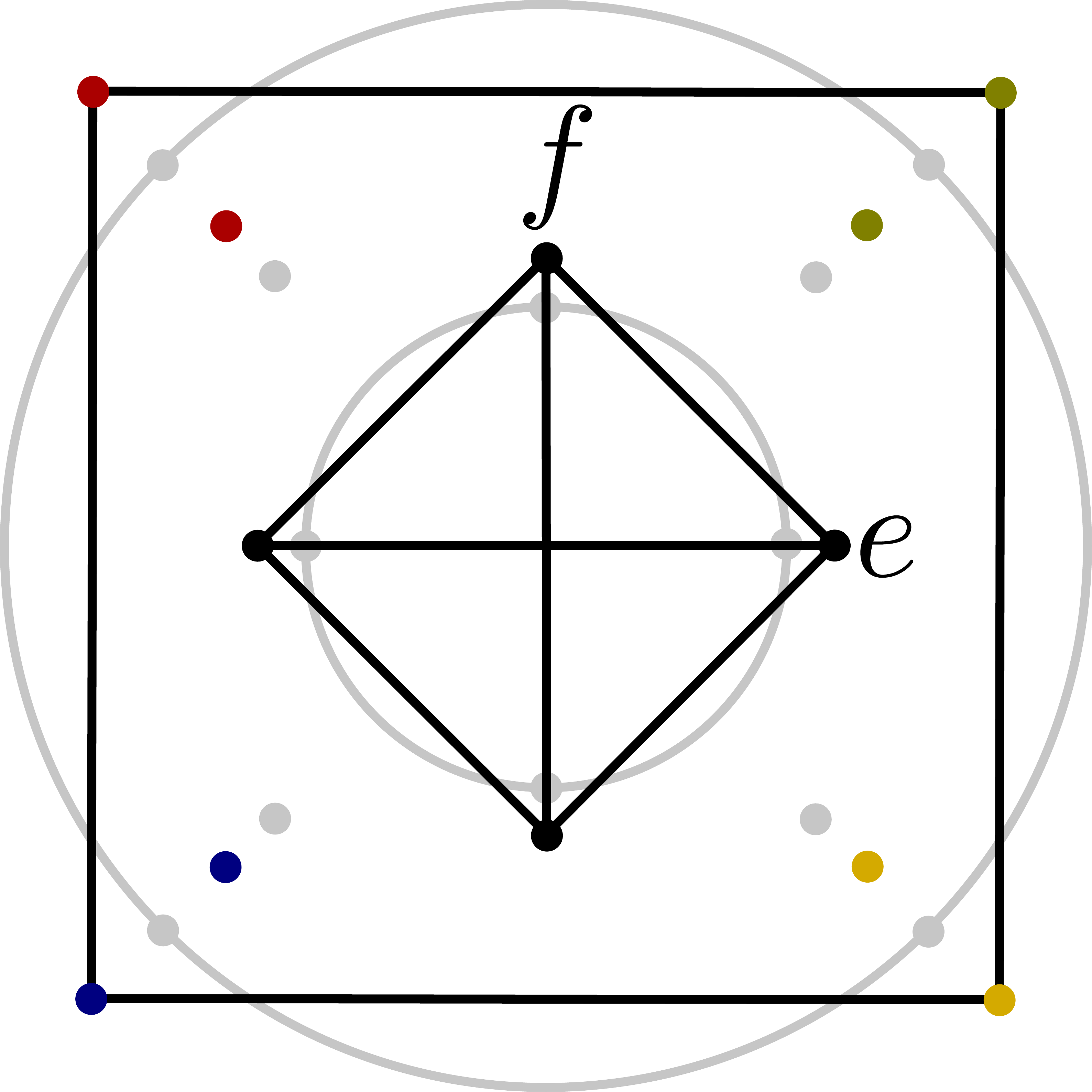} \ 
\includegraphics[scale=0.13]{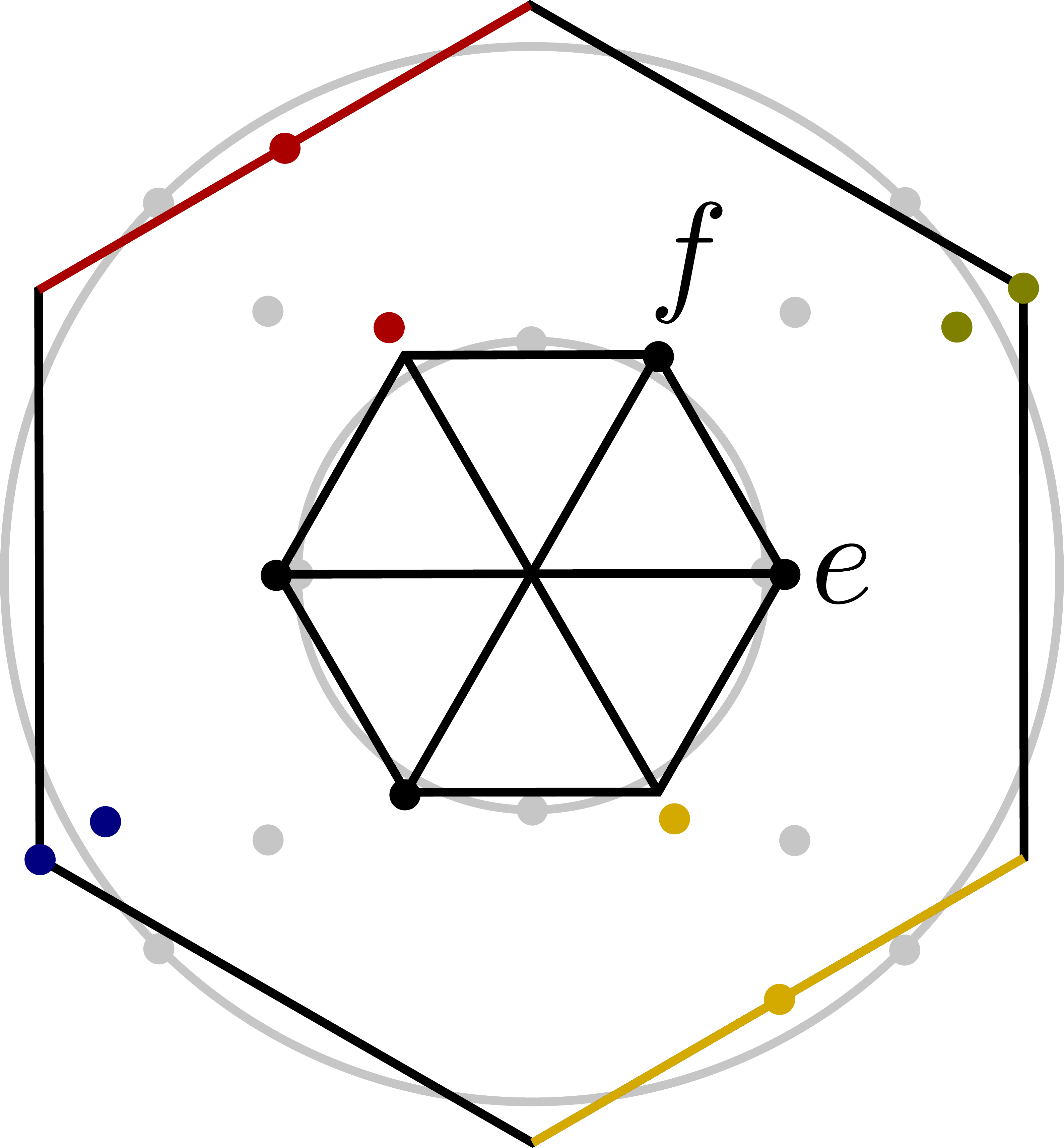} \\
\ \\
\includegraphics[scale=0.13]{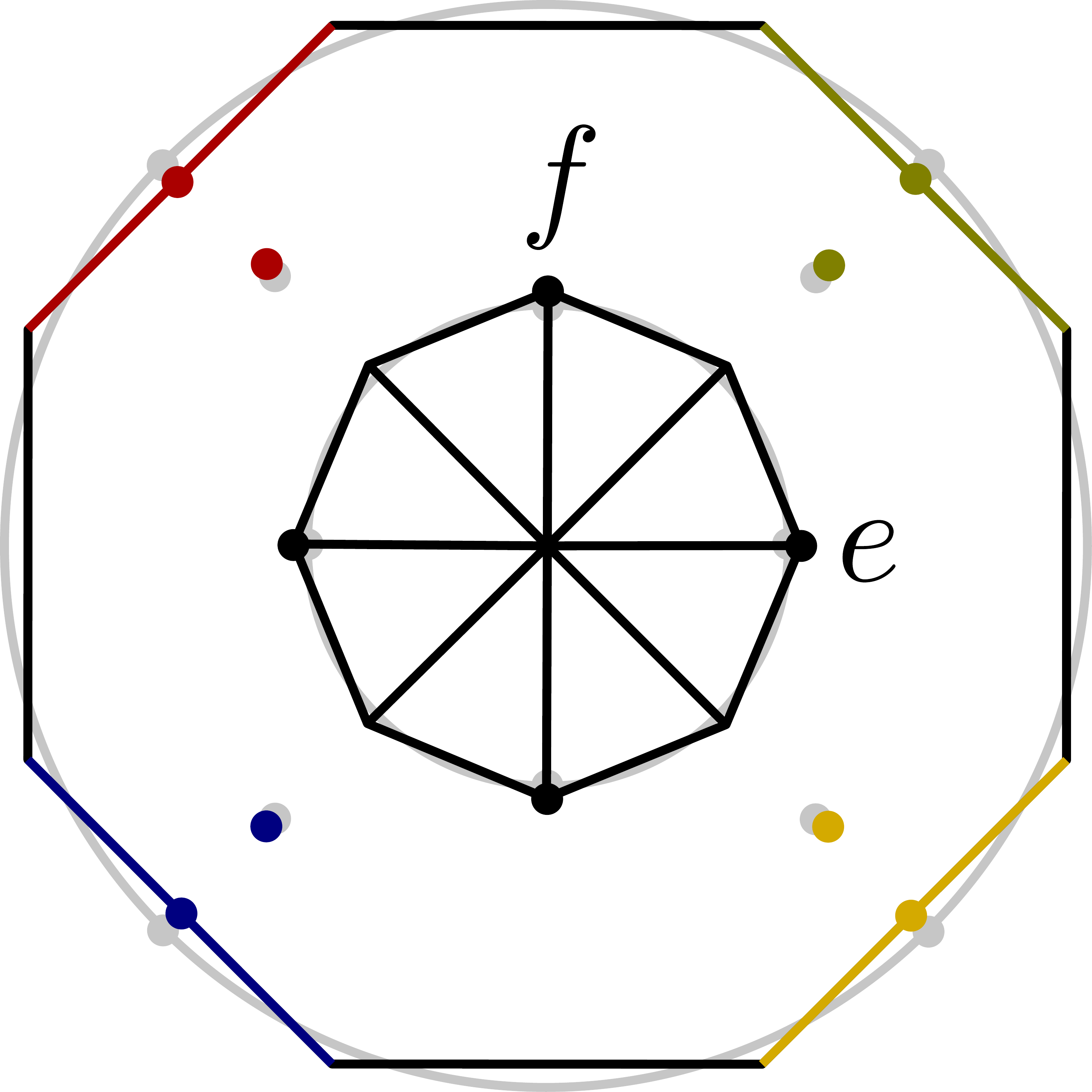} \ 
\includegraphics[scale=0.13]{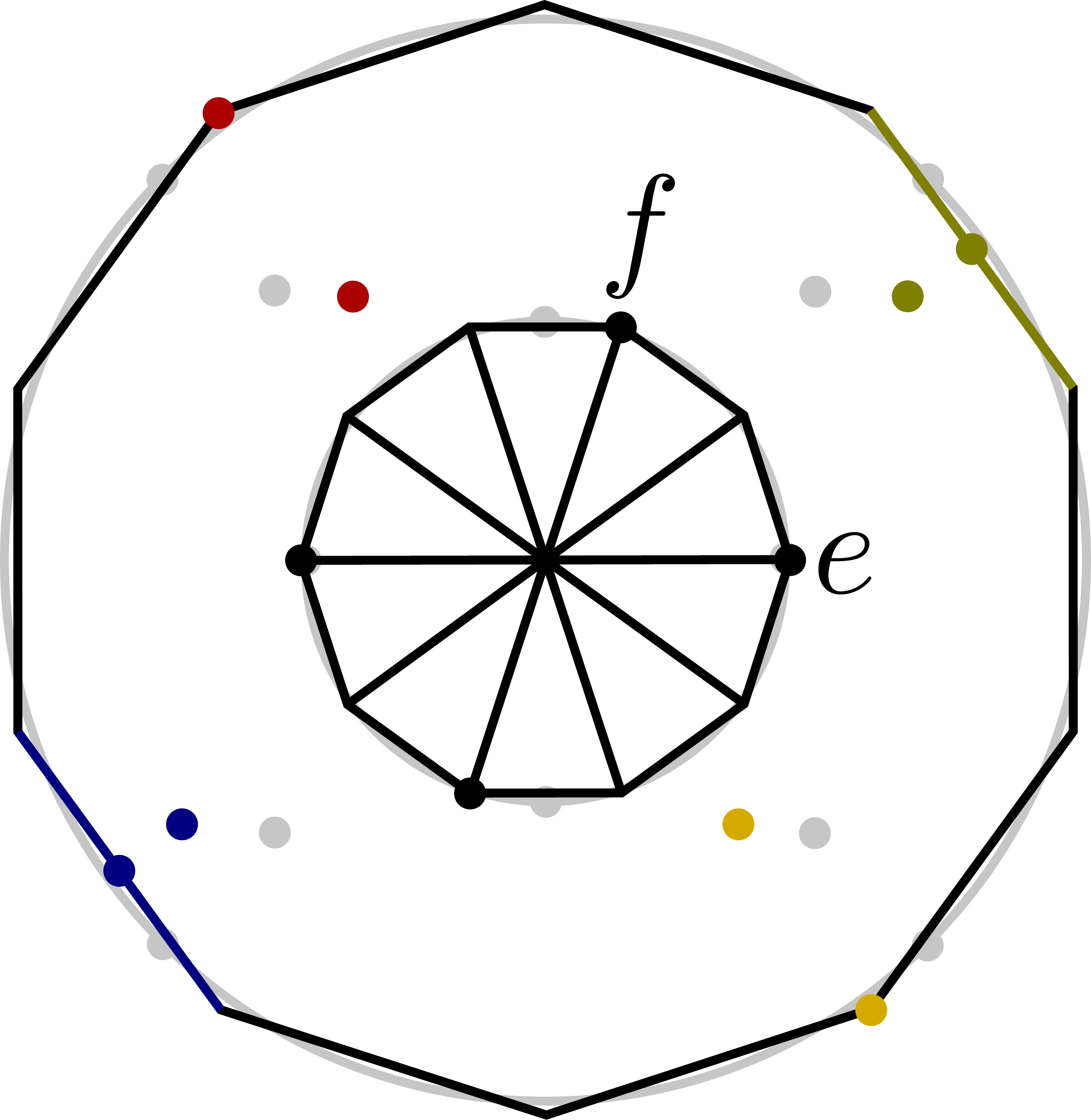} \ 
\caption{ \label{fig:optimal-4-6-8-10} Some of the maximizing effects and states for the first four even polygons. The sums of the maximizing effects and their maximizing states are represented by the same colors as in Fig. \ref{fig:rebit}. For comparison, the optimal effects and states in rebit are denoted in gray.}
\end{figure}

Regarding the geometry of the optimal effects and the states, as was explained previously, we can freely choose $e=e_1$. According to our analysis, similarly to the rebit case, the second extreme effect $f$ is aligned so that it is `furthest' away from both $e$ and $u-e$ along the circumsphere of the polygon, i.e., taking values $f=e_k$ where $k \in \{m,m+1,m+2\}$ depending on the polygon. Finally, again just as in the rebit, the optimizing states $t_1, t_2,t_3,t_4$ are (not necessarily uniquely) determined by the sums $e+f$, $u-e+f$, $e+u-f$ and $u-e+u-f$ as close to the same directions of these sums in the 2-dimensional $(x,y)$-projections as possible. 

We illustrate the previous properties along with pointing out the differences of the four aforementioned cases by depicting some of the optimal effects and states in the first four even polygons in Fig. \ref{fig:optimal-4-6-8-10}. We see the first two cases when $n=4m$ with $m=1$ and $m=2$ depicted on the left, whereas the first two cases when $n=4m+2$ with $m=1$ and $m=2$ depicted on the right. Furthermore, the top figures are examples of the cases when $m$ is odd and the bottom ones are examples of the cases when $m$ is even. The depicted states and effects are chosen from Tables \ref{table1} and \ref{table2} in such a that we take $f=e_{m+1}$ for all of the different cases and if there are more than one optimizing state for these optimizing effects, then we take the equal convex mixture of such states for the reasons explained below. 

From the figure we see that in the cases when $n=4m$ the alignment of the optimal effects $e$ and $f$ can be chosen to be just as in the rebit state space. However, the alignment of the optimal states differ for odd and even $m$: if $m$ is odd, then the optimal states $t_1, t_2, t_3, t_4$ are unique pure states and aligned in the same direction in the $(x,y)$-projection as the sums of the optimal effects (just as in the rebit), but in the case that $m$ is even, the optimal states are not unique and they can be chosen from the corresponding 2-dimensional faces of the state space. In particular, for illustration purposes, for even $m$ we chose the optimal states to be of the form $t_i=1/2(s_{j_i}+s_{j_i+1})$ for some $j_i \in \{1, \ldots, n\}$ for all $i \in \{1,2,3,4\}$ so that they align in the same direction as the sums of the optimal effects $e$ and $f$ in the 2-dimensional $(x,y)$-projection just as in the case of odd $m$ and the rebit. However, in the case that $n=4m+2$ already for $m=1$ and $m=2$ the situation differs from that of the rebit and the qubit since the effect $f$ cannot be chosen to be orthogonal to $e$ in the $(x,y)$-projection in the $z=1/2$--plane as before. For this reason also the optimal states are determined differently: after fixing the effects $e=e_1$ and $f=e_{m+1}$, for odd $m$ we have that $t_1$ and $t_3$ are uniquely determined and $t_2$ and $t_4$ are not (and we choose them along the same direction with the sums of the respective optimizing effects), whereas for even $m$ it is the opposite.

\subsubsection{Odd polygons}

In odd polygons theories the maximum success probability $\bar{P}_n$ for two dichotomic measurements depends on $n$ as follows:
\begin{align}
& \bar{P}_n = \frac{1}{4} \left[ 2 +   \cos\left(\tfrac{m\pi}{n} \right) +   r_{2n}^2 \sin\left(\tfrac{m\pi}{n} \right) \right] \quad \textrm{ $n=4m+1$ for $m \in \nat$}\label{eq:n=4m+1} \\
& \bar{P}_n = \frac{1}{4} \left[ 2 +   \cos\left(\tfrac{(m+1)\pi}{n} \right) +   r_{2n}^2 \sin\left(\tfrac{(m+1)\pi}{n} \right) \right]  \  \textrm{ $n=4m+3$ for $m \in \nat$},\label{eq:n=4m+3}
\end{align}
where $r_{2n} = \sqrt{\sec\left( \frac{\pi}{2n}\right)}$.

The proof of these results is similar to our previous proof in case of the even polygon theories. In Eq. \eqref{eq:P-2-2} we can fix the first extreme effect $e$ to be any of the nontrivial extreme effects and for odd polygons we choose $e=g_1$. Furthermore, we have that we can choose $f = g_k$ for some $k \in \{1, \ldots, n\}$ and $t_i = s_{j_i}$ for some $j_i \in \{1, \ldots, n\}$ for all $i \in \{1,2,3,4\}$. We can write $\bar{P}_n$ for odd polygons as
\begin{align*}
\bar{P}_n&= \sup_{k \in \{1, \ldots, n\}} \frac{1}{8} \left[ \sup_{j_1 \in \{1, \ldots, n\}} (g_1+g_k)(s_{j_1}) + \sup_{j_2 \in \{1, \ldots, n\}} (u-g_1+g_k)(s_{j_2}) \right. \\
&\ \ \ \left.   + \sup_{j_3 \in \{1, \ldots, n\}} (u-g_1+u-g_k)(s_{j_3}) + \sup_{j_4 \in \{1, \ldots, n\}} (g_1+u-g_k)(s_{j_4})\right].
\end{align*}
By expanding the previous expression and by using some trigonometric identities we can rewrite the previous equation as
\begin{align*}
\bar{P}_n&= \sup_{k \in \{1, \ldots, n\}} \frac{1}{8} \left[\sup_{j_1 \in \{1, \ldots, n\}} \frac{1}{1+r_n^2}\left(2+ 2 r_n^2 \cos\left(\tfrac{(k-1)\pi}{n} \right) \cos\left(\tfrac{(k+1-2j_1)\pi}{n} \right) \right)  \right. \\
& \quad \quad  \ \ + \sup_{j_2 \in \{1, \ldots, n\}} \frac{1}{1+r_n^2}\left(1+r_n^2- 2r_n^2 \sin\left(\tfrac{(k-1)\pi}{n} \right) \sin\left(\tfrac{(k+1-2j_2)\pi}{n} \right) \right) \\
& \quad \quad  \ \ + \sup_{j_3 \in \{1, \ldots, n\}} \frac{1}{1+r_n^2}\left(2 r_n^2- 2 r_n^2 \cos\left(\tfrac{(k-1)\pi}{n} \right) \cos\left(\tfrac{(k+1-2j_3)\pi}{n} \right) \right) \\
 &  \left.  \quad \quad   \ \ + \sup_{j_4 \in \{1, \ldots, n\}} \frac{1}{1+r_n^2}\left(1+r_n^2+ 2r_n^2 \sin\left(\tfrac{(k-1)\pi}{n} \right) \sin\left(\tfrac{(k+1-2j_4)\pi}{n} \right) \right) \right]. 
\end{align*}
Similarly to as we did in the even polygons we can confirm that the algebraic maximum, and thus and upper bound for the actual maximum success probability, can be attained with parameters $k=1+n/4$, $j_1=(k+1)/2$, $j_2=(k+1)/2 + n/4$, $j_3 = (k+1)/2+n/2$ and $j_4 = (k+1)/2+3n/4$ in which case we have that
\begin{align*}
\bar{P}_n \leq \frac{1}{2} \left( 1+  \frac{ \sqrt{2}r_n^2}{1+r_n^2} \right) = \frac{1}{2} \left( 1+  \frac{ \sqrt{2}}{1+\cos\left( \frac{\pi}{n} \right)} \right).
\end{align*}
We note that for any finite $n$ this bound is always larger than the maximum success probability in qubit and in the limit $n \to \infty$ the bound coincides with the qubit value. However, unlike in the even polygon case, we cannot attain this upper bound in any odd polygon since $k=1+n/4$ will never be an integer for odd $n$. But since the optimized expressions are of the similar form as for even polygons we again know that the actual supremum values will be attained for parameter values close to those that attain the algebraic maximum. All the optimizing effects and states are presented in Table \ref{table3}. We depict some of these optimal effects and states in Fig. \ref{fig:optimal-5-7-9-11} in similar fashion as we did in Fig. \ref{fig:optimal-4-6-8-10} for even polygons.

\begin{figure}[H]
\centering
\includegraphics[scale=0.13]{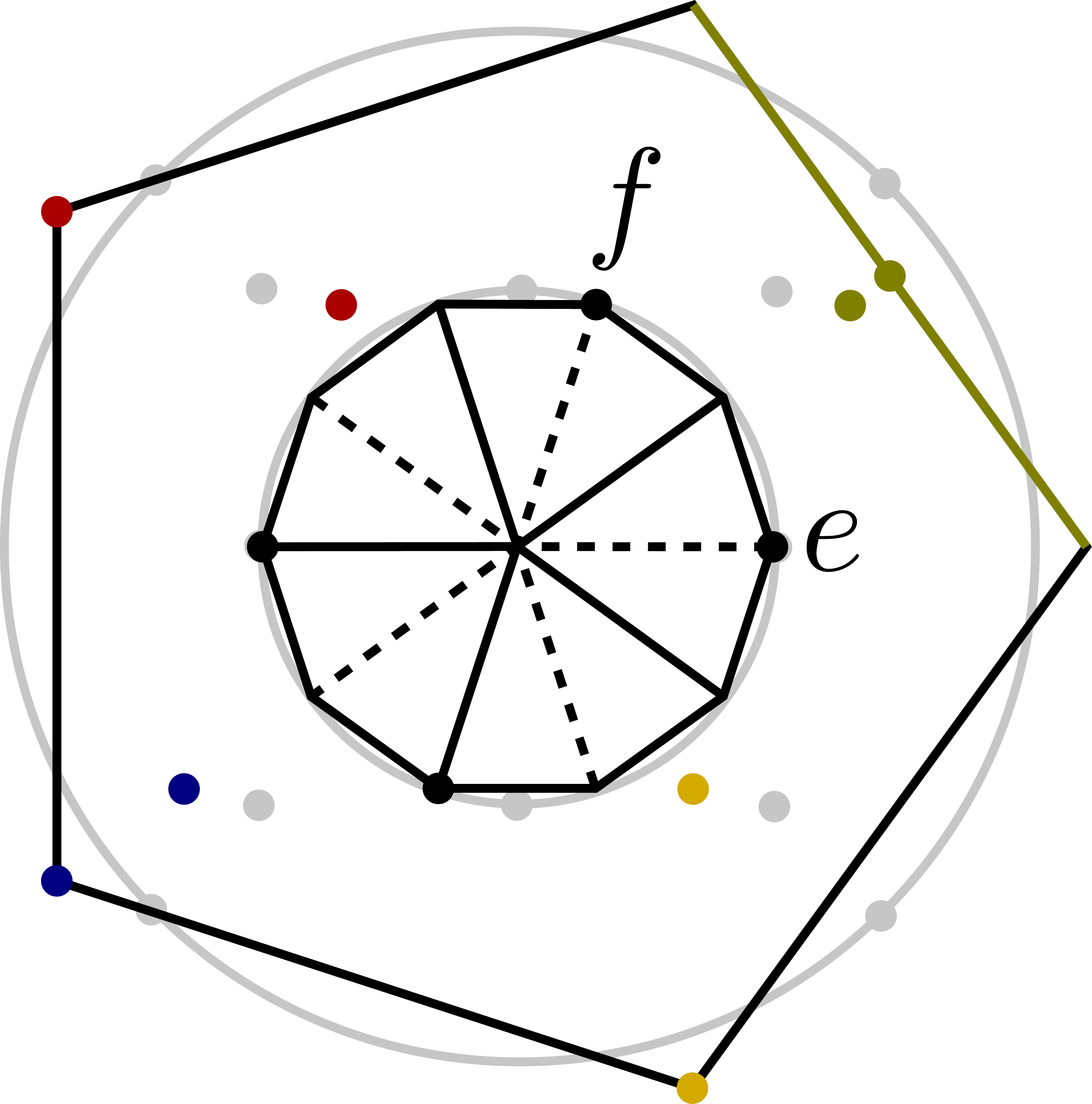} \ 
\includegraphics[scale=0.13]{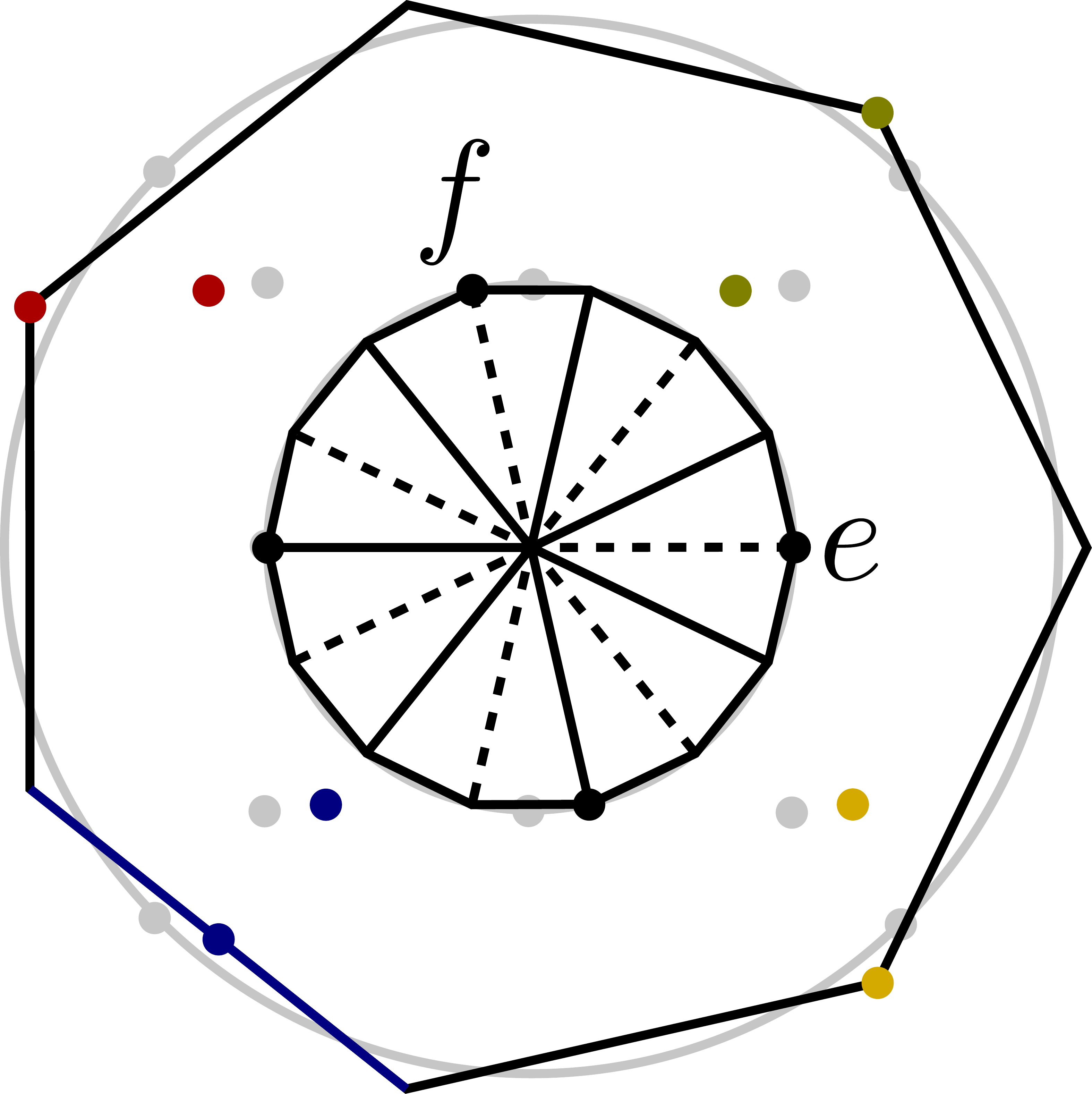} \\
\ \\
\includegraphics[scale=0.13]{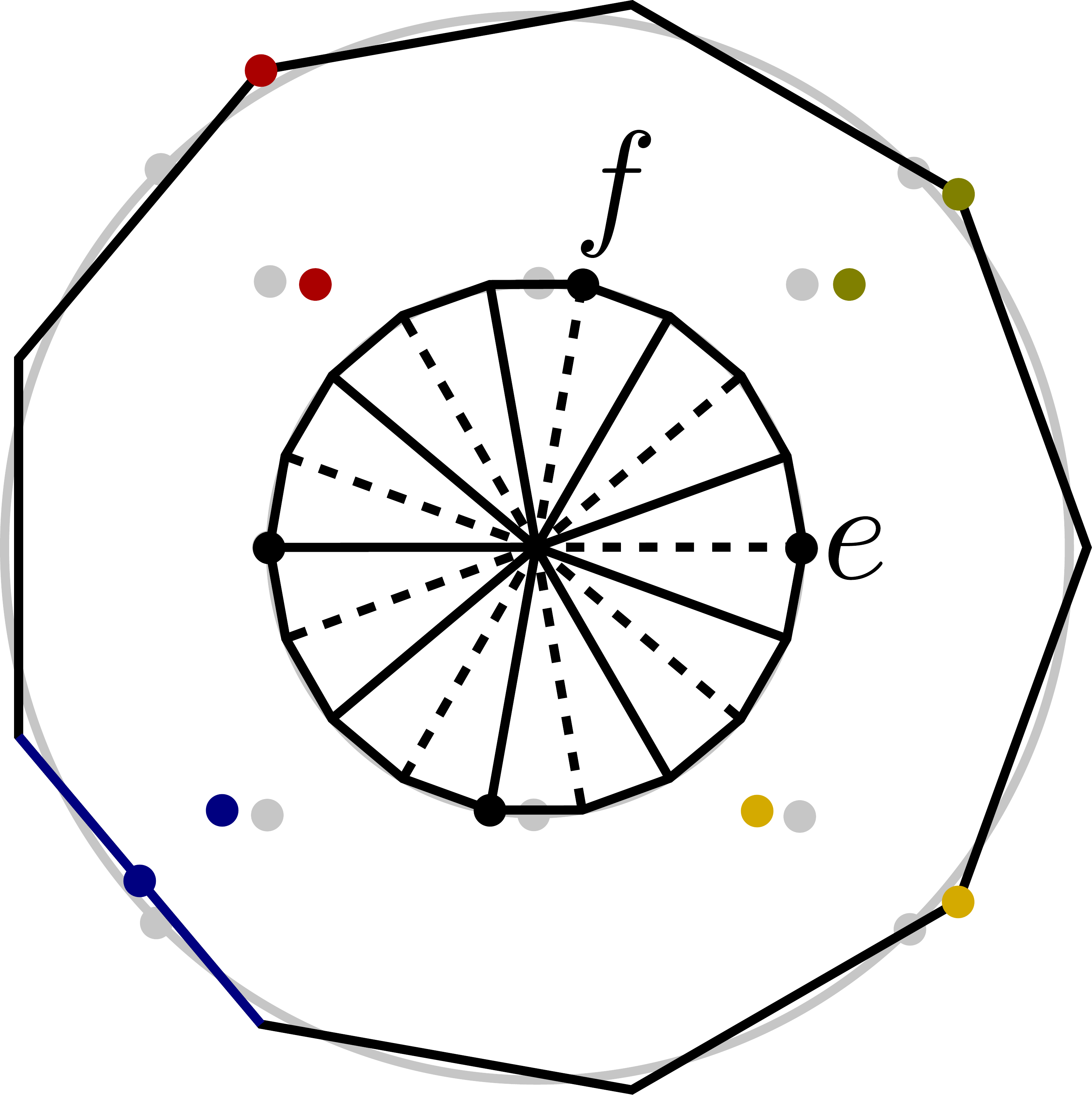} \ 
\includegraphics[scale=0.13]{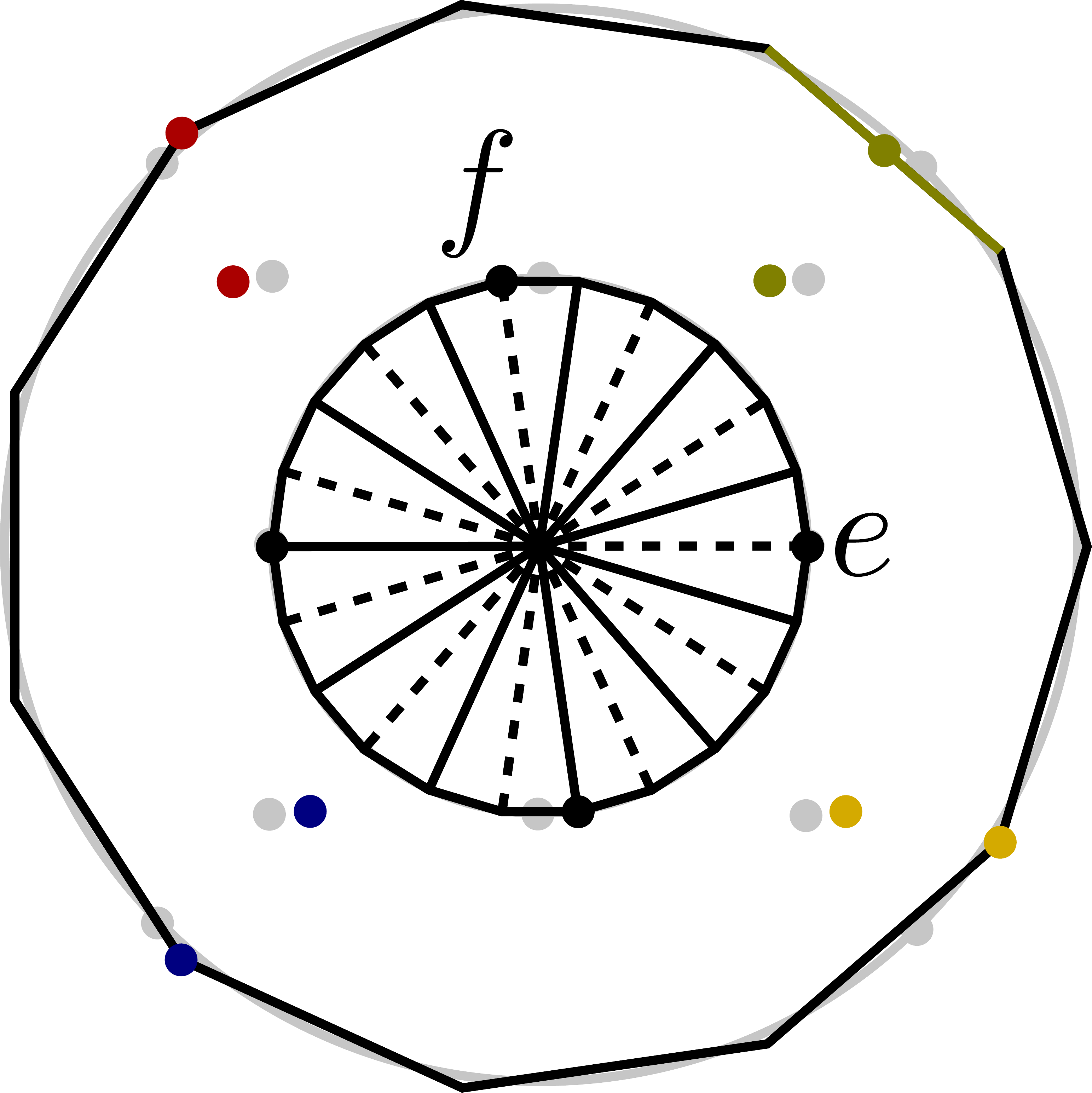} \ 
\caption{ \label{fig:optimal-5-7-9-11} Some of the maximizing effects and states for the first four odd polygons. The sums of the maximizing effects and their maximizing states are represented by the same colors as in Fig. \ref{fig:rebit}. For comparison, the optimal effects and states in rebit are denoted in gray.}
\end{figure}

\begin{table}[H]
\centering
\begin{tabular}{|c|c|c|c|c|} \hline
 $n$   & \multicolumn{2}{|c|}{$4m+1$} & \multicolumn{2}{|c|}{$4m+3$} \\ \hline
  $m$ & odd & even & odd & even  \\ \hline
 $e$   & $g_1$  & $g_1$ & $g_1$ & $g_1$ \\ \hline
 $f$ & $g_{m+1}$ & $g_{m+1}$ & $g_{m+2}$ & $g_{m+2}$  \\ \hline
 $t_1$   & $s_{\frac{m\pm 1}{2}+1}$ & $s_{\frac{m}{2}+1}$ & $s_{\frac{m+1}{2}+1}$ & $s_{\frac{m+1\pm1}{2}+1}$\\ \hline
 $t_2$  & $s_{\frac{3m+1}{2}+1}$ & $s_{\frac{3m}{2}+1}$ & $s_{\frac{3m+1}{2}+2}$ & $s_{\frac{3m}{2}+2}$\\ \hline
 $t_3$  & $s_{\frac{5m+ 1}{2}+1}$ & $s_{\frac{5m+1\pm 1}{2}+1}$ & $s_{\frac{5m\pm 1}{2}+3}$ & $s_{\frac{5m}{2}+3}$\\ \hline
 $t_4$  & $s_{\frac{7m+1}{2}+1}$ & $s_{\frac{7m}{2}+2}$ & $s_{\frac{7m+1}{2}+3}$ & $s_{\frac{7m}{2}+4}$\\ \hline
 & \multicolumn{2}{|c|}{} & \multicolumn{2}{|c|}{} \\[-0.4cm] 
 $\bar{P}_n$ & \multicolumn{2}{|c|}{%Eq. \eqref{eq:n=4m+1}
 $\frac{1}{4} \left[ 2 +   \cos\left(\frac{m\pi}{n} \right) +   r_{2n}^2 \sin\left(\frac{m\pi}{n} \right) \right]$} & \multicolumn{2}{|c|}{%Eq. \eqref{eq:n=4m+3}
 $\frac{1}{4} \left[ 2 +   \cos\left(\frac{(m+1)\pi}{n} \right) +   r_{2n}^2 \sin\left(\frac{(m+1)\pi}{n} \right) \right]$}  \\[0.1cm] \hline
\end{tabular}
\caption{\label{table3} The optimal effects and states for the random access test with two dichotomic measurements on odd polygon state spaces. For both cases $n=4m+1$ and $n=4m+3$ the optimizing states are different for odd and even $m$ and they may not be unique. }
\end{table}

%%%%%%%%%%%%
\section{Conclusions}
%%%%%%%%%%%%

We introduced a generalization of the well-known quantum random access code (QRAC) information processing tasks, namely the random access tests (RATs), in the framework of general probabilistic theories. In particular, we formulated the random access tests to study the properties of the measurements that are used to decode the information in the test. We showed that the figure of merit of these tasks, the average success probability, is linked to the decoding power of the harmonic approximate joint measurement of the used measurements. We saw that in this way the harmonic approximate joint measurement, which can be defined for any set of measurements, can be used to give upper bounds for the maximum average success probability of the RAT of the given measurements.

In quantum theory it was previously known that one has to use incompatible measurements in order to obtain a quantum advantage in QRACs over the classical case. We generalized this result to show that in order to obtain an advantage in RATs compared to the classical case either the measurements have to be incompatible or the theory itself must possess a property which we call super information storability meaning that the information storability is strictly larger than the operational dimension of the theory. In the case of quantum and point-symmetric state spaces super information storability does not hold so that in these cases the violation of the classical bound implies incompatibility of the measurements. In general, our result can be used as a semi-device independent certification of incompatibility in GPTs. We showed that maximal incompatibility of two dichotomic measurements, i.e., maximal robustness under noise in the form of mixing, links to their performance in the RAT. More precisely, we proved that two dichotomic measurements are maximally incompatible if and only if they can be used to accomplish the RAT with certainty by using a set of affinely dependent states. 

As examples of state spaces other than quantum and classical we considered the regular polygon theories in which we exhaustively examined the RATs with two dichotomic measurements (as their operational dimension is also two). The even polygons are point-symmetric so that in this case they do not have super information storability but for odd polygons we gave an explicit construction of compatible measurements that violate the classical bound, hence detecting the super information storability property of these theories. Furthermore, we solved the optimal success probabilities of these RATs in all polygons and saw that for incompatible measurements it is possible to violate the quantum bound as well, both in even and odd polygons.

%%%%%%%%%%
%%%%%%%%%%

\begin{thebibliography}{10}

\bibitem{BeWi92}
C.H. Bennett and S.J. Wiesner.
\newblock Communication via one- and two-particle operators on
  {E}instein-{P}odolsky-{R}osen states.
\newblock {\em Phys. Rev. Lett.}, 69:2881--2884, 1992.

\bibitem{LiPoShWi07}
N.~Linden, S.~Popescu, A.~J. Short, and A.~Winter.
\newblock Quantum nonlocality and beyond: Limits from nonlocal computation.
\newblock {\em Phys. Rev. Lett.}, 99:180502, 2007.

\bibitem{AlBaBrAcGiPi10}
M.~F. Almeida, J.-D. Bancal, N.~Brunner, A.~Ac\'{\i}n, N.~Gisin, and
  S.~Pironio.
\newblock Guess your neighbor's input: A multipartite nonlocal game with no
  quantum advantage.
\newblock {\em Phys. Rev. Lett.}, 104:230404, 2010.

\bibitem{Wiesner83}
S.~Wiesner.
\newblock Conjugate coding.
\newblock {\em SIGACT News}, 15:78--88, 1983.

\bibitem{AmLeMaOz09}
A.~Ambainis, D.~Leung, L.~Mancinska, and M.~Ozols.
\newblock Quantum random access codes with shared randomness.
\newblock arXiv:0810.2937 [quant-ph], 2009.

\bibitem{PaZu10}
M.~Paw\l{}owski and M.~\.{Z}ukowski.
\newblock Entanglement-assisted random access codes.
\newblock {\em Phys. Rev. A}, 81:042326, 2010.

\bibitem{AgBoMiPa18}
E.~A. Aguilar, J.~J. Borka\l{}a, P.~Mironowicz, and M.~Paw\l{}owski.
\newblock Connections between mutually unbiased bases and quantum random access
  codes.
\newblock {\em Phys. Rev. Lett.}, 121:050501, 2018.

\bibitem{AmBaChKrRa19}
A.~Ambainis, M.~Banik, A.~Chaturvedi, D.~Kravchenko, and A.~Rai.
\newblock Parity oblivious $d$-level random access codes and class of
  noncontextuality inequalities.
\newblock {\em Quantum Inf. Process.}, 18:111, 2019.

\bibitem{DoMo21}
J.F. Doriguello and A.~Montanaro.
\newblock Quantum random access codes for {B}oolean functions.
\newblock {\em Quantum}, 5:402, 2021.

\bibitem{FaKa19}
M.~Farkas and J.~Kaniewski.
\newblock Self-testing mutually unbiased bases in the prepare-and-measure
  scenario.
\newblock {\em Phys. Rev. A}, 99:032316, 2019.

\bibitem{CaHeTo20}
C.~Carmeli, T.~Heinosaari, and A.~Toigo.
\newblock Quantum random access codes and incompatibility of measurements.
\newblock {\em EPL}, 130:50001, 2020.

\bibitem{MaKi18}
K.~Matsumoto and G.~Kimura.
\newblock Information storing yields a point-asymmetry of state space in
  general probabilistic theories.
\newblock arXiv:1802.01162 [quant-ph], 2018.

\bibitem{BuHeScSt13}
P.~Busch, T.~Heinosaari, J.~Schultz, and N.~Stevens.
\newblock Comparing the degrees of incompatibility inherent in probabilistic
  physical theories.
\newblock {\em EPL}, 103:10002, 2013.

\bibitem{BaGaGhKa13}
M.~Banik, M.R. Gazi, S.~Ghosh, and G.~Kar.
\newblock Degree of complementarity determines the nonlocality in quantum
  mechanics.
\newblock {\em Phys. Rev. A}, 87:052125, 2013.

\bibitem{StBu14}
N.~Stevens and P.~Busch.
\newblock Steering, incompatibility, and {B}ell inequality violations in a
  class of probabilistic theories.
\newblock {\em Phys. Rev. A}, 89:022123, 2014.

\bibitem{Banik15}
M.~Banik.
\newblock Measurement incompatibility and
  {S}chr\"odinger-{E}instein-podolsky-rosen steering in a class of
  probabilistic theories.
\newblock {\em J. Math. Phys.}, 56:052101, 2015.

\bibitem{PoRo94}
S.~Popescu and D.~Rohrlich.
\newblock Quantum nonlocality as an axiom.
\newblock {\em Found. Phys.}, 24:379--385, 1994.

\bibitem{JaGoBaBr11}
P.~Janotta, C.~Gogolin, J.~Barrett, and N.~Brunner.
\newblock Limits on nonlocal correlations from the structure of the local state
  space.
\newblock {\em New J. Phys.}, 13:063024, 2011.

\bibitem{Lami17}
L.~Lami.
\newblock Non-classical correlations in quantum mechanics and beyond.
\newblock PhD Thesis. Universitat Aut{\`o}noma de Barcelona., 2017.

\bibitem{Plavala21}
M.~Pl\'avala.
\newblock General probabilistic theories: An introduction.
\newblock arXiv:2103.07469 [quant-ph], 2021.

\bibitem{KiNuIm10}
G.~Kimura, K.~Nuida, and H.~Imai.
\newblock Distinguishability measures and entropies for general probabilistic
  theories.
\newblock {\em Rep. Math. Phys.}, 66:175--206, 2010.

\bibitem{MaMu90a}
H.~Martens and W.M. {de Muynck}.
\newblock Nonideal quantum measurements.
\newblock {\em Found. Phys.}, 20:255--281, 1990.

\bibitem{FiHeLe18}
S.N. Filippov, T.~Heinosaari, and L.~Lepp\"aj\"arvi.
\newblock Simulability of observables in general probabilistic theories.
\newblock {\em Phys. Rev. A}, 97:062102, 2018.

\bibitem{MLQT12}
T.~Heinosaari and M.~Ziman.
\newblock {\em The {M}athematical {L}anguage of {Q}uantum {T}heory}.
\newblock Cambridge University Press, Cambridge, 2012.

\bibitem{SkLi19}
P.~Skrzypczyk and N.~Linden.
\newblock Robustness of measurement, discrimination games, and accessible
  information.
\newblock {\em Phys. Rev. Lett.}, 122:140403, 2019.

\bibitem{OsBi19}
M.~Oszmaniec and T.~Biswas.
\newblock Operational relevance of resource theories of quantum measurements.
\newblock {\em {Quantum}}, 3:133, 2019.

\bibitem{Kuramochi20b}
Y.~Kuramochi.
\newblock Compact convex structure of measurements and its applications to
  simulability, incompatibility, and convex resource theory of
  continuous-outcome measurements.
\newblock arXiv:2002.03504 [math.FA], 2020.

\bibitem{FiGuHeLe20}
S.N. Filippov, S.~Gudder, T.~Heinosaari, and L.~Lepp\"aj\"arvi.
\newblock Operational restrictions in general probabilistic theories.
\newblock {\em Found. Phys.}, 50:850, 2020.

\bibitem{Plavala16}
M.~Pl\'avala.
\newblock All measurements in a probabilistic theory are compatible if and only
  if the state space is a simplex.
\newblock {\em Phys. Rev. A}, 94:042108, 2016.

\bibitem{Kuramochi20}
Y.~Kuramochi.
\newblock Compatibility of any pair of 2-outcome measurements characterizes the
  choquet simplex.
\newblock {\em Positivity}, 24:1479–1486, 2020.

\bibitem{FiHeLe17}
S.N. Filippov, T.~Heinosaari, and L.~Lepp\"aj\"arvi.
\newblock Necessary condition for incompatibility of observables in general
  probabilistic theories.
\newblock {\em Phys. Rev. A}, 95:032127, 2017.

\bibitem{AmKrRa15}
A.~Ambainis, D.~Kravchenko, and A.~Rai.
\newblock Optimal classical random access codes using single d-level systems.
\newblock arXiv:1510.03045 [quant-ph], 2015.

\bibitem{TaHaMaBo15}
A.~Tavakoli, A.~Hameedi, B.~Marques, and M.~Bourennane.
\newblock Quantum random access codes using single $d$-level systems.
\newblock {\em Phys. Rev. Lett.}, 114:170502, 2015.

\bibitem{KlCa16}
M.~Kleinmann and A.~Cabello.
\newblock Quantum correlations are stronger than all nonsignaling correlations
  produced by $n$-outcome measurements.
\newblock {\em Phys. Rev. Lett.}, 117:150401, 2016.

\bibitem{GuBaCuAc17}
L.~Guerini, J.~Bavaresco, M.~T. Cunha, and A.~Ac\'{\i}n.
\newblock Operational framework for quantum measurement simulability.
\newblock {\em J. Math. Phys.}, 58:092102, 2017.

\bibitem{Huetal18}
X.-M. Hu, B.-H. Liu, Y.~Guo, G.-Y. Xiang, Y.-F. Huang, C.-F. Li, G.-C. Guo,
  M.~Kleinmann, T.~V\'ertesi, and A.~Cabello.
\newblock Observation of stronger-than-binary correlations with entangled
  photonic qutrits.
\newblock {\em Phys. Rev. Lett.}, 120:180402, 2018.

\bibitem{DeFaKa19}
S.~Designolle, M.~Farkas, and J.~Kaniewski.
\newblock Incompatibility robustness of quantum measurements: a unified
  framework.
\newblock {\em New J. Phys.}, 21:113053, 2019.

\bibitem{HeScToZi14}
T.~Heinosaari, J.~Schultz, A.~Toigo, and M.~Ziman.
\newblock Maximally incompatible quantum observables.
\newblock {\em Phys. Lett. A}, 378:1695--1699, 2014.

\bibitem{JePl17}
A.~Jen\ifmmode \check{c}\else \v{c}\fi{}ov\'a and M.~Pl\'avala.
\newblock Conditions on the existence of maximally incompatible two-outcome
  measurements in general probabilistic theory.
\newblock {\em Phys. Rev. A}, 96:022113, 2017.

\bibitem{HeLePl19}
T.~Heinosaari, L.~Lepp\"aj\"arvi, and M.~Pl\'avala.
\newblock No-free-information principle in general probabilistic theories.
\newblock {\em Quantum}, 3:157, 2019.

\end{thebibliography}
\end{document}